\definecolor{MyBlue}{cmyk}{1,0.13,0,0.63}
\definecolor{MyGreen}{cmyk}{0.91,0,0.88,0.52}
\newcommand{\mylinkcolor}{MyBlue}
\newcommand{\mycitecolor}{MyGreen}
\newcommand{\myurlcolor}{webbrown}
\def\@endtheorem{\endtrivlist}
\theoremstyle{plain}
\newtheorem{thm}{Theorem}[section]
\newtheorem{lem}[thm]{Lemma}
\newtheorem{prop}[thm]{Proposition}
\newtheorem{coro}[thm]{Corollary}
\theoremstyle{definition}
\newtheorem{defn}[thm]{Definition}
\newtheorem{remark}[thm]{Remark}
\newtheorem{example}[thm]{Example}
\newtheoremstyle{note}
{3pt}
{3pt}
{\bfseries}
{\parindent}
{\bfseries\itshape}
{:}
{.5em}
{}
\theoremstyle{note}
\newtheorem*{Question}{Question}
\newtheorem*{Problem}{Problem}
\newtheorem*{Note}{Note}
\renewcommand{\eqref}[1]{\labelcref{#1}}
\crefname{thm}{Theorem}{Theorems}
\crefname{lem}{Lemma}{Lemmas}
\crefname{prop}{Proposition}{Propositions}
\crefname{coro}{Corollary}{Corollaries}
\crefname{defn}{Definition}{Definitions}
\crefname{example}{Example}{Examples}
\crefname{remark}{Remark}{Remarks}
\newcommand{\refcite}[2][]{\cite[#1]{#2}}
\newcommand{\refscite}[2][]{\cite[#1]{#2}}
\newcommand{\sD}{\slashed{D}}
\renewcommand{\1}{\mathbb{I}}
\newcommand{\N}{\mathbb{N}}
\newcommand{\R}{\mathbb{R}}
\newcommand{\C}{\mathbb{C}}
\newcommand{\Z}{\mathbb{Z}}
\newcommand{\A}{\mathcal{A}}
\newcommand{\mB}{\mathcal{B}}
\newcommand{\mH}{\mathcal{H}}
\newcommand{\G}{\mathcal{G}}
\newcommand{\B}{\mathcal{B}}
\newcommand{\La}{\mathcal{L}}
\newcommand{\mO}{\mathcal{O}}
\newcommand{\mU}{\mathcal{U}}
\newcommand{\E}{\mathcal{E}}
\newcommand{\mF}{\mathcal{F}}
\newcommand{\cC}{\mathcal{C}}
\newcommand{\Ob}{{\rm Ob}}
\newcommand{\Mor}{{\rm Mor}}
\newcommand{\bundlefont}[1]{{\textnormal{\texttt{#1}}}}
\newcommand{\bS}{\bundlefont{S}}
\newcommand{\bB}{\bundlefont{B}}
\newcommand{\bE}{\bundlefont{E}}
\newcommand{\bP}{\bundlefont{P}}
\newcommand{\bQ}{\bundlefont{Q}}
\newcommand{\bG}{\bundlefont{G}}
\newcommand{\bH}{\bundlefont{H}}
\newcommand{\bU}{\bundlefont{U}}
\newcommand{\bL}{\bundlefont{L}}
\newcommand{\g}{\mathfrak{g}}
\newcommand{\lu}{\mathfrak{u}}
\DeclareMathOperator{\Tr}{Tr}
\DeclareMathOperator{\tr}{tr}
\DeclareMathOperator{\Ad}{Ad}
\DeclareMathOperator{\ad}{ad}
\DeclareMathOperator{\Dom}{Dom}
\DeclareMathOperator{\Id}{Id}
\DeclareMathOperator{\id}{id}
\DeclareMathOperator{\Diff}{Diff}
\DeclareMathOperator{\End}{End}
\DeclareMathOperator{\Hom}{Hom}
\DeclareMathOperator{\cc}{c.c.}
\renewcommand{\bar}[1]{\overline{#1}}
\newcommand{\Sub}[1]{_{\scriptscriptstyle#1}}
\newcommand{\til}[1]{\widetilde{#1}}
\newcommand{\hot}{\hat\otimes}
\newcommand{\op}{{\rm op}}
\newcommand{\YM}{{\rm YM}}
\newcommand{\ED}{{\rm ED}}
\newcommand{\mattwo}[4]{
  \left(\!\!\!\begin{array}{c@{~}c}#1&#2\\#3&#4\\\end{array}\!\!\!\right)
}
\newcommand{\mat}[2]{\left(\!\!\begin{array}{#1}#2\end{array}\!\!\right)}
\newcommand{\matfour}[4]{
  \mat{c@{~}c@{~}c@{~}c}{#1\\#2\\#3\\#4\\}
}
\title{On globally non-trivial almost-commutative manifolds}
\author{Jord Boeijink$^1$\footnote{Electronic mail: \texttt{j.boeijink@math.ru.nl}}$\phantom{x}$and Koen van den Dungen$^2$\footnote{Electronic mail: \texttt{koen.vandendungen@anu.edu.au}}\\[4mm]
{\normalsize ${}^1$Institute for Mathematics, Astrophysics and Particle Physics}\\
{\normalsize Radboud University Nijmegen}\\
{\normalsize Heyendaalseweg 135, 6525 AJ Nijmegen, The Netherlands}\\[2mm]
{\normalsize ${}^2$Mathematical Sciences Institute}\\ 
{\normalsize Australian National University}\\
{\normalsize Canberra, ACT 0200, Australia}\\[2mm]
{\normalsize ${}^2$School of Mathematics and Applied Statistics}\\ 
{\normalsize University of Wollongong}\\
{\normalsize Wollongong, NSW 2522, Australia}\\
}
\begin{document}

\maketitle

\begin{abstract}
\noindent
Within the framework of Connes' noncommutative geometry, we define and study globally non-trivial (or topologically non-trivial) almost-commutative manifolds. In particular, we focus on those almost-commutative manifolds that lead to a description of a (classical) gauge theory on the underlying base manifold. Such an almost-commutative manifold is described in terms of a `principal module', which we build from a principal fibre bundle and a finite spectral triple. We also define the purely algebraic notion of `gauge modules', and show that this yields a proper subclass of the principal modules. We describe how a principal module leads to the description of a gauge theory, and we provide two basic yet illustrative examples.

\vspace{\baselineskip}\noindent
\emph{Keywords}: Connes' noncommutative geometry; almost-commutative manifolds; gauge theory.

\vspace{\baselineskip}\noindent
\emph{Mathematics Subject Classification 2010}: 58B34, 70S15. 
\end{abstract}

\tableofcontents

\section{Introduction}

The framework of Connes' noncommutative geometry \cite{Connes94} provides a generalisation of ordinary Riemannian spin manifolds to noncommutative manifolds. Within this framework, the special case of a (globally trivial) almost-commutative manifold has been shown to describe a (classical) gauge theory over a Riemannian spin manifold, which ultimately led to a description of the full Standard Model of high energy physics, including the Higgs mechanism and neutrino mixing \cite{CCM07}. 

The gauge theories mentioned above are, by construction, topologically trivial (in the sense that the corresponding principal bundles are globally trivial bundles). The aim of this paper is to adapt the framework in order to allow for globally non-trivial gauge theories as well. Such a generalisation has previously been obtained for the special case of Yang-Mills theory \cite{BvS11}.

Let us briefly recall how a description of a gauge theory is obtained from an almost-commutative manifold in the globally trivial case (for a more detailed introduction we refer to e.g.\ \refcite{vdDvS12}). We start with a smooth compact $4$-dimensional Riemannian spin manifold $M$, which can be described in terms of a (real, even) spectral triple $(C^\infty(M), L^2(\bS), \sD, \gamma_5, J_M)$, where $\sD$ is the Dirac operator on the spinor bundle $\bS\to M$, $\gamma_5$ is the grading operator and $J_M$ is charge conjugation \cite{Con13}. If we take a real even finite spectral triple $(A_F, \mH_F, D_F, \gamma_F, J_F)$, one can consider the product triple
\begin{align}
\label{eq:product_triple}
M\times F := \big( C^\infty(M,A_F) , L^2(\bS)\otimes\mH_F , \sD\otimes\1 + \gamma_5\otimes D_F , \gamma_5\otimes\gamma_F , J_M\otimes J_F \big) .
\end{align}
For a real spectral triple $T = (\A, \mH, D, J)$, we define its gauge group as
\begin{align}
\label{eq:gauge_group}
\G(T) := \big\{ uJuJ^* \mid u\in\mU(\A) \big\} \simeq \mU(\A) / \mU(\A_J) ,
\end{align}
where $\A_J$ is the central subalgebra of $\A$ consisting of all elements $a\in\A$ for which $aJ=Ja^*$. 
Now suppose we have a real even finite spectral triple $F = (A_F, \mH_F, D_F, \gamma_F, J_F)$ with gauge group $G_F = \G(F)$. Then the product triple $M\times F$ defined above has gauge group $\G(M\times F) \simeq C^\infty(M,G_F)$ (at least when $M$ is simply connected), which coincides with the `classical' notion of the gauge group of the (globally trivial) principal $G_F$-bundle $\bP = M\times G_F$. (The isomorphism $\G(M\times F) \simeq C^\infty(M,G_F)$, stated in \refcite[Proposition 4.3]{BvS11} and \refcite[\S2.4.3]{vdDvS12}, is only valid under some additional conditions, and simply-connectedness of $M$ is always sufficient. We shall prove this in general for the globally non-trivial case in \cref{coro:gaugegroupsisomorphic}.)

One can show that the inner fluctuations of the operator $\sD\otimes\1 + \gamma_5\otimes D_F$ yield gauge fields (i.e.\ connection forms on the principal bundle $\bP$) as well as scalar fields (which are interpreted as Higgs fields in the noncommutative Standard Model). Finally, the spectral action principle \cite{CC97} yields a (gauge-invariant) Lagrangian from the data of the triple $M\times F$. 

This paper is organised as follows. We start in \cref{sec:prelim} by gathering some preliminary material. \cref{sec:bundles,sec:pfb,sec:conjugate} contain a brief introduction into (principal) fibre bundles and modules. In \cref{sec:covering} we describe a sufficient condition for when sections of a quotient group bundle can be lifted. Finally, we recall the basics of spectral triples and unbounded Kasparov modules in \cref{sec:spectral}. The reader who is familiar with these topics may wish to skip these preliminaries on a first reading.

In \cref{sec:acm} we describe the generalisation of the product triples $M\times F$ to (in general globally non-trivial) almost-commutative manifolds. We show that these almost-commutative manifolds are naturally given by the internal Kasparov product of an \emph{internal space} $I$ (replacing the finite spectral triple $F$) with the underlying manifold $M$. 

While every globally trivial almost-commutative manifold describes a gauge theory, this no longer holds for arbitrary globally non-trivial almost-commutative manifolds. 
In \cref{sec:gauge_modules} we therefore focus our attention on those internal spaces that will allow us to obtain a gauge theory. After briefly recalling the classification of finite spectral triples, we define the notion of a \emph{principal module}, which is an internal space built from a finite spectral triple $F$ and a principal fibre bundle $\bP$ over $M$. 
We show that the algebraic definition of the gauge group of a principal module (defined similarly to \cref{eq:gauge_group}) coincides precisely with the usual definition of the gauge group of $\bP$ (i.e.\ the vertical automorphisms of $\bP$), provided that the underlying manifold $M$ is simply connected. 

One of the main ideas in the development of noncommutative geometry has been the translation of geometric data into (operator-)algebraic data. Whereas principal modules are constructed from geometric objects (namely principal fibre bundles), we devote \cref{sct:gaugemodules} to the purely \emph{algebraic} notion of what we call a \emph{gauge module}.  We prove that these gauge modules form a proper subclass of the principal modules, which are characterised by a lift of $\bP$ to a principal $\mU(A_F)$-bundle (where $A_F$ is the algebra of the finite spectral triple $F$). 

By equipping a principal module with a connection and a `mass matrix', we construct the corresponding \emph{principal} almost-commutative manifold in \cref{sec:gauge_theory}. The remainder of this section is used to establish the main goal of this paper; namely, we describe in detail how this principal almost-commutative manifold describes a gauge theory on $M$. 

In \cref{sec:examples} we provide two basic but illustrative examples of such gauge theories, namely Yang-Mills theory and electrodynamics. The Yang-Mills example in particular shows that not every principal module is a gauge module. However, we also show that the Yang-Mills example \emph{is} a gauge module when the underlying manifold is simply-connected and $4$-dimensional. Hence on such manifolds we have no example of a principal module which is not a gauge module. 

We finish with an Outlook on possible future work. 

\subsection*{Notation}

All $C^*$-algebras and Hilbert modules will be denoted with capital letters (e.g.\ $A,B,E\ldots$), their smooth sub-algebras or pre-$C^*$-algebras (i.e.\ densely defined $*$-sub-algebras that are closed under the holomorphic functional calculus) and Hilbert pre-modules will be denoted with curly letters (e.g.\ $\A,\B,\E,\ldots$). The main exception to these conventions is the notation $\mathcal{H}$, which always denotes a complex Hilbert space. 
By $M$ we denote a smooth connected compact Riemannian spin manifold. Bundles over $M$ will be denoted with `typewriter font', where we use $\bB$ for algebra bundles, $\bE$ for vector bundles, $\bP$ for principal fibre bundles, $\bG$ for group bundles, and $\bS$ for the spinor bundle. Continuous (resp.\ smooth) sections of a bundle $\bE\to M$ will be denoted by $\Gamma(\bE)$ (resp.\ $\Gamma^\infty(\bE)$).

\section{Preliminaries}
\label{sec:prelim}

\subsection{Fibre bundles}
\label{sec:bundles}
The definitions concerning fibre bundles in this paper may differ from the definitions in some other literature, including \refcite{BvS11}, so that we find it necessary to include a list of the definitions we use. All manifolds are assumed to be \emph{smooth} and all maps between them are also assumed to be smooth. 

Let $\bE\to M$ be a smooth fibre bundle (see e.g.\ \refcite{Kobayashi-Nomizu63}). A \emph{local trivialisation} of $\bE$ is denoted by $(U,h\Sub{U})$, where $U$ is an open neighbourhood in $M$ and $h\Sub{U}\colon\pi^{-1}(U)\to U\times F$ is a diffeomorphism such that $\text{pr}_1 \circ h_U = \pi$. 
For two local trivialisations $(U,h\Sub{U})$ and $(V, h\Sub{V})$ for which $U\cap V\neq\emptyset$, we denote the corresponding \emph{transition function} by $g\Sub{VU} := h\Sub{V} \circ h\Sub{U}^{-1} \in C^\infty(U\cap V,\Diff(F))$.
\begin{defn}
\label{defn:cbundles}
Let $\cC$ be some subcategory of the category of smooth manifolds, with objects $\Ob_\cC$ and morphisms $\Mor_\cC(A,B)$ for all objects $A,B\in\Ob_\cC$. Let $M$ be a smooth manifold. A fibre bundle $\pi\colon\bE\to M$ with fibre $F$ is called a \emph{$\cC$-bundle} if $F\in\Ob_\cC$ and if on each local trivialisation $(U,h\Sub{U})$ the map $h\Sub{U}|_{\pi^{-1}(x)}\colon\pi^{-1}(x) \to x\times F$ is an isomorphism in $\Mor_\cC(\pi^{-1}(x),F)$.

Let $\pi_1\colon\bE_1\to M$ and $\pi_2\colon\bE_2\to M$ be fibre bundles. A \emph{bundle morphism} $\phi\colon\bE_1\to\bE_2$ is a smooth map such that $\pi_2\circ\phi=\pi_1$. If $\bE_1$ and $\bE_2$ are $\cC$-bundles, then $\phi$ is called a \emph{$\cC$-bundle morphism} if $\phi|_{\pi_1^{-1}(x)}\colon\pi_1^{-1}(x)\to\pi_2^{-1}(x)$ is an element of $\Mor_\cC(\pi_1^{-1}(x),\pi_2^{-1}(x))$ for each $x\in M$. 

Let $\pi\colon\bE\to M$ be a $\cC$-bundle with fibre $F$. A fibre subbundle $\pi'\colon\bE'\to M$ with fibre $F'$ is a $\cC$-subbundle if $F'\in\Ob_\cC$ and there exist local trivialisations $\{(U,h\Sub{U})\}$ for $\bE$ such that $h\Sub{U}(\bE'|\Sub{U}) \simeq U \times \iota(F')$, where $\iota$ is an injective morphism in $\Mor_\cC(F',F)$.

If $\cC$ is the category of \emph{finite-dimensional} vector spaces, \emph{finite-dimensional} ($*$-)algebras, or Lie groups, then $\cC$-bundles are referred to as vector bundles, ($*$-)algebra bundles, or group bundles (respectively). 
\end{defn}

\begin{remark}
\label{rem:algebrabundles}
 Note that according to \cref{defn:cbundles} a ($*$-)algebra bundle is always locally trivial, in contrast with the definition of ($*$-)algebra bundle in \refcite{BvS11} (where the bundle is only assumed to be locally trivial as a \emph{vector} bundle). 
 The weaker notion given in \refcite{BvS11} will here be referred to as \emph{weak ($*$-)algebra bundle}, following terminology of \refcite{Cac12}. 
\end{remark}

The space of smooth sections $\Gamma^\infty(\bE)$ of a vector bundle $\bE$ is a finitely generated projective $C^\infty(M)$-module, with pointwise addition and multiplication by $C^\infty(M)$. 
If $\phi\colon \bE_1 \rightarrow \bE_2$ is a vector bundle morphism, then
\begin{align*}
 \phi_*\colon  \Gamma^\infty(\bE_1) \rightarrow \Gamma^\infty(E_2), \quad  (\phi_*s)(x) = \phi(s(x)) 
\end{align*}
is a $C^\infty(M)$-module morphism. By the Serre-Swan theorem \cite{Swan62}, the assignment $\bE \mapsto \Gamma^\infty(\bE)$ on objects and the assigment $\phi \mapsto \phi_*$ on morphisms determines an equivalence between the category of smooth vector bundles over $M$ and the category of finitely generated projective modules over $C^\infty(M)$. 

Similarly, for a group bundle $\bG$, the sections $\Gamma^\infty(\bG)$ form a group with fibre-wise multiplication and inverse. 
\begin{example}[Unitary group bundle]
If $\bB$ is a unital $*$-algebra bundle, we define the \emph{unitary group bundle} of $\bB$ as
$$\mU(\bB) := \{b\in\bB \mid bb^*=b^*b=1\}.$$
Then $\mU(\bB)$ is a fibre subbundle of $\bB$, which forms a group bundle with group multiplication of $\mU(\bB)_x=\mU(\bB_x)$ inherited from the algebra multiplication of $\bB_x$, and group inverse given by the involution $*$. The sections of the unitary group bundle are equal to the unitary sections of the algebra bundle: $\Gamma^\infty(\mU(\bB)) = \mU(\Gamma^\infty(\bB))$.
\end{example}

\begin{example}[Endomorphism bundle]
Let $\pi_\bE\colon\bE\to M$ be a (hermitian) vector bundle with fibre $V$ and local trivialisations $(U,h\Sub{U}^\bE)$. Then the bundle of endomorphisms $\End(\bE)$ is a unital ($*$-)algebra bundle over $M$ with fibre $\End(V)$, and its local trivialisations $(U,h\Sub{U}^{\End(\bE)})$ are induced from $(U,h\Sub{U}^\bE)$. 
\end{example}

\begin{thm}[{\refcite[Theorem 3.8]{BvS11}}]
\label{thm:serreswanalgebrabundles}
Let $M$ be a compact manifold. There is an equivalence between the category of (unital) weak ($*$-)algebra bundles over $M$ and the category of (unital) (involutive) $C^\infty(M)$-module algebras that are finitely generated projective as $C^\infty(M)$-modules.
\end{thm}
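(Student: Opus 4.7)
The plan is to upgrade the classical Serre--Swan correspondence between smooth vector bundles over $M$ and finitely generated projective $C^\infty(M)$-modules by keeping track of the additional algebraic structure on each side. The functor from weak $*$-algebra bundles to $C^\infty(M)$-module algebras sends $\bB\mapsto\Gamma^\infty(\bB)$, where the product and involution are defined fibrewise: $(s\cdot t)(x):=s(x)t(x)$ and $s^*(x):=s(x)^*$. Smoothness of these operations on sections follows from the fact that $\bB$ is locally trivial as a vector bundle together with the smooth (in fact $C^\infty(M)$-bilinear) fibrewise multiplication implicit in the definition of a weak algebra bundle. Associativity, the unit, and the involution axioms are verified pointwise. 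The classical Serre--Swan theorem, together with compactness of $M$, implies that $\Gamma^\infty(\bB)$ is finitely generated projective over $C^\infty(M)$. On morphisms, a weak bundle morphism $\phi\colon\bB_1\to\bB_2$ induces the module algebra morphism $\phi_*$ by \cref{defn:cbundles}.

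For the inverse functor, given an involutive $C^\infty(M)$-module algebra $\A$ that is finitely generated projective over $C^\infty(M)$, the classical Serre--Swan theorem produces a smooth vector bundle $\bB\to M$ together with a $C^\infty(M)$-linear isomorphism $\A\simeq\Gamma^\infty(\bB)$. The multiplication $m\colon\A\times\A\to\A$ is $C^\infty(M)$-bilinear, so it factors through $\A\otimes_{C^\infty(M)}\A$. Using the natural isomorphism
\begin{align*}
\Gamma^\infty(\bB)\otimes_{C^\infty(M)}\Gamma^\infty(\bB)\simeq\Gamma^\infty(\bB\otimes\bB),
\end{align*}
which holds because $\Gamma^\infty(\bB)$ is finitely generated projective, the multiplication transports to a vector bundle morphism $\til m\colon\bB\otimes\bB\to\bB$. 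Evaluating $\til m$ at each point $x\in M$ endows the fibre $\bB_x$ with an associative (unital, $*$-)algebra structure, and the algebra axioms for $\A$ translate into the corresponding pointwise axioms. Hence $\bB$ acquires the structure of a weak ($*$-)algebra bundle in the sense of \cref{rem:algebrabundles}, and the involution on $\A$ similarly produces a fibrewise involution via its $C^\infty(M)$-antilinearity. A morphism of $C^\infty(M)$-module algebras corresponds, under Serre--Swan, to a vector bundle morphism whose fibrewise restrictions respect the algebra structure, hence to a weak ($*$-)algebra bundle morphism.

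The two functors are mutually quasi-inverse: this is immediate from the classical Serre--Swan equivalence on the underlying vector bundle and module data, since the additional algebraic structures on both sides are constructed to correspond to each other under the identification.

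The main obstacle is the tensor-product identification $\Gamma^\infty(\bB\otimes\bB)\simeq\Gamma^\infty(\bB)\otimes_{C^\infty(M)}\Gamma^\infty(\bB)$, which is essential for converting the abstract module multiplication into a fibrewise product and which relies on finite generation and projectivity over $C^\infty(M)$. A second subtlety is precisely why the output is only a \emph{weak} algebra bundle rather than an algebra bundle in the stronger sense of \cref{defn:cbundles}: the fibres $\bB_x$ inherit an algebra structure pointwise, but no local triviality as algebras is produced, since neighbouring fibres need not be algebraically isomorphic. This is exactly the reason for introducing the weak notion in \cref{rem:algebrabundles}.
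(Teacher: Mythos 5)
The paper states this theorem purely by citation to Theorem 3.8 of \refcite{BvS11} and does not supply its own proof, so there is no in-text argument to compare yours against. Your proposal is the standard and correct approach: apply classical Serre--Swan to the underlying vector bundle/module, and transport the $C^\infty(M)$-bilinear multiplication (and the involution, viewed as a $C^\infty(M)$-linear map to the conjugate module) across the equivalence using the isomorphism $\Gamma^\infty(\bB)\otimes_{C^\infty(M)}\Gamma^\infty(\bB)\simeq\Gamma^\infty(\bB\otimes\bB)$, valid by finite generation and projectivity; your closing remark that the output can only be a \emph{weak} algebra bundle (since nothing forces neighbouring fibre algebras to be isomorphic, so local triviality as an algebra bundle is not produced) is exactly the right observation. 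One small imprecision: you cite \cref{defn:cbundles} for the morphism correspondence, but a weak algebra bundle is not a $\cC$-bundle in that sense; the induced map $\phi_*$ comes from ordinary vector-bundle Serre--Swan functoriality, with fibrewise algebra-homomorphism compatibility checked separately, as you in fact do later.
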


We again emphasise the difference between algebra bundles and weak algebra bundles as mentioned in \cref{rem:algebrabundles}. We are grateful to Eli Hawkins who pointed out to us that a weak algebra bundle is locally trivial if and only if there exists a connection $\nabla$ satisying the Leibniz rule
\begin{align*}
\nabla(ab) = (\nabla a)b + a(\nabla b).
\end{align*}
In the continuous case, however, it remains unclear what algebraic conditions one needs to impose on a $C(M)$-module algebra $B = \Gamma(\bB)$, where $\bB$ is a (continuous) weak algebra bundle, to ensure that the weak algebra bundle $\bB$ is in fact locally trivial. 

\subsection{Principal fibre bundles and (classical) gauge theories}
\label{sec:pfb}
In this section, we briefly recall the definition of a principal fibre bundle, and some basic results. We refer to \refcite[Chapter I]{Kobayashi-Nomizu63} and \refcite{Ble81} for more details. 

\begin{defn}
\label{defn:pfb}
A \emph{principal fibre bundle} $\bP$ over $M$ with \emph{structure group} $G$ (or a \emph{principal $G$-bundle} for short) consists of a fibre bundle $\bP\xrightarrow{\pi}M$ equipped with a smooth right action of $G$ that acts freely and transitively on the fibres, such that for a local trivialisation $(U,h\Sub{U})$ of $\bP$, the map $h\Sub{U}$ intertwines the right action of $G$ on $\bP|_U$ with the natural right action of $G$ on $U\times G$. 
\end{defn}

One can construct a principal $G$-bundle $\bP$ as soon as one knows its ($G$-valued) transition functions.

\begin{thm}[Reconstruction theorem, {\refcite[Chapter I, Proposition 5.2.]{Kobayashi-Nomizu63}}]
\label{thm:reconstruction}
Let $M$ be a compact manifold, $G$ a Lie-group, and $\{U_i\}_{i\in I}$ an open covering of $M$. Suppose that for each $i,j\in I$ with $U_i \cap U_j \neq \emptyset$, there is a smooth map $g_{ij}\colon U_i \cap U_j \rightarrow G$ such that $g_{ij}(x) g_{jk}(x) g_{ki}(x) = e$ for all $x\in U_i \cap U_j \cap U_k$. Then there exists a unique principal $G$-bundle $\bP$ over $M$ with the $\{U_i\}$ as trivialising neighbourhoods and the $g_{ij}$ as transition functions.
\end{thm}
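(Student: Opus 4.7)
The plan is the standard gluing construction: build $\bP$ by stitching together the local charts $U_i\times G$ along the prescribed transition data, then verify that the result has all the required structure and is unique up to isomorphism.

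First, I would record the immediate algebraic consequences of the cocycle identity $g_{ij}g_{jk}g_{ki}=e$. Taking $i=j=k$ shows $g_{ii}(x)=e$, and taking $k=i$ together with $g_{ii}=e$ yields $g_{ij}(x)g_{ji}(x)=e$, i.e.\ $g_{ij}=g_{ji}^{-1}$. Then I would form the disjoint union $\widetilde{\bP}:=\bigsqcup_{i\in I}U_i\times G$ and define a relation $\sim$ on it by declaring $(x,g)\in U_i\times G$ equivalent to $(y,h)\in U_j\times G$ iff $x=y\in U_i\cap U_j$ and $h=g_{ji}(x)\,g$. The three properties just noted (reflexivity from $g_{ii}=e$, symmetry from $g_{ij}g_{ji}=e$, transitivity from the full cocycle identity) show $\sim$ is an equivalence relation.

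Next, I would put on $\bP:=\widetilde{\bP}/{\sim}$ the quotient topology and define $\pi\colon\bP\to M$ by $\pi([x,g]):=x$, which is well defined and continuous. The canonical maps $\iota_i\colon U_i\times G\to\bP$ are injective, their images cover $\bP$, and the composition $h\Sub{U_i}:=\iota_i^{-1}\colon\pi^{-1}(U_i)\to U_i\times G$ provides the candidate local trivialisation. Using these charts, I would give $\bP$ its smooth structure: the transition map $h\Sub{U_j}\circ h\Sub{U_i}^{-1}\colon(U_i\cap U_j)\times G\to(U_i\cap U_j)\times G$ is $(x,g)\mapsto(x,g_{ji}(x)g)$, which is a diffeomorphism because $g_{ji}$ is smooth and group multiplication is smooth. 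Hausdorffness follows because $M$ is Hausdorff and distinct points in the same fibre live in a single chart $U_i\times G$, which is Hausdorff. Finally I would define a right $G$-action on $\bP$ by $[x,g]\cdot h:=[x,gh]$, well defined since the equivalence relation uses only \emph{left} multiplication by $g_{ji}(x)$; this action is smooth in each chart, free, and transitive on fibres, and it is intertwined by each $h\Sub{U_i}$ with the natural right $G$-action on $U_i\times G$, so $\bP$ satisfies \cref{defn:pfb}.

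For uniqueness, suppose $\bP'$ is another principal $G$-bundle with trivialising neighbourhoods $\{U_i\}$ and the same transition functions. Writing $h'\Sub{U_i}\colon\pi'^{-1}(U_i)\to U_i\times G$ for its trivialisations, I would define $\Phi\colon\bP\to\bP'$ locally by $\Phi|_{\pi^{-1}(U_i)}:=(h'\Sub{U_i})^{-1}\circ h\Sub{U_i}$; the equality of transition functions guarantees that these local definitions agree on overlaps, so they patch to a smooth, $G$-equivariant bundle isomorphism. The main technical point to be careful about is showing that the quotient $\bP$ is Hausdorff and that the smooth atlas obtained from the $h\Sub{U_i}$ is genuinely compatible, i.e.\ that the transition maps above really are smooth diffeomorphisms of $(U_i\cap U_j)\times G$; everything else is a routine check of well-definedness and equivariance.
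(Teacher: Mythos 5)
The paper gives no proof of its own for this statement, merely citing Kobayashi--Nomizu, Chapter~I, Proposition~5.2; your argument is the standard gluing construction that appears there, so in substance the approaches coincide.

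There is, however, one genuine slip in your opening algebraic step. Setting $i=j=k$ in the hypothesis $g_{ij}g_{jk}g_{ki}=e$ gives only $g_{ii}(x)^3=e$, not $g_{ii}(x)=e$; when $G$ has elements of order $3$ these are different. Concretely, with $G=\Z/3\Z$, a generator $a$, and $g_{ij}\equiv a$ for \emph{all} pairs $i,j$, one has $g_{ij}g_{jk}g_{ki}=a^3=e$ on every triple intersection, yet $g_{ii}=a\neq e$, so reflexivity of the relation $\sim$ that you define fails and the quotient is not what you intend. The source of the trouble is that the hypothesis as stated in the paper is a weakening of the one actually used by Kobayashi--Nomizu, who assume $\psi_{\gamma\alpha}=\psi_{\gamma\beta}\psi_{\beta\alpha}$ --- which, upon setting $\alpha=\beta=\gamma$, forces $\psi_{\alpha\alpha}^2=\psi_{\alpha\alpha}$ and hence $\psi_{\alpha\alpha}=e$. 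The fix is harmless: either assume additionally that $g_{ii}=e$ (as any system of transition functions arising from a bundle in fact satisfies), or adopt the Kobayashi--Nomizu form of the cocycle condition. Once that is in place, your derivation of $g_{ij}=g_{ji}^{-1}$, the verification that $\sim$ is an equivalence relation, the construction of the smooth structure via the charts $h_{U_i}$, the Hausdorffness check, the right $G$-action, and the uniqueness argument by patching local isomorphisms all go through exactly as you describe.
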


\begin{defn}
\label{defn:connection}
Let $\{(U_i,h_i)\}$ be a set of local trivialisations of $P$ such that $\cup_i U_i = M$. A connection $\omega$ on $\bP$ is a set of local $\mathfrak{g}$-valued $1$-forms $\omega_i \in \Omega^1(U_i,\g)$
such that 
\begin{align}
 \omega_j = g_{ij}^{-1} dg_{ij} + g_{ij}^{-1} \omega_i g_{ij}
\end{align}
for $i,j$ such that $U_i\cap U_j\neq\emptyset$.
\end{defn}

 \begin{defn}
Given an action $\rho$ of $G$ on a smooth manifold $F$, we define the \emph{associated bundle} $\bP\times_\rho F$ (or $\bP\times_GF$) as the quotient of the product manifold $\bP\times F$ with respect to the equivalence relation given by $(pg,f) \sim (p,\rho(g)f)$. If $F\in\Ob_\cC$ and $\rho(g)\in\Mor_\cC(F,F)$ for all $g\in G$, then $\bP\times_\rho F$ is a $\cC$-bundle. 
\end{defn}

\begin{example}
The \emph{adjoint bundle} $\Ad\bP$ is defined as the associated bundle $\bP\times_{\Ad}G$ with respect to the adjoint action $\Ad(g)h := ghg^{-1}$, $(g,h \in G)$. The adjoint bundle is a group bundle with fibres isomorphic to $G$, and its sections $\Gamma^\infty(\Ad\bP)$ then form a group with fibre-wise multiplication. 
\end{example}

\begin{defn}
\label{defn:gaugegroup}
A \emph{gauge transformation} of a principle $G$-bundle $\bP$ is a principal bundle automorphism of $\bP$ over $\text{id}:M\rightarrow M$, that is, a smooth invertible map $\phi\colon \bP\rightarrow \bP$ such that $\pi(\phi(p)) = \pi(p)$ and $\phi(pg) = \phi(p)g$ for all $p\in \bP$ and $g\in G$. The set of all such $\phi$ is called the \emph{gauge group} $\G(\bP)$ of $\bP$, where the group multiplication is given by composition. 
\end{defn}

\begin{thm}[see e.g.\ {\refcite[Ch.\ 3]{Ble81}}]
\label{thm:1}
The gauge group $\G(\bP)$ is isomorphic to the group $\Gamma^\infty(\Ad \bP)$.
\end{thm}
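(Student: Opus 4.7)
The plan is to construct an explicit isomorphism $\Phi\colon \Gamma^\infty(\Ad\bP) \to \G(\bP)$ by using the identification of $\Ad\bP$-sections with equivariant $G$-valued functions on $\bP$. Concretely, for each $p\in\bP$ and each section $s\in\Gamma^\infty(\Ad\bP)$, since the fibre of $\Ad\bP$ at $\pi(p)$ is the orbit $\{[p,g]\mid g\in G\}$ and the $G$-action used in the quotient $\bP\times_{\Ad}G$ acts freely on the second factor (because $\Ad$ on a fibre factor is just a choice of representative), there is a \emph{unique} $\tilde s(p)\in G$ such that $s(\pi(p)) = [p,\tilde s(p)]$. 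I would then define
\begin{align*}
\phi_s(p) := p\cdot \tilde s(p).
\end{align*}

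First I would verify that this really produces a gauge transformation. Smoothness of $\phi_s$ follows because $\tilde s\colon\bP\to G$ is smooth (it is the composition of $s\circ\pi$ with a local right-inverse to the quotient map on $\bP\times G$, read off from any local trivialisation of $\bP$). The equivalence relation defining $\Ad\bP$ reads $(p,g)\sim(ph,h^{-1}gh)$; applying this to $s(\pi(ph)) = [ph,\tilde s(ph)] = [p,\tilde s(p)]$ shows that $\tilde s(ph) = h^{-1}\tilde s(p) h$. Hence
\begin{align*}
\phi_s(ph) = ph\cdot\tilde s(ph) = ph\cdot h^{-1}\tilde s(p) h = p\tilde s(p) h = \phi_s(p)h,
\end{align*}
so $\phi_s$ is $G$-equivariant, preserves fibres by construction, and has smooth inverse $\phi_{s^{-1}}$ (where $s^{-1}$ is the pointwise fibre-wise group inverse in $\Ad\bP$). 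Therefore $\phi_s\in\G(\bP)$.

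Next I would construct the inverse map. Given $\phi\in\G(\bP)$, the fact that $\phi$ preserves fibres and is $G$-equivariant means that for every $p\in\bP$ there is a \emph{unique} $k_\phi(p)\in G$ with $\phi(p)=p\cdot k_\phi(p)$, and equivariance forces $k_\phi(ph)=h^{-1}k_\phi(p)h$. I would then define a candidate section $s_\phi$ by $s_\phi(\pi(p)) := [p,k_\phi(p)]$; the cocycle relation just computed makes this independent of the choice of $p$ in the fibre, hence $s_\phi$ is a well-defined set-theoretic section of $\Ad\bP$, and smoothness follows from smoothness of $k_\phi$ in local trivialisations of $\bP$. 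Checking $\phi_{s_\phi}=\phi$ and $s_{\phi_s}=s$ is then immediate from the uniqueness of $\tilde s(p)$ and $k_\phi(p)$.

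Finally, I would check that $\Phi$ is a group homomorphism: if $s,t\in\Gamma^\infty(\Ad\bP)$, then $(\phi_s\circ\phi_t)(p) = \phi_s(p\cdot\tilde t(p)) = p\cdot\tilde t(p)\cdot \tilde s(p\cdot\tilde t(p)) = p\cdot\tilde t(p)\tilde t(p)^{-1}\tilde s(p)\tilde t(p) = p\cdot \tilde s(p)\tilde t(p)$, which matches $\phi_{st}$ since in $\Ad\bP$ the fibrewise product of $s(\pi(p))=[p,\tilde s(p)]$ and $t(\pi(p))=[p,\tilde t(p)]$ is $[p,\tilde s(p)\tilde t(p)]$. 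The only real subtlety is bookkeeping with the $\Ad$-equivalence relation and the freeness of the $G$-action on $\bP\times G$ that makes $\tilde s(p)$ and $k_\phi(p)$ uniquely defined; once that is in place, smoothness and the group homomorphism property reduce to routine local-trivialisation arguments.
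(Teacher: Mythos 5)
Your proof is correct and is exactly the standard argument that the paper is referring to when it cites Bleecker: one identifies sections of $\Ad\bP$ with $G$-equivariant maps $\bP\to G$ (your $\tilde s$), which in turn correspond to vertical automorphisms via $p\mapsto p\cdot\tilde s(p)$, and one checks this is a group isomorphism. One small expositional remark: the reason the element $\tilde s(p)$ with $s(\pi(p))=[p,\tilde s(p)]$ is \emph{unique} is that the $G$-action on $\bP$ (the first factor) is free, so that $g\mapsto[p,g]$ is a bijection from $G$ onto the fibre of $\Ad\bP$; your parenthetical about the action being ``free on the second factor'' is misleading, since the adjoint action of $G$ on itself is certainly not free.
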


\begin{defn}
\label{defn:gauge_theory}
 Let $M$ be a manifold and $G$ a Lie group. A \emph{classical $G$-gauge theory} over $M$ is a principal fibre bundle $\bP$ with structure group $G$. Connections $\omega$ on $\bP$ are also called \emph{gauge potentials}.

More precisely, the bundle $\bP$ forms the \emph{setting} for a classical gauge theory. The particle fields can be described as sections of associated bundles of $\bP$. 
The description of the gauge theory is completed by specifying an \emph{action functional}, which depends on the connection and on the particle fields, and which is invariant under the action of the gauge group.
\end{defn}

\subsubsection{Structure group}
\label{sec:structure_group}

Let $\bE$ be a vector bundle with fibre $V$. 
A set of transition functions $\{(U_i,g_{ij})\}$ on $\bE$ is called a $G$-atlas if each transition function takes values in $G\subset GL(V)$. 
If $\bE$ admits a $G$-atlas, then we say that $\bE$ has \emph{structure group} $G$. Given two $G$-atlases $\{(U_i,g_{ij})\}$ and $\{(U_i,g_{ij}')\}$ (where, after taking a common refinement, we may assume without loss of generality that both atlases are given on the same open covering $\{U_i\}$), we say that they are equivalent 
if there are functions $g_i\in C^\infty(U_i,G)$ such that 
\begin{align*}
g_{ij}'(x) &= g_i(x)^{-1} g_{ij}(x) g_j(x) , & \text{for all } x\in U_i\cap U_j .
\end{align*}

Given a $G$-atlas $\{(U_i,g_{ij})\}$ on $\bE$, \cref{thm:reconstruction} constructs a unique principal $G$-bundle $\bP$, which only depends (up to isomorphism) on the equivalence class of the $G$-atlas. 
Conversely, a set of transition functions on $\bP$ uniquely determines an equivalence class of $G$-atlases on the associated bundle $\bP\times_GV$.

\begin{example}
\label{ex:unitary-princ}
Let $\bE\to M$ be a complex vector bundle with fibre $\C^N$ over a compact manifold $M$. Then all $U(N)$-atlases on $\bE$ are equivalent. Hence there is a unique (up to isomorphism) principal $U(N)$-bundle $\bP$ such that $\bE \simeq \bP\times_{U(N)}\C^N$. 
\end{example}

\begin{defn}[Lifting of structure group]
\label{defn:lift}
Let $\phi\colon H\rightarrow G$ be a surjective group homomorphism. A principal $G$-bundle $\bP\to M$ is said to lift to a principal $H$-bundle $\bQ\to M$ along $\phi$ if there is a bundle morphism $\tau\colon \bQ \rightarrow \bP$ such that $\tau(qh) = \tau(q) \phi(h)$ for all $q \in \bQ$, $h \in H$. 
Equivalently, $\bQ$ is a lift of $\bP$ if
$$
\bQ \times_\phi G \simeq \bP
$$
as principal $G$-bundles. 

If $\tau\colon \bQ \rightarrow \bP$ is such a lift and $\rho\colon G \rightarrow GL(V)$ is a finite-dimensional representation, then $\bQ \times_{\rho \circ \phi} V$ is isomorphic to $\bP \times_{\rho} V$.
We stress that a lift need not always exist, and if it exists, it need not be unique.
\end{defn}

\subsection{Conjugate modules and vector bundles}
\label{sec:conjugate}
In the construction of gauge modules in \cref{sct:gaugemodules} we will make explicit use of the notion of a conjugate module. For completeness, we recall the definition of conjugate modules and vector bundles here. Since most of the modules are endowed with a hermitian structure, we recall the definition of a hermitian module first.

\begin{defn}
 Let $\mathcal{A}$ be a $*$-algebra and let $\mathcal{E}$ be a right $\mathcal{A}$-module. A (right) \emph{hermitian structure} $(\cdot,\cdot)_\A\colon \E\times \E \rightarrow \A$ on $\E$ is a sesqui-linear map (anti-linear in the first variable) satisfying
\begin{align*}
 (e_1,e_2 a)_\A &= (e_1,e_2)_\A a; &
 (e_2,e_1)_\A &= (e_1,e_2)_\A^*; &
 (e,e)_\A &\geq 0; & 
 (e,e)_\A = 0 &\iff e =0,
\end{align*}
 for all $a \in \A$, $e_1,e_2,e \in \E$. We also write $(\cdot,\cdot)$ instead of $(\cdot,\cdot)_\A$ when no confusion can arise. A module endowed with a hermitian structure is also called a \emph{hermitian module}. A left hermitian structure ${}_\A(\cdot,\cdot)$ is defined similarly. 
 
 A hermitian structure is called \emph{non-degenerate} if the map
\begin{align*}
\E \rightarrow \E^* := \Hom_\A(\E,\A) , \qquad e_0 \mapsto ( e \mapsto  (e_0, e) )
\end{align*}
is an anti-linear isomorphism. Note that the assumption that the hermitian structure is positive-definite already implies that the map $\E\to\E^*$ is injective. Non-degeneracy therefore requires surjectivity of this map.
\end{defn}

A finitely generated projective right $\A$-module $\E$ is of the form $p\A^N$, for some $N\in\N$ and some projection $p\in M_N(\A)$. The restriction of the standard hermitian structure on $\A^N$ then gives a non-degenerate hermitian structure on $\E$. If $\A = C^\infty(M)$ (so that $\E = \Gamma^\infty(\bE)$ for some vector bundle $\bE\to M$ by the Serre-Swan theorem \cite{Swan62}), then the hermitian structure is non-degenerate if and only if it induces an inner product on each fibre of $\bE$. 

\begin{defn}
Let $\mathcal{E}$ be an $\A-\B$-bimodule with a (right) $\B$-valued hermitian structure $(\cdot,\cdot)_\B$. Its \emph{conjugate module} $\bar{\E}$ is equal to $\E$ itself as an additive group. It can naturally be endowed with a $\B-\A$-bimodule structure and a (left) $\B$-valued hermitian structure ${}_\B(\cdot,\cdot)$ by setting
\begin{align*}
 b\bar{e} &:= \bar{eb^*}, & \bar{e}a &:= \bar{a^*e}, &  {}_\B(\bar{e_1},\bar{e_2} ) &:= (e_1,e_2)_\B,
\end{align*}   
for all $a \in \A$, $b \in \B$, $e,e_1,e_2 \in \E$.
\end{defn}

If $\mathcal{E} = \Gamma^\infty(\bE)$ is the $C^\infty(M)$-module of sections of some (hermitian) vector bundle $\bE$, then the conjugate module $\bar{\mathcal{E}}$ is equal to the $C^\infty(M)$-module of sections of the conjugate vector bundle $\bar{\bE}$ which is defined as: 
\begin{defn}
 Let $\bE \rightarrow M$ be a complex vector bundle. Take $\bar{\bE}$ to be equal to $\bE$ as fibres bundles over $M$, and write $\bar{e}$ for the element in $\bar{\bE}$ that corresponds to $e \in \bE$ under this identification. The bundle $\bar{\bE}$ is turned into a vector bundle over $M$ by defining the vector space structure in $\bar{\bE}_x$ by
\begin{align*}
 (\bar{e_1}, \bar{e_2}) &\mapsto \bar{e_1 + e_2}, & \lambda \cdot \bar{e} &= \bar{\bar{\lambda} e},  
\end{align*} 
for all $\lambda \in \mathbb{C}$, $e,e_1,e_2 \in \bE_x$. The vector bundle $\bar{\bE} \rightarrow M$ is called the \emph{conjugate vector bundle} of $\bE$.
\end{defn}
The identification $\bE \ni e \mapsto \bar{e} \in \bar{\bE}$ in the above definition is an anti-linear isomorphism of vector bundles.

A local trivialisation $(U,h)$ of $\bE$ induces a local trivialisation of $\bar{\bE}$ given by the map 
\begin{align*}
\bar{h}\colon \pi^{-1}_{\bar{\bE}} (U) \ni  \bar{e} \mapsto  \bar{h e} \in U \times \bar{V}, 
\end{align*}
where $\bar{(x,v)} := (x,\bar{v}) \in U\times\bar V$. 
If $g_{ij}$ is a transition function between two local trivialisations $(U_i,h_i)$ and $(U_j,h_j)$ of $\bE$, then the transition function $\bar{g_{ij}}$ between the corresponding local trivialisations $(U_i,\bar{h_i})$ and $(U_j,\bar{h_j})$ is equal to
\begin{align}
\label{eq:transitionfunctionsconjugatevectorbundle}
 \bar{h_i}\circ \bar{h_j}^{-1}(x, \bar{v}) = \bar{h_i} \left(\bar{h^{-1}_j(x,v) }\right) = \bar{h_i h_j^{-1} (x,v)} = (x,\overline{g_{ij}(x) v}) = (x,\bar{v} \cdot g_{ij}(x)^*). 
\end{align}

From here on, we consider $\A := C^\infty(M)$. 
Suppose that $\E$ is a hermitian right $\mathcal{A}$-module with hermitian structure $(\cdot, \cdot )_\A$. 
\begin{defn}
A \emph{connection} $\nabla$ on $\E$ is a map $\nabla\colon \E \rightarrow \E \otimes_\A \Omega^1(M)$ satisfying the rule
$$
\nabla(ea) = \nabla(e) a + e\otimes da ,
$$
for all $e\in\E$ and $a\in\A$. The connection is called \emph{hermitian} if
\begin{align*}
( \nabla e_1, e_2 )_{\Omega^1(M)}  + (e_1, \nabla e_2)_{\Omega^1(M)} = d(e_1,e_2)_\A, 
\end{align*}
for all $e_1,e_2 \in \E$, where the map $( \cdot, \cdot )_{\Omega^1(M)} \colon \E \times (\E\otimes_\A\Omega^1(M)) \to \Omega^1(M)$ is defined as $(e_1,e_2\otimes\alpha)_{\Omega^1(M)} := (e_1,e_2)_\A \alpha$. 
We then define $( \cdot, \cdot )_{\Omega^1(M)} \colon (\E\otimes_\A\Omega^1(M)) \times \E \to \Omega^1(M)$ as $(e_1\otimes\alpha,e_2)_{\Omega^1(M)} := \big((e_2,e_1\otimes\alpha)_{\Omega^1(M)}\big)^*$. 
\end{defn}

The conjugate connection $\bar{\nabla}\colon\bar\E\to\Omega^1(M)\otimes_\A\bar\E$ is given by
\begin{align*}
\bar{\nabla} \bar{e} =  \bar{\nabla e}, \quad (e \in \E),
\end{align*}
where $\bar{e \otimes \omega} = \omega^* \otimes \bar{e}$ for all $e \otimes \omega \in \E \otimes_\A \Omega^1(M)$. Here $*\colon \Omega^1(M) \rightarrow \Omega^1(M)$ is defined as $(fdg)^* = f^*(dg^*)$. 
It then follows that $\bar\nabla$ is also hermitian for the map ${}_{\Omega^1(M)}( \cdot, \cdot ) \colon (\Omega^1(M) \otimes_\A\bar\E) \times \bar\E \to \Omega^1(M)$ defined as ${}_{\Omega^1(M)}(\alpha\otimes\bar{e_1},\bar{e_2}) := (e_1\otimes\alpha^*,e_2)_{\Omega^1(M)} = \alpha(e_1,e_2)_\A$.

For a commutative algebra $\mathcal{A}=C^\infty(M)$ the notion of left and right modules are equivalent. If $\mathcal{E}$ is a left $\mathcal{A}$-module with (left) $\mathcal{A}$-valued hermitian structure ${}_\mathcal{A}( \cdot, \cdot )$, then $(e_1,e_2)_\mathcal{A} := {}_\mathcal{A}(e_2,e_1)$ defines a right $\mathcal{A}$-valued hermitian structure on $\mathcal{E}$ when it is seen as a right $\mathcal{A}$-module. If $\mathcal{A}= C^\infty(M)$, we will freely use this identification.

\subsection{Covering maps}
\label{sec:covering}

We observe that, for a surjective group bundle morphism $\phi\colon\bH\to\bG$, the induced map $\phi_*\colon\Gamma^\infty(\bH)\to\Gamma^\infty(\bG)$ need \emph{not} always be surjective, as the following example shows.

\begin{example}
\label{example:notallsections}
Take $M = SO(3)$ and consider the \emph{globally trivial} group bundles $\bH = M\times U(2)$ and $\bG = M\times PSU(2)$, with the obvious group bundle morphism $\phi\colon\bH\to\bG$ given by the quotient $U(2) \to PSU(2)$. 
Since $\bH$ and $\bG$ are globally trivial, we can make the identifications $\Gamma^\infty(\bH) \simeq C^\infty(SO(3),U(2))$ and $\Gamma^\infty(\bG) \simeq C^\infty(SO(3),PSU(2))$. 
Consider the map $f\colon SO(3) \to PSU(2)$ given by the identification of $PSU(2)$ with $SO(3)$, i.e.\ $f = \text{id}$ on $SO(3)$. 
If there exists a lift $\til f\colon SO(3)\to U(2)$ such that $f = \phi\circ\til f$, then $\til f$ is nothing but a global section of the $U(1)$-principal bundle $\pi\colon U(2) \rightarrow SO(3)$. 
However, as this bundle is not globally trivial (the fundamental group of $U(2)$ is $\mathbb{Z}$, whereas the fundamental group of $SO(3) \times U(1)$ is $\mathbb{Z}_2 \times \mathbb{Z}$), such a section does not exist. Hence the map $f$, seen as a section in $\Gamma^\infty(\bG)$, is not contained in the image of $\phi_*$.
\end{example}

In this subsection we aim to find sufficient conditions for the surjectivity of $\phi_*$. In other words, we would like to have sufficient conditions to ensure that for any section $s\colon M\to \bG$ there exists a lift $\tilde{s}\colon M\to \bH$ such that $\phi_*(\tilde s) = s$. Though the existence of lifts for covering maps has been well-studied, we will typically be dealing with more general fibrations $\phi\colon \bH\to\bG$, for which the problem of existence of lifts is more complicated. We avoid this problem by reducing it to the case of covering maps, as follows.

\begin{lem}
\label{lem:subcover}
Let $p\colon E\to B$ be a fibration, and consider some map $f\colon M\to B$. Suppose there exists a submanifold $C\subset E$ such that $p|_C\colon C\to B$ is a covering space, satisfying $f_*(\pi_1(M,m)) \subset p_*(\pi_1(C,c))$, where $m\in M$ and $c\in C$ are such that $f(m) = p(c)$. Then there exists a lift $\tilde{f}\colon M\to E$ satisfying $p\circ \tilde{f} = f$ and $\tilde{f}(m) = c$. 
\end{lem}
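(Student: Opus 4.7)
The plan is to reduce the problem to the classical path-lifting criterion for covering spaces. The hypothesis that $p|_C\colon C\to B$ is a covering map is the entire point: while fibrations in general obstruct the existence of lifts (there is no analogue of the covering-space monodromy theorem for arbitrary fibrations), covering maps enjoy a clean $\pi_1$-based lifting criterion, and the subspace condition $f_*(\pi_1(M,m))\subset (p|_C)_*(\pi_1(C,c))$ is precisely what that criterion requires.

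First I would invoke the standard covering-space lifting lemma from algebraic topology (e.g.\ Hatcher, Proposition 1.33): since $M$ is a smooth connected manifold it is path-connected and locally path-connected, and since $p|_C\colon C\to B$ is a covering with $f(m)=p(c)$ satisfying the $\pi_1$-containment hypothesis, there exists a unique continuous map $\hat f\colon M\to C$ with $(p|_C)\circ\hat f=f$ and $\hat f(m)=c$. Composing with the inclusion $\iota\colon C\hookrightarrow E$ then defines $\tilde f:=\iota\circ\hat f\colon M\to E$, which automatically satisfies $p\circ\tilde f = p\circ\iota\circ\hat f = (p|_C)\circ\hat f = f$ and $\tilde f(m)=c$.

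The remaining point is smoothness, since the paper works throughout in the smooth category. This is where I would use that a covering map between smooth manifolds is a local diffeomorphism: about any point $x\in M$ there is an open neighbourhood $U\subset M$ and an evenly-covered neighbourhood $V\subset B$ of $f(x)$ such that $\hat f(U)$ lies in a single sheet $W\subset C$ over $V$, on which $p|_W\colon W\to V$ is a diffeomorphism. Then $\hat f|_U = (p|_W)^{-1}\circ f|_U$ is a composition of smooth maps, so $\hat f$ is smooth. Since $\iota$ is a smooth embedding by assumption on $C$, the lift $\tilde f=\iota\circ\hat f$ is smooth as required.

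I do not expect any substantial obstacle: the content of the lemma is essentially a packaging statement that allows one to apply the covering-space lifting criterion even when the ambient map $p\colon E\to B$ is merely a fibration, by passing to a well-chosen subspace. The only mild subtlety is the smoothness upgrade, but this is immediate from the local-diffeomorphism property of smooth coverings.
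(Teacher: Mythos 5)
Your proof is correct and follows essentially the same route as the paper's: invoke Hatcher's Proposition~1.33 to lift $f$ through the covering $p|_C\colon C\to B$ and then compose with the inclusion $C\hookrightarrow E$. The paper does not explicitly address smoothness of the lift; your extra paragraph on this (using that a smooth covering is a local diffeomorphism) is a welcome refinement but not a different approach.
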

\begin{proof}
Consider the diagram
\begin{align*}
\xymatrix{
  & C \ar@{^{(}->}[r] \ar[d]^{p|_C} & E \ar[d]^p \\
M \ar[ur]^{\tilde{f}'} \ar[r]^f & B \ar@{=}[r] & B\\
}
\end{align*}
The assumption $f_*(\pi_1(M,m)) \subset p_*(\pi_1(C,c))$ implies (see e.g.\ \refcite[Proposition 1.33]{Hatcher02}) that there exists a lift $\tilde{f}'\colon M\to C$ satisfying $\tilde{f}'(m) = c$, and then we can simply define $\tilde{f}\colon M\to E$ as the composition $M\xrightarrow{\tilde{f}'}C\hookrightarrow E$. 
\end{proof}

We now translate the above lemma into the setting of group bundles, where we will need it later.

\begin{coro}
\label{coro:surjective}
Let $M$ be a simply connected manifold, and let $\bG,\bH$ be group bundles over $M$. If $\bG$ is covered by a subbundle $\bU$ of $\bH$ via a group bundle morphism $\phi\colon \bH \rightarrow \bG$, then the map $\phi_*\colon \Gamma^\infty(\bH) \rightarrow \Gamma^\infty(\bG)$, given by $s \mapsto \phi \circ s$, is surjective.  
\end{coro}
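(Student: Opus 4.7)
My plan is to apply \cref{lem:subcover} directly. The rough shape is: a section $s$ of $\bG$ is a map $s\colon M \to \bG$ with $\pi_\bG \circ s = \id_M$, and \cref{lem:subcover} (with the covering $\bU \to \bG$ playing the role of $C\to B$) provides a lift through $\bU$ which we then push into $\bH$ via inclusion.

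To set this up, I would fix $s \in \Gamma^\infty(\bG)$, choose any basepoint $m \in M$, and pick some $c \in \bU$ with $\phi(c) = s(m)$ (which exists because $\phi|_\bU\colon \bU \to \bG$ is a covering and in particular surjective). Then I apply \cref{lem:subcover} with $E = \bH$, $B = \bG$, $p = \phi$, $C = \bU$ and $f = s$: the submanifold hypothesis holds since $\bU \subset \bH$ is a subbundle covering $\bG$, so $\phi|_\bU$ is a covering space. The $\pi_1$-condition becomes $s_*(\pi_1(M,m)) \subset (\phi|_\bU)_*(\pi_1(\bU,c))$; since $M$ is simply connected, the left-hand side is trivial, so the inclusion holds automatically. \cref{lem:subcover} then produces a smooth lift $\tilde s\colon M \to \bH$ satisfying $\phi \circ \tilde s = s$.

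It remains to verify that $\tilde s$ is actually a \emph{section} of $\bH$, i.e.\ $\pi_\bH \circ \tilde s = \id_M$. This is immediate from the fact that $\phi$ is a bundle morphism over $M$: indeed $\pi_\bG \circ \phi = \pi_\bH$, so
\begin{align*}
\pi_\bH \circ \tilde s = \pi_\bG \circ \phi \circ \tilde s = \pi_\bG \circ s = \id_M.
\end{align*}
Hence $\tilde s \in \Gamma^\infty(\bH)$ with $\phi_*(\tilde s) = s$, which proves surjectivity.

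I do not expect any serious obstacle; the only minor point to check is smoothness of the lift, but this is standard — the covering projection $\phi|_\bU$ is a local diffeomorphism, so the (a priori continuous) lift of a smooth map is automatically smooth. The essential idea is simply that simple connectedness of $M$ trivialises the $\pi_1$-obstruction to lifting along the covering $\bU \to \bG$; compare with \cref{example:notallsections} where $M = SO(3)$ has $\pi_1(M) = \Z_2$ and the obstruction is genuine.
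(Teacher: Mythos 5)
Your proof is correct and follows essentially the same route as the paper: apply \cref{lem:subcover} with $C = \bU$, $p = \phi$, $f = s$, and observe that simple connectedness trivialises the $\pi_1$-condition. You add an explicit (and worthwhile) check that the lift is actually a section via $\pi_\bH = \pi_\bG \circ \phi$, a point the paper leaves implicit.
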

\begin{proof}
By assumption, $\phi|_\bU\colon\bU\to\bG$ is a covering space. 
Since $\pi_1(M,m)$ is trivial (by definition of simply-connectedness) it follows from \cref{lem:subcover} that each section $s\colon M\to \bG$ can be lifted to a section $\widetilde{s}\colon M\to \bU\subset\bH$ such that $\phi_*(\widetilde{s}) = s$. 
\end{proof}

\subsection{Spectral triples and Kasparov modules}
\label{sec:spectral}

Spectral triples were introduced in \refcite{Connes94} as a noncommutative analogue of a spin manifold. 

\begin{defn}
\label{defn:spectral_triple}
A \emph{spectral triple} $(\A, \mH, D)$ is given by an involutive unital algebra $\A$ represented (faithfully) as bounded operators on a Hilbert space $\mH$ and a self-adjoint (generally unbounded) operator $D$ with compact resolvent (or equivalently,\ $(1+D^2)^{-1/2}$ is a compact operator) such that $a\cdot\Dom D\subset\Dom D$ and the commutator $[D,a]$ is bounded for each $a\in\A$. 

A spectral triple is called \emph{even} if there exists a $\Z_2$-grading $\gamma$ on $\mH$ that commutes with any $a\in\A$ and anti-commutes with $D$.

A spectral triple is called \emph{real} if there exists an anti-unitary isomorphism $J\colon \mH\rightarrow\mH$ satisfying
\begin{align*}
J^2 &= \varepsilon , & JD &= \varepsilon' DJ , & J\gamma &= \varepsilon'' \gamma J \text{ (if $\gamma$ exists)} ,\\
[a,JbJ^*] &= 0 , & [[D,a],JbJ^*] &= 0 , & &\forall a,b\in\A . 
\end{align*}
The signs $\varepsilon$, $\varepsilon'$ and $\varepsilon''$ determine the \emph{KO-dimension} $n$ modulo $8$ of the real spectral triple, according to the following table:
\begin{align*}
\begin{array}{c|cccccccc}
n & 0 & 1 & 2 & 3 & 4 & 5 & 6 & 7 \\
\hline
\varepsilon & 1 & 1 & -1 & -1 & -1 & -1 & 1 & 1 \\
\varepsilon' & 1 & -1 & 1 & 1 & 1 & -1 & 1 & 1 \\
\varepsilon'' & 1 & & -1 &  & 1 & & -1 & \\
\end{array}
\end{align*}
\end{defn}
We will refer to the conditions $[a,JbJ^*] = 0$ and $[[D,a],JbJ^*] = 0$ as the zeroth- and first-order condition, respectively.

Given an algebra $\A$, we define the \emph{opposite algebra} as the vector space $\A^\op := \{a^\op \mid a\in\A\}$ with the \emph{opposite product} $a^\op b^\op = (ba)^\op$. For a real spectral triple, we therefore have a linear representation of $\A^\op$ on $\mH$ given by $a^\op \mapsto Ja^*J^*$. 

The notion of spectral triple can be seen as an unbounded version of a Fredholm module. The generalisation of Fredholm modules from Hilbert spaces to Hilbert modules was performed by Kasparov \cite{Kasparov80}, where for any two graded $C^*$-algebras $A$ and $B$ the set $KK(A,B)$ was defined as the set of equivalence classes of certain Kasparov $A-B$-modules. In addition, there exists a Kasparov product $KK(A,B) \times KK(B,C) \to KK(A,C)$. More details can be found in e.g.\ \refcite{Blackadar98}. 
Kasparov modules were subsequently generalised to the unbounded picture by Baaj and Julg \cite{BJ83}. In this paper we will only focus on the unbounded picture, which we briefly recall below. 

\begin{defn}[\refcite{BJ83}]
\label{defn:Kasp_mod}
Given $\Z_2$-graded $C^*$-algebras $A$ and $B$, an \emph{unbounded Kasparov $A-B$-module} $({}_{\phi(A)}E_B,D)$ is given by
\begin{itemize}
\item a $\Z_2$-graded, countably generated, right Hilbert $B$-module $E_B$;
\item a $\Z_2$-graded $*$-homomorphism $\phi\colon A\to\End_B(E)$;
\item a self-adjoint, regular, odd operator $D\colon\Dom D\subset E\to E$ such that, for all $a$ in a dense sub-algebra $\A$ of $A$, $\phi(a)\cdot\Dom D\subset \Dom D$ and $[D,\phi(a)]_\pm$ is (or extends to) a bounded endomorphism, and $\phi(a)(1+D^2)^{-\frac{1}{2}}$ is a compact endomorphism (i.e.\ it lies in $\End^0_B(E)$). 
\end{itemize}
The set of all unbounded Kasparov $A-B$-modules is denoted by $\Psi(A,B)$. We will often simply write ${}_AE_B$ instead of ${}_{\phi(A)}E_B$.
\end{defn}

A right Hilbert $\C$-module is just a Hilbert space. A spectral triple $(\A,\mH,D)$ may then be seen as an unbounded Kasparov $A-\C$-module $({}_A\mH_\C,D)$, where the $C^*$-closure $A$ of $\A$ is trivially graded.

There is a natural map from the unbounded picture to the bounded one. This map is defined by replacing the operator $D$ in $({}_{\phi(A)} E_B, D)$ by $b(D) = D(1+D^2)^{-\frac{1}{2}}$, where $b\colon \mathbb{R} \rightarrow \mathbb{R}$ denotes the function $b(x) = x(1+x^2)^{-\frac12}$.

\begin{thm}[{\refcite[Theorems 17.10.7 and 17.11.4]{Blackadar98}}]
 If $({}_{\phi(A)} E_B, D) \in \Psi(A,B)$, then $({}_{\phi(A)} E_B, b(D)) \in KK(A,B)$. Moreover, if $A$ is separable and $B$ is $\sigma$-unital, then this map $\Psi(A,B) \rightarrow KK(A,B)$ is surjective.
\end{thm}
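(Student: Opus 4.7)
The statement has two parts: (i) the bounded transform of an unbounded Kasparov module is a bounded Kasparov module; (ii) every class in $KK(A,B)$ arises this way when $A$ is separable and $B$ is $\sigma$-unital. I treat them separately.

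For part (i), my plan is to verify the three defining conditions of a Kasparov module for $F := b(D)$. Self-adjointness and oddness of $F$ follow from functional calculus applied to the self-adjoint regular operator $D$ and the fact that $b$ is odd. The key algebraic identity is
\begin{align*}
F^2 - 1 = b(D)^2 - 1 = -(1+D^2)^{-1},
\end{align*}
so $\phi(a)(F^2-1) = -\big(\phi(a)(1+D^2)^{-1/2}\big)(1+D^2)^{-1/2}$, which lies in $\End^0_B(E)$ because the first factor is compact by assumption and the second is bounded. The subtle point is showing that $[F,\phi(a)]$ is a compact endomorphism. For this I would use the Baaj--Julg integral representation
\begin{align*}
b(D) = \frac{2}{\pi}\int_0^\infty D\,(1+\lambda^2+D^2)^{-1}\,d\lambda,
\end{align*}
and, after commuting $\phi(a)$ through using $[(1+\lambda^2+D^2)^{-1},\phi(a)] = -(1+\lambda^2+D^2)^{-1}\big([D,\phi(a)]D + D[D,\phi(a)]\big)(1+\lambda^2+D^2)^{-1}$, rewrite the integrand as a sum of terms of the form (compact)$\cdot$(bounded) controlled by $(1+\lambda^2+D^2)^{-1}$. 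Standard norm estimates $\|D(1+\lambda^2+D^2)^{-1}\| \le \tfrac{1}{2}(1+\lambda^2)^{-1/2}$ and $\|(1+\lambda^2+D^2)^{-1}\| \le (1+\lambda^2)^{-1}$ make the integral norm-convergent and show that $[F,\phi(a)]$ is compact since the approximating Riemann sums lie in the (norm-closed) ideal $\End^0_B(E)$. This step is the main technical obstacle.

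For part (ii), given a bounded Kasparov module $(E_B,F)$, the goal is to produce an unbounded $D$ whose bounded transform represents the same class. I would first normalise the representative: by adding a degenerate module and using that $F$ may be taken self-adjoint with $\|F\| \le 1$ (standard Kasparov manipulations), we may assume $F = F^*$ and $1 - F^2 \ge 0$. The inverse transform $D := F(1-F^2)^{-1/2}$ is however generally unbounded and only densely defined; making this rigorous requires care. The Baaj--Julg approach here is to construct $D$ via an approximate unit $(u_n)$ for the ideal generated by $\phi(A)$ and $[F,\phi(A)]$, quasi-central with respect to $F$, and form $D$ as a series $\sum \sqrt{\Delta u_n}\, F\, \sqrt{\Delta u_n}$ (with $\Delta u_n = u_n - u_{n-1}$) composed with suitable spectral transforms. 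The separability of $A$ and $\sigma$-unitality of $B$ are exactly what is needed to produce such a quasi-central approximate unit. Then one shows that the resulting $D$ is self-adjoint regular, that $\phi(a)(1+D^2)^{-1/2}$ is compact and $[D,\phi(a)]$ bounded for $a$ in a dense $*$-subalgebra, and finally that $(E_B, b(D))$ is operator-homotopic to $(E_B, F)$.

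The hardest part is the surjectivity: verifying regularity of $D$ and the Kasparov conditions on the constructed operator requires delicate estimates on the quasi-central approximate unit, together with invocation of Kasparov's technical theorem. In practice, the cleanest route is to quote Kasparov's technical theorem as a black box and then to use it directly to obtain the quasi-central unit whose existence drives the construction of $D$.
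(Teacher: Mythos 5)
The paper does not prove this theorem: it is quoted verbatim from Blackadar's \emph{K-theory for operator algebras} (Theorems 17.10.7 and 17.11.4), and the authors cite it as a known result. So there is no in-paper proof for your attempt to be compared against; what you have written is a sketch of the Baaj--Julg argument that underlies Blackadar's treatment, and it is broadly the right strategy. That said, two points deserve flagging.

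First, in part (i) the identity $F^2-1 = -(1+D^2)^{-1}$ and the deduction that $\phi(a)(F^2-1)$ is compact are fine. But the argument for compactness of $[F,\phi(a)]$ is the genuine technical heart, and your Riemann-sum sketch does not actually establish it. After expanding $[DR_\lambda,\phi(a)]$ with $R_\lambda=(1+\lambda^2+D^2)^{-1}$, the resulting terms such as $[D,\phi(a)]R_\lambda$ and $DR_\lambda[D,\phi(a)]DR_\lambda$ are \emph{bounded} with integrable norm in $\lambda$ (your estimates are correct), but they are not obviously \emph{compact}: the hypotheses give compactness only of $\phi(a)(1+D^2)^{-1/2}$, not of $[D,\phi(a)](1+D^2)^{-1/2}$ nor of $R_\lambda$ itself. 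One must therefore commute $\phi(a)$ across the resolvent so that a factor of the form $\phi(a)(1+D^2)^{-1/2}$ (or its adjoint) appears in every term before invoking the norm-closedness of $\End^0_B(E)$, and the bookkeeping of the resulting error terms is precisely what makes Blackadar's proof longer than a paragraph. As written, your step ``the approximating Riemann sums lie in the ideal'' is asserted rather than shown.

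Second, in part (ii) the formula $\sum\sqrt{\Delta u_n}\,F\,\sqrt{\Delta u_n}$ is not the operator one wants (it would be bounded); the Baaj--Julg construction instead builds an unbounded positive self-adjoint regular operator $T$ from the quasi-central approximate unit (roughly $T=\sum_n n\,\Delta u_n$ in an appropriate sense) and then perturbs $TF$ or $FT^{1/2}\cdots$ to obtain a self-adjoint regular $D$ whose bounded transform is operator-homotopic to $F$. The overall plan --- normalise $F$, invoke Kasparov's technical theorem to get a quasi-central approximate unit, build $D$, verify regularity and the Kasparov conditions --- is correct and matches the standard proof, but the specific formula you wrote would need to be replaced. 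In short: your outline is the right proof strategy and correctly identifies where separability of $A$ and $\sigma$-unitality of $B$ enter, but both halves gloss over exactly the estimates that make the theorem nontrivial, and since the paper does not prove the result there is no alternative in-paper argument to lean on --- the details really do need to come from Blackadar or Baaj--Julg.
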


The Kasparov product has an unbounded analogue. To be precise, we say that an unbounded Kasparov $A-C$-module $({}_{\phi(A)}E_C,D)$ represents the Kasparov product of two unbounded Kasparov modules $({}_{\phi_1(A)}{E_1}_B,D_1)$ and $({}_{\phi_2(B)}{E_2}_C,D_2)$ if $[(E,b(D))]\in KK(A,C)$ is the Kasparov product of $[(E_1,b(D_1))]\in KK(A,B)$ and $[(E_2,b(D_2))]\in KK(B,C)$, where the square brackets indicate that we take the equivalence class of the Kasparov-module.

We will show in \cref{sec:acm} that the construction of an almost-commutative manifold as the product of an internal space $I$ with the underlying manifold $M$ corresponds to an unbounded Kasparov product on the level of $KK$-classes. Although this follows from the (more general) framework of Mesland \cite{Mes14}, we will prove it directly using the following result. 

\begin{thm}[Kucerovsky, \refcite{Kuc97}]
\label{thm:Kucerovsky}
Let $({}_{\phi_1(A)} E^1_B,D_1)$ and $({}_{\phi_2(B)} E^2_C,D_2)$ be unbounded Kasparov modules. Write $E := E^1\hot_B E^2$, where $\hat{\otimes}$ denotes the \emph{graded} tensor product. Suppose that $({}_{\phi_1(A) \otimes \text{id}} E_C,D)$ is an unbounded Kasparov module such that:
\begin{itemize}
\item[i)] for all $e_1$ in a dense subspace of $\phi_1(A) E^1$, the commutators 
$$
\left[\mat{cc}{D&0\\0&D_2\\},\mat{cc}{0&T_{e_1}\\T_{e_1}^*&0\\}\right]
$$
are bounded on $\Dom(D\oplus D_2)\subset E\oplus E^2$, where $T_{e_1}\colon E^2 \rightarrow E$ is given by $T_{e_1}(e_2) = e_1 \otimes e_2$;
\item[ii)] $\Dom(D)\subset\Dom(D_1\hot1)$;
\item[iii)] $((D_1\hot1)e|De) + (De|(D_1\hot1)e) \geq K (e|e)$ for some $K\in\R$, for all $e\in\Dom(D)$. 
\end{itemize}
Then $({}_{\phi_1(A) \otimes \text{id}} E_C,D)$ represents the Kasparov product of $({}_{\phi_1(A)} E^1_B,D_1)$ and $({}_{\phi_2(B)} E^2_C,D_2)$. 
\end{thm}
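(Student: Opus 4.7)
The plan is to reduce to the bounded picture and verify the Connes--Skandalis sufficient conditions for a Kasparov product, namely (a) the $F_2$-connection property and (b) Connes' positivity condition modulo compacts. Since we already know by assumption that $({}_{\phi_1(A)\otimes\text{id}}E_C, D)$ is an unbounded Kasparov module, its bounded transform $(E_C, b(D))$ is a Kasparov $A$--$C$-module; it remains only to check that its KK-class is the product of the KK-classes of $(E^1, b(D_1))$ and $(E^2, b(D_2))$.

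First I would establish the $b(D_2)$-connection property from condition (i). For $e_1$ in the relevant dense subspace, the bounded operator $T_{e_1}\colon E^2\to E$ satisfies
\begin{align*}
\begin{pmatrix} D & 0 \\ 0 & D_2 \end{pmatrix}\begin{pmatrix} 0 & T_{e_1} \\ T_{e_1}^* & 0 \end{pmatrix} - \begin{pmatrix} 0 & T_{e_1} \\ T_{e_1}^* & 0 \end{pmatrix}\begin{pmatrix} D & 0 \\ 0 & D_2 \end{pmatrix} \in \End_C(E\oplus E^2),
\end{align*}
which on the upper-right block reads $D T_{e_1} - T_{e_1} D_2 \in \End_C(E^2,E)$. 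I would then transport this boundedness through the bounded transform using the integral formula
\begin{align*}
b(x) = \frac{1}{\pi}\int_0^\infty (x^2+\lambda)^{-1} x\, \lambda^{-1/2}\, d\lambda ,
\end{align*}
which yields an expression for $b(D) T_{e_1} - T_{e_1} b(D_2)$ as an integral of products of resolvents with the bounded commutator inserted between them. Multiplying on either side by a resolvent and invoking the Kasparov-module compactness condition $\phi_1(a)(1+D_1^2)^{-1/2}\in\End^0_B(E^1)$ (and the analogue for $D_2$) shows that $[b(D), T_{e_1}] - (\text{compact}) \cdot b(D_2)$ is compact, which is precisely the $b(D_2)$-connection property required for the Kasparov product.

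Next, for the positivity condition, I would use conditions (ii) and (iii) together. Condition (ii) ensures that $D_1\hot 1$ makes sense on $\Dom(D)$, so the inner products in (iii) are well-defined. Condition (iii) says that the symmetrized product $(D_1\hot1) D + D (D_1\hot1)$ is bounded below as a quadratic form on $\Dom(D)$. Applying the bounded transform (again via the integral formula above, now to both $D$ and $D_1\hot1$) and using the first-order spectral calculus, this semi-boundedness translates into $\phi_1(a)\bigl[b(D_1)\hot 1,\, b(D)\bigr]\phi_1(a)^* \geq 0$ modulo compacts for $a$ in the dense subalgebra $\A\subset A$, which is Connes' positivity criterion. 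Combined with the connection property, Connes--Skandalis then identify $(E_C, b(D))$ with the Kasparov product.

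The main obstacle is the passage from unbounded commutator/semi-boundedness statements to bounded statements modulo compacts. Two delicacies must be controlled: one has to justify differentiating under the integral sign in the formula for $b$ on the relevant dense subdomains (so that commutators of $b(D)$ with $T_{e_1}$ reduce to integrals of resolvent-sandwiched bounded operators), and one has to argue that the compactness of $(1+D^2)^{-1/2}$ (restricted to $\phi_1(a)E$) lets one absorb these resolvent factors into genuinely compact endomorphisms of the graded tensor product. Verifying regularity of $D_1\hot1$ on $\Dom(D)$ from condition (ii), and the interchange of functional calculus with the tensor-product structure, is the technical heart of the argument.
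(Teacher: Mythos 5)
The paper does not prove this theorem --- it is quoted from Kucerovsky \cite{Kuc97}, as the attribution makes explicit, and is used in the paper as a black box. Your sketch correctly identifies the strategy of Kucerovsky's original argument: pass to the bounded transform and verify the Connes--Skandalis sufficient conditions (the $F_2$-connection property and the positivity-modulo-compacts criterion). What you have written, however, is a plan rather than a proof; you explicitly defer the passage from the unbounded hypotheses (i)--(iii) to bounded compactness and positivity modulo compacts, and that passage is the entire content of Kucerovsky's argument.

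Two imprecisions are worth flagging. The $F_2$-connection property to be verified is that $T_{e_1}\, b(D_2) - (-1)^{\partial e_1}\, b(D)\, T_{e_1}$ is a compact map $E^2\to E$; your formulation involving a ``$[b(D),T_{e_1}]$ minus a compact multiple of $b(D_2)$'' term is not the standard condition and would need to be reconciled with it. Moreover, the compactness here does not follow merely from the compact-resolvent conditions on the three modules, as your sketch suggests; it crucially uses that $e_1$ ranges over a dense subspace of $\phi_1(A)E^1$, so that $T_{e_1}=(\phi_1(a)\hat\otimes 1)\,T_{e'}$ for some $a\in A$, and it is this algebra factor --- combined with the resolvent integral, the bounded-commutator hypothesis (i), and the condition $(\phi_1(a)\otimes 1)(1+D^2)^{-1/2}\in\End^0_C(E)$ --- that produces a genuinely compact endomorphism. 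In the context of the present paper the appropriate treatment of this statement is a citation rather than a proof, which is exactly what the authors do.
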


\section{Almost-commutative manifolds}
\label{sec:acm}

Almost-commutative manifolds $M\times F$ of the form \cref{eq:product_triple} were first studied in \refcite{CL91} and \refscite{D-VKM1,D-VKM2,D-VKM3,D-VKM4}. They were later used in \refscite{Connes96,CCM07} to geometrically describe Yang-Mills theories and the Standard Model of elementary particles. The name almost-commutative manifolds was coined in \refcite{Class_IrrACG_I}, their classification started in \refscite{Kra98,PS98}.

Let $M$ be a smooth compact even-dimensional Riemannian spin manifold. We assume (throughout this section) that $M$ has dimension $4$. The manifold $M$ can be completely characterised \cite{Con13} by the real even spectral triple 
$$
(C^\infty(M), L^2(\bS), \sD, \gamma_5, J_M) ,
$$
which is often referred to as the \emph{canonical} spectral triple for $M$. 
Here $\bS$ is a spinor bundle over $M$, $\sD = -i c \circ \nabla^\bS$ is the corresponding Dirac operator (where $\nabla^\bS$ is the lift of the Levi-Civita connection on $M$, and $c$ denotes Clifford multiplication with the conventions $c(v)c(w)+c(w)c(v) = 2g(v,w)$ and $c(v)^* = c(v)$ for any $v,w\in\Gamma^\infty(T^*M)$), $\gamma_5$ is the grading of the spinor bundle, and $J_M$ is the charge conjugation operator. Given a real even \emph{finite} spectral triple $(A_F, \mH_F, D_F, \gamma_F, J_F)$ (for which $\dim\mH_F<\infty$), we can construct the product triple
$$
M\times F := \left( C^\infty(M,A_F) , L^2(\bS)\otimes\mH_F , \sD\otimes\1 + \gamma_5\otimes D_F , \gamma_5\otimes\gamma_F , J_M\otimes J_F \right) .
$$
Defining the (globally trivial) algebra bundle $\bB = M\times A_F$ and the (globally trivial) vector bundle $\bE = M\times\mH_F$, we can rewrite $C^\infty(M,A_F) \simeq \Gamma^\infty(\bB)$ and $L^2(\bS)\otimes\mH_F \simeq L^2(\bS\otimes\bE)$. The purpose of this section is to generalise the construction of $M\times F$ to \emph{globally non-trivial} bundles over $M$. 
At the same time, we will put this generalised construction in the context of the Kasparov product between unbounded Kasparov modules. The globally non-trivial case was first considered in \refcite{BvS11} for the case of algebra bundles with fibre $M_N(\C)$, and has also been studied more generally in \refcite{Cac12}.

\subsection{The internal space}

\begin{defn}
A (smooth) \emph{internal space} $I^\infty$ over a compact manifold $M$ is given by the data
\begin{align*}
I^\infty := \left( \Gamma^\infty(\bB),\; \Gamma^\infty(\bE),\; D_I\right) ,
\end{align*}
where $\bE$ is a hermitian vector bundle over $M$, $\bB$ is a unital $*$-algebra subbundle of $\End(\bE)$, and $D_I$ is a hermitian element of $\Gamma^\infty(\End(\bE)) \simeq \End_{C^\infty(M)}(\Gamma^\infty(\bE))$. 

An internal space is called \emph{even} if there is a grading $\gamma_I$, i.e.\ an endomorphism $\gamma_I\in\Gamma^\infty(\End(\bE))$ such that 
\begin{align*}
\gamma_I^* &= \gamma_I , & \gamma_I^2 &= 1 , & \gamma_I D_I &= -D_I \gamma_I , & \gamma_I a &= a \gamma_I \quad \forall a\in\Gamma^\infty(\bB) .
\end{align*}
An even internal space is called \emph{real} if there is a real structure $J_I$, i.e.\ an anti-unitary endomorphism $J_I$ on $\bE$ such that 
\begin{align*}
J_I^2 &= \varepsilon , & J_I D_I &= \varepsilon' D_I J_I , & J_I \gamma_I &= \varepsilon'' \gamma_I J_I ,\\
[a,Jb^*J^*] &= 0 , & \big[[D_I,a],Jb^*J^*\big] &= 0 , & &\forall a,b\in\Gamma^\infty(\bB),
\end{align*}
where the signs determine the KO-dimension of the internal space according to the same table as in \cref{defn:spectral_triple}. 
\end{defn}

\begin{remark}
\label{remark:varying_mass}
The endomorphism $D_I$ will be interpreted as a mass matrix describing the masses of the elementary particles. 
We would like to point out a few things about this mass matrix:
\begin{enumerate}
\item On a local trivialisation (say, around a point $x\in M$) we can view the endomorphism $D_I$ as a matrix-valued function $D_I(x)$, but the precise form of this matrix $D_I(x)$ depends on the choice of local trivialisation. 
However, since the transition functions are unitary, two different choices of local trivialisations yield two unitarily equivalent mass matrices, and hence the eigenvalues of the matrix $D_I(x)$ (i.e.\ the masses of the particles) are independent of the choice of local trivialisation. 
\item These eigenvalues of $D_I(x)$ are (by default) allowed to vary as a function of $x\in M$. In the standard (globally trivial) approach one can also make the (ad hoc) decision to promote the mass parameters to functions (although this is usually not done). However, this would be unnatural from the perspective that a (globally trivial) almost-commutative manifold is the (external) Kasparov product of a Riemannian spin manifold with a finite spectral triple. 
Instead, varying mass parameters are more naturally described by replacing the finite spectral triple by an internal space (which works equally well in the globally trivial case) and replacing the external by the internal Kasparov product. 
As such, the promotion of the mass parameters to functions becomes a natural attribute of our framework. 
\item One could ask whether it is always possible to choose these mass parameters to be globally constant (as in the usual approach). We expect that this might not always be possible in the general globally non-trivial case, but it is unclear what the precise topological obstructions would be.
\end{enumerate}
\end{remark}

We shall write $\A = C^\infty(M)$, $\B = \Gamma^\infty(\bB)$, and $\E = \Gamma^\infty(\bE)$. Their respective $C^*$-closures are denoted by $A = C(M)$, $B = \Gamma(\bB)$, and $E = \Gamma(\bE)$. 

\begin{prop}
An even internal space $I^\infty = (\Gamma^\infty(\bB), \Gamma^\infty(\bE), D_I)$ yields an unbounded Kasparov $B-A$-module $I = ({}_B \Gamma(\bE)_{A}, D_I)$. 
\end{prop}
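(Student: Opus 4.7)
I will verify each of the axioms of \cref{defn:Kasp_mod} for $\bigl({}_B \Gamma(\bE)_A, D_I\bigr)$ in turn; essentially all the work is pre-packaged in the assumption that $\bE$ is a hermitian vector bundle over the \emph{compact} base $M$, so the argument is a direct check rather than a deep construction.

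First, I set up the Hilbert module structure. Since $M$ is compact, Serre--Swan tells us that $\E = \Gamma^\infty(\bE)$ is finitely generated projective as an $\A$-module, so $E := \Gamma(\bE)$ is a finitely (in particular countably) generated right Hilbert $A$-module with hermitian structure inherited fibre-wise from $\bE$. The section $\gamma_I \in \Gamma^\infty(\End(\bE))$ is fibre-wise a self-adjoint involution, hence defines a $\Z_2$-grading on $E$; I equip $B$ with the trivial grading, which is consistent with the axiom $\gamma_I a = a \gamma_I$ for all $a\in\B$. The inclusion $\bB \hookrightarrow \End(\bE)$ then induces a $*$-homomorphism $\phi\colon B = \Gamma(\bB) \hookrightarrow \Gamma(\End(\bE)) \simeq \End_A(E)$, and since $\phi(a)$ commutes with $\gamma_I$ this representation is $\Z_2$-graded.

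Next, I handle the operator $D_I$. Viewed as an element of $\Gamma(\End(\bE)) \simeq \End_A(E)$, it is a bounded, self-adjoint, adjointable endomorphism of $E$. Boundedness makes it automatically regular with $\Dom D_I = E$, and the relation $\gamma_I D_I = -D_I\gamma_I$ makes it odd. Consequently, for each $a\in\B$ the invariance condition $\phi(a)\cdot\Dom D_I \subset \Dom D_I$ and the boundedness of the commutator $[D_I,\phi(a)]$ are trivially satisfied.

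The one non-routine step is the compactness condition, for which I will invoke the standard fact that for a finitely generated projective Hilbert module $E$ over a unital $C^*$-algebra $A$, every adjointable endomorphism is automatically compact; i.e.\ $\End_A(E) = \End^0_A(E)$ (writing $E = pA^n$, one has $\End_A(E) = pM_n(A)p \subset \K(A^n)$). Since both $\phi(a)$ and $(1+D_I^2)^{-1/2}$ lie in $\End_A(E)$ (the latter via continuous functional calculus on the bounded self-adjoint operator $D_I$), their product lies in $\End^0_A(E)$, completing the verification. I do not expect any real obstacle; the only conceptual input is this last observation about compact operators on finitely generated projective modules, and the rest is formal unwinding of definitions.
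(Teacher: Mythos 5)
Your proof is correct and follows essentially the same route as the paper: both set up $E = \Gamma(\bE)$ as a finitely generated right Hilbert $C(M)$-module, observe that $\gamma_I$ gives the grading and that the left $B$-action is a graded $*$-homomorphism, note that $D_I$ is a bounded self-adjoint odd endomorphism (hence automatically regular with bounded commutators), and then reduce the resolvent-compactness condition to the fact that on a finitely generated projective Hilbert module over a unital $C^*$-algebra every adjointable endomorphism is compact. The only cosmetic difference is that you justify $\End_A(E) = \End^0_A(E)$ via $E = pA^n$ directly, whereas the paper phrases it in terms of $\End^0_{C(M)}(\Gamma(\bE)) = \Gamma(\End\bE)$ and cites Gracia-Bond\'ia--V\'arilly--Figueroa.
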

\begin{proof}
The algebras $A$ and $B$ are trivially graded $C^*$-algebras, and $E=\Gamma(\bE)$ is a $\Z_2$-graded, finitely generated, right Hilbert $A$-module, with a left action of $B$ that commutes with the (right) action of $A$. The properties of $\gamma_I$ guarantee that all conditions with respect to the grading are satisfied. 
For instance, the condition $(E^{(m)}, E^{(n)} ) \subset A^{(m+n)}$, where $m,n \in \mathbb{Z}_2$, is satisfied, since the condition $\gamma_I^* = \gamma_I$ implies that $\langle s,t \rangle = 0$ as soon as one of the arguments is odd and the other is even. The operator $D_I$ is a bounded, self-adjoint, odd operator by definition (and hence it is automatically regular). The boundedness of $D_I$ implies that $[D_I,b]$ is also bounded for all $b\in \B$. 

For a compact manifold $M$ the compact endomorphisms of the $C(M)$-module $\Gamma(\bE)$ are exactly the sections of the endomorphism bundle $\text{End }\bE$, i.e. $\text{End}^0_{C(M)} (\Gamma(\bE)) = \Gamma(\End(\bE))$ (since $\Gamma(\End(\bE))$ is already unital, the compact endomorphisms of $\Gamma(\bE)$ are actually all the bounded endomorphisms, see e.g.\ \refcite[Proposition 3.9]{GVF01}).
Thus, $b(1 + D_I^2)^{-\frac{1}{2}}$ is compact for all $b \in \B$, because both $(1 + D_I^2)^{-\frac{1}{2}}$ and $b$ are compact.
Hence $({}_B\Gamma(\bE)_{A}, D_I)$ has all the properties mentioned in \cref{defn:Kasp_mod}. 
\end{proof}

\subsection{The product space}

\begin{defn}
\label{defn:ACM}
Let $I^\infty := (\Gamma^\infty(\bB), \Gamma^\infty(\bE), D_I, \gamma_I, J_I)$ be a real even internal space over $M$, with $M$ a compact $4$-dimensional Riemannian spin manifold. Let $\nabla^I$ be a hermitian connection on $\bE$. We define a real even \emph{almost-commutative manifold} to be
\begin{align*}
I^\infty\times_\nabla M := \left( \Gamma^\infty(\bB),\; L^2(\bE\otimes \bS),\; \sD_\bE + D_I\otimes\gamma_5, \gamma_I\otimes\gamma_5, J_I\otimes J_M \right) ,
\end{align*}
where $L^2(\bE\otimes \bS) \simeq \Gamma(\bE)\otimes_{C(M)} L^2(\bS)$ are the $L^2$-sections of the twisted spinor bundle $\bE\otimes \bS$, and $\sD_\bE$ is the twisted Dirac operator
$$
\sD_\bE := \1\otimes_\nabla\sD := \1\otimes\sD -i (\1\otimes c)\circ(\nabla^I\otimes\1)  .
$$
Note that by definition the underlying manifold of an almost-commutative manifold is always assumed to be of dimension $4$.
\end{defn}
We note that our definition of almost-commutative manifolds fits within the slightly more general definition of almost-commutative spectral triples given in \refcite[Definition 2.3]{Cac12}. 

The order of $I^\infty$ and $M$ in the notation $I^\infty \times_\nabla M$ is reversed in comparison with the order of $F$ and $M$ in $M \times F$. The reason is that the order $I^\infty \times_\nabla M$ is more natural from a $KK$-theoretical viewpoint, whereas the notation $M \times F$ for the globally trivial case is quite standard in the literature. In the remainder of this section, we show in detail that an almost-commutative manifold $I^\infty\times_\nabla M$ determines an unbounded Kasparov $B-\C$-module (i.e.\ a spectral triple over $\B$) whose $KK$-class represents the Kasparov product  between the $KK$-classes of the internal space $I^\infty$ and the canonical spectral triple for $M$.

\begin{prop}
\label{prop:spec_trip}
Let $I^\infty = (\Gamma^\infty(\bB), \Gamma^\infty(\bE), D_I, \gamma_I, J_I)$ be a real even internal space over a compact Riemannian spin manifold $M$ of even $KO$-dimension $k$. Let $\nabla^I$ be a hermitian connection on $\bE$ that commutes with the grading $\gamma_I$, satisfies $\nabla^I_\mu J_I = J_I \nabla^I_\mu$, and is such that the induced connection $[\nabla^I,\cdot]$ on $\End \bE$ restricts to a connection on $\bB$.
Then the real even almost-commutative manifold $I^\infty\times_\nabla M$ is a real even spectral triple of $KO$-dimension $4 + k$ (mod $8$).
\end{prop}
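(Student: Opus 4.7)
The plan is to verify the defining conditions of a real even spectral triple one at a time, treating the analytic conditions first, then the grading, then the reality data, and finally checking that the signs produced match those of KO-dimension $4+k$ mod $8$.

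First I would establish the analytic conditions. The twisted Dirac operator $\sD_\bE$ on the hermitian bundle $\bE\otimes\bS$ is essentially self-adjoint and has compact resolvent on the compact spin manifold $M$; this is standard once the tensor connection $\nabla^I\otimes\1+\1\otimes\nabla^\bS$ is hermitian, which follows from the hermiticity assumption on $\nabla^I$. The term $D_I\otimes\gamma_5$ is a bounded self-adjoint endomorphism of $\bE\otimes\bS$, so the sum $D=\sD_\bE+D_I\otimes\gamma_5$ is self-adjoint with compact resolvent. For $a\in\Gamma^\infty(\bB)$, a local calculation gives $[\sD_\bE,a]=-i(\1\otimes c)\circ[\nabla^I,a]$, which is bounded because the hypothesis on $\nabla^I$ places $[\nabla^I,a]$ in $\Gamma^\infty(T^*M\otimes\bB)$; likewise $[D_I\otimes\gamma_5,a]=[D_I,a]\otimes\gamma_5$ is bounded. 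Hence $[D,a]$ is bounded, completing the spectral-triple conditions.

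Next I would verify the grading $\Gamma:=\gamma_I\otimes\gamma_5$. It clearly commutes with $\Gamma^\infty(\bB)$ since $\gamma_I$ does. For the anticommutation $\Gamma D=-D\Gamma$, one checks the two summands separately: since $\gamma_I$ commutes with $\nabla^I$ while $\gamma_5$ anticommutes with both $\sD$ and with Clifford multiplication, one gets $\Gamma\sD_\bE=-\sD_\bE\Gamma$; and $\Gamma(D_I\otimes\gamma_5)=-(D_I\otimes\gamma_5)\Gamma$ follows directly from $\gamma_I D_I=-D_I\gamma_I$ and $\gamma_5^2=1$.

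For reality, I would use $J=J_I\otimes J_M$ and verify the three signs together. The assumption $\nabla^I J_I=J_I\nabla^I$ together with the relation $J_M\sD=\varepsilon'_M J_M\sD$ and the standard relation $J_M c(v)=-\varepsilon'_M c(v)J_M$ (which follows from $\sD J_M=\varepsilon'_M J_M\sD$ via antilinearity) gives $J\sD_\bE=\varepsilon'_M\sD_\bE J$; a direct calculation using $J_I D_I=\varepsilon'_I D_I J_I$ and $J_M\gamma_5=\varepsilon''_M\gamma_5 J_M$ gives $J(D_I\otimes\gamma_5)=\varepsilon'_I\varepsilon''_M(D_I\otimes\gamma_5)J$. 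The two summands are compatible precisely when $\varepsilon'_M=\varepsilon'_I\varepsilon''_M$, which with $M$ in KO-dim $4$ (so $\varepsilon'_M=\varepsilon''_M=1$) becomes $\varepsilon'_I=1$; since $k$ is even, the table forces $\varepsilon'_I=1$ automatically. Hence $JD=\varepsilon'DJ$ with $\varepsilon'=1$. Similarly $J^2=\varepsilon_I\varepsilon_M$ and $J\Gamma=\varepsilon''_I\varepsilon''_M\Gamma J$. Plugging the $M$-values $(\varepsilon_M,\varepsilon'_M,\varepsilon''_M)=(-1,1,1)$ for KO-dim $4$ and consulting the table for each $k\in\{0,2,4,6\}$ then confirms that $(\varepsilon,\varepsilon',\varepsilon'')=(-\varepsilon_I,1,\varepsilon''_I)$ matches the entry for KO-dimension $4+k$ mod $8$.

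Finally, the zero- and first-order conditions: $JbJ^*=J_IbJ_I^*\otimes\1$ commutes with any $a\in\Gamma^\infty(\bB)$ by the zeroth-order condition on $I^\infty$. For the first-order condition, split $[D,a]=-i(\1\otimes c)\circ[\nabla^I,a]+[D_I,a]\otimes\gamma_5$: the first piece commutes with $J_IbJ_I^*\otimes\1$ because $[\nabla^I,a]$ lies in $\Gamma^\infty(T^*M\otimes\bB)$ by hypothesis and elements of $\bB$ commute with $J_I\bB J_I^*$ by the zeroth-order condition on $I^\infty$; the second piece commutes with $J_IbJ_I^*\otimes\1$ by the first-order condition on $I^\infty$. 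The main obstacle I anticipate is simply keeping the sign bookkeeping in the reality calculation straight, in particular getting the signs of $J_M$ versus Clifford multiplication right, and handling the antilinearity of $J$ against the factor $-i$ in the twist part of $\sD_\bE$.
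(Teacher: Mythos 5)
Your proof is correct and follows essentially the same route as the paper's: compute $[\sD_\bE,a]=-i(\1\otimes c)\circ[\nabla^I,a]$, treat $D_I\otimes\gamma_5$ as a bounded self-adjoint perturbation, check the grading and the reality signs against the KO-dimension table, and deduce the order-zero and order-one conditions from those of $I^\infty$. (The paper establishes compact resolvent via the Lichnerowicz--Weitzenb\"ock formula rather than by invoking the standard fact directly, and there is a small typo in your write-up, ``$J_M\sD=\varepsilon'_M J_M\sD$'' should read $\sD J_M=\varepsilon'_M J_M\sD$, but these are inessential.)
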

\begin{proof}
Let us write $D := \sD_\bE + D_I\otimes\gamma_5$. We need to show that $[D,a]$ is bounded for all $a\in\Gamma^\infty(\bB)$. Since $D_I$ is bounded itself, we need only check this for the twisted Dirac operator $\sD_\bE$, and we find
$$
[\sD_\bE,a] = -i c([\nabla^I,a]) ,
$$
where, with some abuse of notation, we write $c(T\otimes\alpha) = T\otimes c(\alpha)$ for $T\in\Gamma^\infty(\End\bE)$ and $\alpha\in\Omega^1(M)$. Hence for smooth $a$ the commutator $[\sD_\bE,a]$ indeed acts as a bounded operator on $L^2(\bE\otimes\bS)$. Furthermore we need to show that $D$ has compact resolvent, and (as $M$ is compact) for this it is sufficient to show that $D^2$ (and hence $D$) is elliptic. The Lichnerowicz-Weitzenb\"ock formula shows that the square of the twisted Dirac operator $\sD_\bE$ is a generalised Laplacian, and hence is elliptic. The bounded (zeroth-order) perturbation $\sD_\bE\to\sD_\bE+D_I\otimes\gamma_5$ does not affect this ellipticity. Hence $I^\infty\times_\nabla M$ is indeed a spectral triple. 

Given the grading operators $\gamma_I$ and $\gamma_5$, it is straightforward to check that $D(\gamma_I\otimes\gamma_5)=-(\gamma_I\otimes\gamma_5)D$, provided that $[\nabla^I,\gamma_I]=0$. 

Given the real structures $J_I$ and $J_M$, the operator $J_I\otimes J_M$ is anti-unitary and satisfies
\begin{gather}
\label{eq:KO-twist}
(J_I\otimes J_M)^2 = -\varepsilon , \qquad\qquad D(J_I\otimes J_M) = (J_I\otimes J_M)D , \nonumber\\ 
(J_I\otimes J_M)(\gamma_I\otimes\gamma_5) = \varepsilon''(\gamma_I\otimes\gamma_5)(J_I\otimes J_M) ,
\end{gather}
where the signs $\varepsilon,\varepsilon''$ are determined by the KO-dimension $k$ of $J_I$. The first equality in \cref{eq:KO-twist} is immediate from $J_M^2=-1$ and $J_I^2=\varepsilon$. Using the relations
\begin{align*}
J_M\sD&=\sD J_M , & \gamma^\mu J_M &= -J_M \gamma^\mu , & \gamma_5 J_M &= J_M\gamma_5 , & J_ID_I&=D_IJ_I , & \nabla^I_\mu J_I &= J_I \nabla^I_\mu ,
\end{align*}
the second equality in \cref{eq:KO-twist} is checked by a local calculation (writing $(\1\otimes c)\circ(\nabla^I\otimes\1) = \nabla^I_\mu\otimes\gamma^\mu$): 
\begin{align*}
D(J_I\otimes J_M)(s\otimes\psi) &= (J_Is)\otimes(\sD J_M\psi) -i (\nabla^I_\mu J_Is)\otimes(\gamma^\mu J_M\psi) + (D_IJ_Is)\otimes(\gamma_5 J_M\psi) \\
&= (J_Is)\otimes(J_M\sD\psi) + i (J_I\nabla^I_\mu s)\otimes(J_M\gamma^\mu\psi) + (J_ID_Is)\otimes(J_M\gamma_5\psi) \\
&= (J_Is)\otimes(J_M\sD\psi) - (J_I \nabla^I_\mu s)\otimes(J_M i\gamma^\mu\psi) + (J_ID_Is)\otimes(J_M\gamma_5\psi) \\
&=(J_I\otimes J_M)D(s\otimes\psi) .
\end{align*}
The third equality in \cref{eq:KO-twist} immediately follows from $[J_M,\gamma_5]=0$ and $J_I\gamma_I=\varepsilon''\gamma_IJ_I$. From the values of $-\varepsilon$ and $\varepsilon''$ it is immediate that the $KO$-dimension of $I^\infty \times M$ should be $4 +k$ (mod $8$) (see the table in \cref{defn:spectral_triple}).

The zeroth-order condition on $I^\infty \times_\nabla M$ is immediate from the zeroth-order condition on $I^\infty$. Moreover, 
\begin{align*}
[[\sD_\bE,a], JbJ^* ] = -i [c( [\nabla^I,a]), JbJ^*] = -i c([[\nabla^I,a],JbJ^*]) = 0,
\end{align*}
because, by assumption, $[\nabla^I,a] \in \Gamma^\infty(\bB) \otimes_{C^\infty(M)} \Omega^1(M)$, which commutes with $JbJ^*$. Together with the first-order condition on $D_I$, this implies that $D$ satisfies the first-order condition.
\end{proof}

For a real spectral triple $T = (\A,\mH,D,J)$, the gauge group is defined in \refcite[Definition 2.5]{vdDvS12} as
\begin{align}
\label{eq:gaugegroup_ST}
\G(T) := \big\{ uJuJ^* \mid u\in\mU(\A) \big\} \simeq \mU(\A) / \mU(\A_J) ,
\end{align}
where the central subalgebra $\A_J$ is defined as $\A_J := \{a\in\A \mid aJ = Ja^*\}$. For the above almost-commutative manifold, we therefore obtain the gauge group
$$
\G(I^\infty\times_\nabla M) = \mU(\B) / \mU(\B_J) ,
$$
for the real structure $J = J_I\otimes J_M$. However, since $\B_J \simeq \B_{J_I}$, we find that the gauge group of the almost-commutative manifold is completely determined by the internal space, and we write
\begin{align}
\label{eq:gaugegroup_IS}
\G(I^\infty\times_\nabla M) \simeq \G(I^\infty) := \{ uJ_IuJ_I^* \mid u\in\mU(\B) \} .
\end{align}

\subsection{The Kasparov product}
We now show that the product $I^\infty \times_\nabla M$ is an unbounded representative for the Kasparov product of the $KK$-classes of $I^\infty$ and the canonical spectral triple for $M$. We first prove this for the cases where $D_I = 0$, and then show that the presence of $D_I$ is irrelevant at the level of $KK$-classes. 

Let $I^\infty$ be an internal space over $M$, where $D_I=0$, and consider the unbounded Kasparov module $I := ({}_BE_A,0)$, where $E = \Gamma(\bE)$.
We know from \cref{prop:spec_trip} that $I^\infty\times_\nabla M = (\B, L^2(\bE\otimes\bS), D)$ is a spectral triple, which thus yields an unbounded Kasparov module $I\times_\nabla M = ({}_B L^2(\bE\otimes\bS)_\C, D) \in \Psi(B,\mathbb{C})$ (\cref{defn:Kasp_mod}).

\begin{prop}
\label{prop:Kasp_prod_0}
 The unbounded Kasparov module $I\times_\nabla M$ represents the Kasparov product of (the classes of) $I\in\Psi(B,A)$ and $({}_AL^2(\bS)_\C,\sD)\in\Psi(A,\C)$. 
\end{prop}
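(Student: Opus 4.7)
The plan is to apply Kucerovsky's theorem (\cref{thm:Kucerovsky}) to the two unbounded Kasparov modules $I = ({}_BE_A, D_1 = 0)$ and $({}_A L^2(\bS)_\C, D_2 = \sD)$, with candidate Kasparov product $({}_B L^2(\bE\otimes\bS)_\C, D = \sD_\bE)$. The module $E^1\hot_A E^2 = \Gamma(\bE)\otimes_{C(M)} L^2(\bS) \simeq L^2(\bE\otimes\bS)$ is the correct interior tensor product, and we have already verified in \cref{prop:spec_trip} that the triple on the right is an unbounded Kasparov $B$--$\mathbb{C}$-module (in the case $D_I=0$). Hence what remains is to verify the three hypotheses of \cref{thm:Kucerovsky}.

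Since $D_1 = 0$, condition (ii) of Kucerovsky's theorem is automatic: $\Dom(D)\subset E = \Dom(0)$. Condition (iii) is equally immediate, since $D_1\hot 1 = 0$ makes the left-hand side of the inequality identically zero, so any $K\leq 0$ works. Thus the only genuine work lies in verifying the commutator condition (i): for $e_1$ in a suitable dense subspace of $\E = \Gamma^\infty(\bE) \subset E$, the off-diagonal operator built from $T_{e_1}\colon L^2(\bS) \to L^2(\bE\otimes\bS)$, $\psi \mapsto e_1\otimes\psi$, must have bounded graded commutator with $\mathrm{diag}(\sD_\bE,\sD)$ on $\Dom(\sD_\bE)\oplus\Dom(\sD)$.

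I would take $e_1\in\Gamma^\infty(\bE)$, which is dense in $\Gamma(\bE)$. Unpacking the matrix commutator, everything reduces to showing that $\sD_\bE\circ T_{e_1} - T_{e_1}\circ\sD$ (and its adjoint) extends to a bounded operator. Using the definition $\sD_\bE = \1\otimes\sD - i(\1\otimes c)\circ(\nabla^I\otimes\1)$, a direct computation gives
\begin{align*}
(\sD_\bE\circ T_{e_1} - T_{e_1}\circ\sD)(\psi) = -i(\1\otimes c)\bigl((\nabla^I e_1)\otimes\psi\bigr),
\end{align*}
for any $\psi\in\Dom(\sD)$. Writing $\nabla^I e_1 = \sum_\alpha s_\alpha\otimes\omega_\alpha$ as a finite sum with $s_\alpha\in\Gamma^\infty(\bE)$ and $\omega_\alpha\in\Omega^1(M)$, this becomes $-i\sum_\alpha s_\alpha\otimes c(\omega_\alpha)\psi$, where Clifford multiplication by a smooth one-form is a bounded operator on $L^2(\bS)$ and left multiplication by $s_\alpha\in\Gamma(\bE)$ is a bounded operator on the Hilbert $C(M)$-module. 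An analogous calculation handles $T_{e_1}^*$. Since the formula involves no derivatives of $\psi$, boundedness follows.

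The main subtlety (and essentially the only real content) is thus to set up the interior tensor product correctly so that the formula $\sD_\bE(e_1\otimes\psi) = e_1\otimes\sD\psi - i(\1\otimes c)(\nabla^I e_1\otimes\psi)$ actually holds on the algebraic tensor product of smooth sections; once this identity is in place, the bound is immediate. With (i)--(iii) verified, Kucerovsky's theorem yields that $I\times_\nabla M$ represents the Kasparov product of $I$ and the canonical spectral triple $({}_AL^2(\bS)_\C,\sD)$, as claimed.
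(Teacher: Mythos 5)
Your proposal follows the paper's argument essentially verbatim: apply Kucerovsky's theorem, dismiss conditions (ii) and (iii) as trivial because $D_1 = 0$, and reduce condition (i) to the boundedness of the commutators $\sD_\bE T_e - T_e\sD$ and $\sD T_e^* - T_e^*\sD_\bE$ on smooth sections. The one place you gloss over slightly is the claim that the $T_{e_1}^*$ commutator is handled by an ``analogous calculation'': the paper's computation
\begin{align*}
(\sD T_e^* - T_e^* \sD_\bE)(f\otimes\psi) = -i\,c\bigl((\nabla^I e \mid f)\bigr)\psi
\end{align*}
actually requires the compatibility of $\nabla^I$ with the hermitian form, i.e.\ $(\nabla^I e, f) + (e, \nabla^I f) = d(e,f)$, which is not just a sign-flipped copy of the first computation — it is worth writing out that step rather than leaving it implicit.
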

\begin{proof}
It suffices to check the conditions of \cref{thm:Kucerovsky}. Since $D_I =0 $, conditions ii) and iii) are trivial, and we only need to check Condition i). For all $e$ in a dense subspace of $BE =E$, we need to check boundedness of
\begin{align*}
&D T_e - T_e \sD && \text{on } \Dom(\sD) \subset L^2(\bS) , \\
&\sD T_e^* - T_e^* D && \text{on } \Dom(D) \subset E\otimes_A L^2(\bS) \simeq L^2(\bE\otimes\bS) ,
\end{align*}
where $D = \sD_\bE = -i (\1\otimes c)\circ(\1\otimes\nabla^\bS + \nabla^I\otimes\1)$. For $\psi\in\Dom(\sD)$ we obtain
\begin{align*}
(DT_e-T_e\sD)\psi &= -i (\1\otimes c)\circ(\1\otimes\nabla^\bS + \nabla^I\otimes\1) e\otimes\psi - e\otimes \sD\psi = -i c(\nabla^Ie)\otimes\psi ,
\end{align*}
which is indeed bounded for all $e$ in the dense subspace $\E$. Next, for $f\otimes\psi\in\Gamma^\infty(\bB \otimes \bS)\subset\Dom(D)$ we obtain
\begin{align*}
(\sD T_e^* - T_e^* D) (f\otimes\psi) &= \sD (e|f) \psi - (e|f) \sD\psi +i \big(e|c(\nabla^If)\big) \psi = -i c\big( \nabla^Ie|f\big) \psi ,
\end{align*}
where we have used the compatibility of the connection $\nabla^I$ with the hermitian form $(\cdot|\cdot)_\A$, and so $\sD T_e^* - T_e^* D$ is a zeroth-order differential operator for smooth $e$. 
\end{proof}

To prove a similar result for the case where $D_I \neq 0$, we use the following two lemmas.

\begin{lem}
\label{lem:bdd_fgp}
If ${}_{\phi(B)} E_A$ is finitely generated projective as a right $A$-module, then for any self-adjoint, odd endomorphism $F\in\End_A(E)$, the unbounded Kasparov $B-A$-modules $({}_{\phi(B)} E_A, F)$ and $({}_{\phi(B)} E_A, 0)$ represent the same class in $KK(B,A)$.  
\end{lem}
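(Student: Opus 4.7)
The plan is to pass from the unbounded to the bounded picture and exhibit an explicit operator homotopy between $b(F) = F(1+F^2)^{-1/2}$ and $0$. The key enabling fact is that when $E_A$ is finitely generated projective, every adjointable endomorphism is compact, so $\End_A(E) = \End^0_A(E)$; the bounded Kasparov conditions then become essentially automatic.

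First, I would observe that because $F$ is a bounded, self-adjoint, odd endomorphism, so is $tF$ for each $t\in[0,1]$, and thus $F_t := b(tF) = tF(1+t^2F^2)^{-1/2} \in \End_A(E)$ is well defined via continuous functional calculus. The family $\{F_t\}_{t\in[0,1]}$ is norm-continuous in $t$, each $F_t$ is self-adjoint and odd, and $\|F_t\|\leq 1$.

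Second, I would verify that $({}_{\phi(B)}E_A, F_t)$ is a bounded Kasparov $B$--$A$-module for every $t$. The conditions requiring $[F_t,\phi(b)]$, $(F_t^2-1)\phi(b)$, and $(F_t^*-F_t)\phi(b)$ to be compact endomorphisms of $E$ for every $b\in B$ all hold trivially: the third vanishes by self-adjointness, and the first two are endomorphisms of a finitely generated projective module, hence automatically compact. Norm-continuity of $t\mapsto F_t$ then means this family is an operator homotopy in the sense of Kasparov, so $({}_{\phi(B)}E_A, F_1) = ({}_{\phi(B)}E_A, b(F))$ and $({}_{\phi(B)}E_A, F_0) = ({}_{\phi(B)}E_A, 0)$ represent the same class in $KK(B,A)$.

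Finally, by the Baaj--Julg construction, the unbounded-to-bounded map sends $({}_{\phi(B)}E_A, F)$ to $({}_{\phi(B)}E_A, b(F))$ and sends $({}_{\phi(B)}E_A, 0)$ to itself, so the two unbounded Kasparov modules define the same class in $KK(B,A)$. The main obstacle is purely definitional bookkeeping; the only mathematical content is the identification $\End_A(E) = \End^0_A(E)$ for finitely generated projective $E$, which makes every bounded perturbation a ``compact'' perturbation at the level of Kasparov modules.
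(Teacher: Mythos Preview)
Your proof is correct and follows essentially the same approach as the paper: both rely on the key observation that $\End_A(E) = \End^0_A(E)$ for finitely generated projective $E$, so that the bounded Kasparov conditions are automatic, and then produce an operator homotopy from $b(F)$ to $0$. The only cosmetic difference is that the paper uses the linear homotopy $t\mapsto t\,b(F)$, whereas you use $t\mapsto b(tF)$; both are norm-continuous paths of odd self-adjoint contractions, so either works.
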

\begin{proof}
Since $E$ is a finitely generated projective $A$-module, all bounded endomorphisms are in fact compact, i.e.\ $\End_A(E) = \End_A^0(E)$. The equivalence of the compact operators $0$ and $b(F) = F(1+F^2)^{-\frac12}$ is then simply obtained via the operator homotopy $t \mapsto tb(F)$, for $t\in[0,1]$. Hence the modules $({}_{\phi(B)} E_A, b(F))$ and $({}_{\phi(B)} E_A, 0)$ are equivalent bounded Kasparov $B-A$-modules. 
\end{proof}

\begin{lem}[{see also \refcite[Corollary 17]{Kuc97}}]
\label{lem:boundedperturbation}
Let $({}_{\phi(B)}E_A, D) \in \Psi(B,A)$ and let $T \in \End_A(E)$ be self-adjoint and odd. Then
\begin{enumerate}
\item $({}_{\phi(B)} E_A, D + T)$ is also an unbounded Kasparov module in $\Psi(B,A)$, and;
\item $({}_{\phi(B)} E_A, D + T)$ and $({}_{\phi(B)} E_A, D)$ represent the same class in $KK(B,A)$.
\end{enumerate}
\end{lem}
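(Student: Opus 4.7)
The plan is to verify the two assertions in turn, using throughout that a bounded self-adjoint odd perturbation should preserve both the domain and all of the analytic structure of an unbounded Kasparov operator.

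For part (1), I will run through the defining conditions of \cref{defn:Kasp_mod} for $D+T$. Self-adjointness and regularity of $D+T$ follow from the standard fact that adding a bounded self-adjoint (adjointable) operator to a self-adjoint regular operator preserves both properties, with unchanged domain $\Dom(D+T)=\Dom D$. Oddness is inherited since both summands are odd. For $a$ in the dense sub-algebra $\A\subset B$, the condition $\phi(a)\cdot\Dom(D+T)\subset\Dom(D+T)$ is immediate from $\Dom(D+T)=\Dom D$, and the graded commutator decomposes as $[D+T,\phi(a)]_\pm = [D,\phi(a)]_\pm + [T,\phi(a)]_\pm$, both of which are bounded. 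The only non-trivial point is compactness of $\phi(a)(1+(D+T)^2)^{-1/2}$. Here I will observe that the graph norms of $D$ and $D+T$ are equivalent: for $\psi\in\Dom D$,
\[
\|(1+D^2)^{1/2}\psi\|\;\leq\;\|(1+(D+T)^2)^{1/2}\psi\| + \|T\|\,\|\psi\|,
\]
and symmetrically. It follows that $(1+D^2)^{1/2}(1+(D+T)^2)^{-1/2}$ extends to a bounded adjointable endomorphism of $E$, so that
\[
\phi(a)(1+(D+T)^2)^{-1/2} \;=\; \bigl[\phi(a)(1+D^2)^{-1/2}\bigr]\bigl[(1+D^2)^{1/2}(1+(D+T)^2)^{-1/2}\bigr]
\]
is the product of a compact endomorphism (by hypothesis) and a bounded endomorphism, hence compact.

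For part (2), I will build the straight-line interpolation $D_s := D + sT$ for $s\in[0,1]$. Since $sT$ is bounded, self-adjoint and odd for every $s$, part (1) applies to each $D_s$ and produces an element of $\Psi(B,A)$. Passing to the bounded transforms $b(D_s) = D_s(1+D_s^2)^{-1/2}$ should yield a norm-continuous path of bounded Kasparov module operators joining $b(D)$ to $b(D+T)$, and this is exactly an operator homotopy in $KK(B,A)$.

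I expect the main technical obstacle to be the norm-continuity of $s\mapsto b(D_s)$ in part (2): the bounded transform is not continuous in the operator-norm topology for arbitrary self-adjoint perturbations, and the argument relies on the fact that the perturbation is bounded (so that the graph norms stay uniformly equivalent on $[0,1]$) together with a functional-calculus estimate. Rather than reproducing this estimate in detail, I will invoke the precise statement of Kucerovsky \refcite[Corollary 17]{Kuc97}, which provides exactly this norm-continuous path for bounded self-adjoint odd perturbations of an unbounded Kasparov operator. The compactness step in part (1) is then the only genuinely new calculation, and it rests entirely on the graph-norm equivalence displayed above.
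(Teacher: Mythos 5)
Your part (1) is in the same spirit as the paper's but uses a different resolvent: the paper works with $(\pm i + D + T)^{-1}$ and factors
\[
\phi(b)(\pm i+D+T)^{-1} \;=\; \bigl[\phi(b)(\pm i+D)^{-1}\bigr]\bigl[(\pm i+D)(\pm i+D+T)^{-1}\bigr],
\]
while you factor through $(1+D^2)^{1/2}(1+(D+T)^2)^{-1/2}$. Both are correct, but the paper's choice pays off: the second factor in the paper's formula equals $1 - T(\pm i+D+T)^{-1}$, so it is manifestly an \emph{adjointable} endomorphism, and compactness of the product is immediate from the ideal property. In your version you assert that $(1+D^2)^{1/2}(1+(D+T)^2)^{-1/2}$ ``extends to a bounded adjointable endomorphism,'' but the graph-norm estimate you display only gives boundedness; in a Hilbert module setting boundedness does not imply adjointability, and without adjointability you cannot invoke the ideal property of compacts. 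The gap is fixable (one checks the candidate adjoint $(1+(D+T)^2)^{-1/2}(1+D^2)^{1/2}$ is also bounded on $\Dom D$ and extends), but you should say this explicitly. Also note the paper cites the Kato--Rellich theorem for Hilbert modules (\refcite[Theorem 4.5]{KL12}) for self-adjointness and regularity of $D+T$, which is not quite a triviality; your phrase ``standard fact'' should point to such a reference.

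Your part (2) takes a genuinely different route from the paper, and is weaker as a proof. The paper proves the claim self-containedly by exhibiting $({}_{\phi(B)}E_A, D+T)$ as an unbounded representative of the Kasparov product $[({}_{\phi(B)}E_A,D)] \otimes_A [({}_AA_A,0)]$ and verifying Kucerovsky's three conditions from \cref{thm:Kucerovsky} directly: condition (i) by a short boundedness computation with $T_e$, condition (ii) by domain equality, and condition (iii) via the estimate $((D+T)e,De) + (De,(D+T)e) \geq -\|T\|^2(e,e)$. You instead propose a straight-line homotopy $D_s = D + sT$ and then, for the norm-continuity of $s \mapsto b(D_s)$ which is the entire content of the step, you ``invoke the precise statement of Kucerovsky [Corollary 17].'' That corollary essentially \emph{is} the lemma being proved (the paper cites it as ``see also''), so your part (2) is closer to a citation than a proof. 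If you want a genuine homotopy argument you would need to supply the norm-continuity estimate for the bounded transform yourself (e.g.\ via the integral formula $b(D_s) = \tfrac{2}{\pi}\int_0^\infty D_s(1+\lambda^2+D_s^2)^{-1}\,d\lambda$ and a resolvent-difference bound); otherwise you should simply follow the paper's Kucerovsky-conditions route, which is both self-contained and short.
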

\begin{proof}
\begin{enumerate}
\item 
Since $T$ is bounded and self-adjoint, it follows from the Kato-Rellich theorem for Hilbert modules (see \refcite[Theorem 4.5]{KL12}) that the sum $D + T$ remains self-adjoint \emph{and} regular.
The only non-trivial thing to prove is that $D+T$ has compact resolvent, i.e.\ $\phi(b)(1+(D+T)^2)^{-1/2} \in\End_A^0(E)$ for all $b\in\B\subset B$. This is equivalent to showing that $\phi(b)(\pm i+D+T)^{-1}$ is compact. The operator $(\pm i+D+T)^{-1}$ maps $E$ into $\Dom (D+T) = \Dom D$, so that $(\pm i+D)(\pm i+D+T)^{-1}$ is a well-defined bounded operator on $E$. From
$$
\phi(b)(\pm i+D+T)^{-1} = \phi(b)(\pm i+D)^{-1}(\pm i+D)(\pm i+D+T)^{-1}
$$
we then see that $\phi(b)(\pm i+D+T)^{-1}$ is compact. 
\item The idea is to prove that $({}_{\phi(B)} E_A, D + T) \in \Psi(B,A)$ represents the Kasparov product $[({}_{\phi(B)} E_A, D)] \otimes_A [({}_A A_A, 0)]$. It is enough to show that all the conditions in \cref{thm:Kucerovsky} are satisfied. First of all,
\begin{align*}
 A\ni a \mapsto (D+T)T_e(a) = ((D + T)e)a, \quad (f \otimes a) \mapsto T_e^* (D + T)(f \otimes a) = ((D+T)e,f)_A a,
\end{align*}
are both clearly bounded on $A$ and $\Dom (D + T)= \Dom D$, respectively, for all $e\in \Dom D$. 
In particular, this holds for all $e\in\phi(\B)\Dom D$, which is a dense subset of $\phi(B)E$. 
This proves that Condition (i) in \cref{thm:Kucerovsky} is satisfied.

Since $\text{Dom }D = \text{Dom}(D + T)$, Condition (ii) is also satisfied. For the final condition, a small calculation shows that
\begin{align*}
&( (D+T)e, De ) + ( De, (D+T)e  ) \\
&\qquad= ( (D+T)e, (D+T)e ) - ((D+T)e, Te) + (De, (D+T)e) \\ 
&\qquad= ( (D+T)e, (D+T)e )  - (Te,Te) + (De,De) \geq - \|T\|^2 (e,e),
\end{align*}
for all $e \in \text{Dom }D$, since $( (D+T)e, (D+T)e )$ and $(De,De)$ are positive. \qedhere
\end{enumerate}
\end{proof}

\begin{coro}
The unbounded Kasparov module $I\times_\nabla M = ({}_BE\otimes_AL^2(\bS)_\C, \1\otimes_\nabla\sD + D_I \otimes\gamma_5)$ represents the Kasparov product of $I = ({}_BE_A, D_I)$ with $({}_AL^2(\bS)_\C, \sD)$.
\end{coro}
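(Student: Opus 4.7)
The plan is to reduce to the case $D_I = 0$ already established in \cref{prop:Kasp_prod_0}, by using the two preparatory lemmas to bracket $D_I$ on both sides of the Kasparov product. The argument is essentially a two-step perturbation: first we remove $D_I$ from the internal factor without changing its $KK$-class, then we reinsert it as a bounded perturbation of the product operator without changing the $KK$-class of the product.

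First I would note that $E = \Gamma(\bE)$ is finitely generated projective as a right $A = C(M)$-module (by the Serre--Swan theorem, since $M$ is compact), and that $D_I \in \Gamma(\End(\bE)) = \End_A(E)$ is a self-adjoint odd endomorphism. Hence \cref{lem:bdd_fgp} applies and gives
\begin{align*}
[({}_B E_A, D_I)] = [({}_B E_A, 0)] \quad \text{in } KK(B,A).
\end{align*}
Next, \cref{prop:Kasp_prod_0} identifies $({}_B E \otimes_A L^2(\bS)_\C,\, \1 \otimes_\nabla \sD)$ as an unbounded representative of the Kasparov product of $[({}_B E_A, 0)]$ and $[({}_A L^2(\bS)_\C, \sD)]$.

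Finally, $D_I \otimes \gamma_5$ is a bounded, self-adjoint, odd endomorphism of the Hilbert module $E \otimes_A L^2(\bS)$, since $D_I$ acts as a bounded endomorphism on $E$ and $\gamma_5$ is a bounded self-adjoint operator on $L^2(\bS)$, and oddness/self-adjointness follow from the corresponding properties of $D_I$ and $\gamma_5$. Applying \cref{lem:boundedperturbation} to the perturbation $\1 \otimes_\nabla \sD \;\mapsto\; \1 \otimes_\nabla \sD + D_I \otimes \gamma_5$ yields
\begin{align*}
\big[\big({}_B E \otimes_A L^2(\bS)_\C,\, \1 \otimes_\nabla \sD + D_I \otimes \gamma_5\big)\big] = \big[\big({}_B E \otimes_A L^2(\bS)_\C,\, \1 \otimes_\nabla \sD\big)\big].
\end{align*}
Chaining these three equalities, and using that the Kasparov product is well-defined on $KK$-classes, gives $[I \times_\nabla M] = [I] \otimes_A [({}_A L^2(\bS)_\C, \sD)]$, as required.

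I do not anticipate any real obstacle, as all the technical work has been done in \cref{prop:Kasp_prod_0}, \cref{lem:bdd_fgp} and \cref{lem:boundedperturbation}. The only point to verify carefully is that $D_I \otimes \gamma_5$ is indeed an honest bounded endomorphism of the internal tensor product module (as opposed to only being densely defined), which follows immediately since $D_I$ is a section of $\End(\bE)$ and $\gamma_5$ commutes with the left $A$-action needed to form the interior tensor product.
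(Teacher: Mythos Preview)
Your proposal is correct and follows essentially the same approach as the paper's own proof: both use \cref{lem:bdd_fgp} to remove $D_I$ from the internal module, then \cref{prop:Kasp_prod_0} to identify the product in the $D_I=0$ case, and finally \cref{lem:boundedperturbation} to reinsert $D_I\otimes\gamma_5$ as a bounded perturbation. Your version is slightly more explicit about verifying the hypotheses (finite projectivity of $E$, boundedness and parity of $D_I\otimes\gamma_5$), but the logical structure is identical.
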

\begin{proof}
By \cref{lem:bdd_fgp} we know that $({}_BE_A, D_I)$ and $({}_BE_A, 0)$ represent the same Kasparov class. From \cref{prop:Kasp_prod_0} it then follows that the cycle $({}_BE\otimes_A L^2(\bS)_\C, \sD_\bE)$ also represents the Kasparov product of $({}_BE_A, D_I)$ with $({}_AL^2(\bS)_\C, \sD)$. According to \cref{lem:boundedperturbation} the cycle $({}_BE\otimes_A L^2(\bS)_\C, \sD_\bE + D_I \otimes \gamma_5)$ represents the same Kasparov class as $({}_BE\otimes_A L^2(\bS)_\C, \sD_\bE)$, so it also represents this Kasparov product.
\end{proof}

\begin{remark}
\begin{enumerate}
 \item The construction of $I\times_\nabla M$ via Kasparov products fits naturally in the framework of Mesland's category of spectral triples \cite{Mes14}, where the internal space $I^\infty $ with the connection $\nabla$ can be seen as (a representative of) a morphism from the canonical triple for $M$ to the almost-commutative manifold $I^\infty\times_\nabla M$. 
 \item As is clear from the above discussion, the presence of the operator $D_I$ (or $D_I \otimes \gamma_5$) is completely irrelevant on the level of $KK$-classes. In this sense the $KK$-equivalence is too strong for our purposes, because in the models under consideration the presence of the operator $D_I$ certainly does matter. We will describe in \cref{sec:gauge_theory} how this operator plays the role of a `mass matrix' for the elementary fermions of the gauge theory, and gives rise to the Higgs field in the noncommutative Standard Model (see also \cref{sec:ex_ED} for a concrete example of $D_I$ as a mass matrix). 
\end{enumerate}
\end{remark}

\section{Principal modules}
\label{sec:gauge_modules}

We would like to describe a classical gauge theory on a manifold $M$ by considering an almost-commutative manifold $I^\infty\times_\nabla M$. For this purpose we now restrict our attention to a special case of internal spaces, which we call principal modules.

In \cref{ssct:classificationrealfinite} we first recall (part of) the classification of finite-dimensional real spectral triples that  has been done by Krajewski \cite{Kra98} and by Paschke and Sitarz \cite{PS98}. 
In \cref{ssct:principalmodules} we then define the notion of principal modules, and we show that, when the base manifold (which is of arbitrary dimension) is simply connected, the gauge group of a principal module (as defined for internal spaces in \cref{eq:gaugegroup_IS}) is isomorphic to the classical notion of the gauge group of a principal fibre bundle (as defined in \cref{defn:gaugegroup}). 

\subsection{Real finite spectral triples}
\label{ssct:classificationrealfinite}

Finite-dimensional real spectral triples have been classified 
for the case of KO-dimension $0$ \cite{Kra98,PS98}. With similar arguments, this can be generalised to arbitrary KO-dimension \cite{Sui14}. In the following theorem we give the result for complex algebras, while also setting the matrix $D_F = 0$. Below $\cc$ denotes complex conjugation of the coefficients with respect to the standard basis of $\mathbb{C}^{m_{ij}}$. 
\begin{thm}
\label{thm:formfinspec}
 Let $F := (A_F, \mathcal{H}_F,0,J_F)$ be a real finite spectral triple over a complex $*$-algebra $A_F$. Up to unitary equivalence, this triple is of the form
\begin{align*}
 A_F &= \bigoplus_{i=1}^l M_{N_i}(\mathbb{C}) , & \mathcal{H}_F &= \bigoplus_{i,j=1}^l \mathcal{H}_{ij} , & \mH_{ij} &:= \bigoplus_{i,j=1}^l M_{N_i,N_j}(\mathbb{C}) \otimes \mathbb{C}^{m_{ij}} , 
\end{align*} 
such that $m_{ij} = m_{ji}$, and the inner product on each copy of $M_{N_i,N_j}(\mathbb{C})$ is given by $\langle t_1, t_2\rangle = \Tr(t_1^* t_2 )$.
If $J_F^2 = \varepsilon$, then $J_F$ acts on $\mathcal{H}_{ij} \oplus \mathcal{H}_{ji}$, $(i<j)$, as 
\begin{align*}
 \left( \begin{array}{cc} 0 & \varepsilon (\cdot)^* \\ (\cdot)^* & 0   \end{array} \right) \otimes (\Id_{m_{ij}} \circ \cc).
\end{align*}
If $J_F^2 = 1$, the real structure $J_F$ acts on $\mathcal{H}_{ii} \simeq M_{N_i}(\mathbb{C}) \otimes \mathbb{C}^{m_{ii}}$ as  
\begin{align*}
(\cdot)^* \otimes (\Id_{m_{ii}}\circ \cc),
\end{align*}
If $J_F^2 = -1$, then $m_{ii}$ is even and $J_F$ acts on $(M_{N_i}(\mathbb{C}) \oplus M_{N_i}(\mathbb{C})) \otimes \mathbb{C}^{\frac{m_{ii}}{2}}$ as
\begin{align*}
\left(  \begin{array}{cc} 0 & - (\cdot)^* \\ (\cdot)^* & 0 \end{array} \right) \otimes ( \Id_{\frac{m_{ii}}{2}}\circ \cc). 
\end{align*}
The different copies of $M_{N_i,N_j}(\mathbb{C})$ (with respect to the above decomposition) in $\mathcal{H}_{ij}$ are denoted by $\mathcal{H}_{ij}^\alpha$, where $1 \leq\alpha \leq m_{ij}$.
\end{thm}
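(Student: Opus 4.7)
The plan is to proceed in four stages: structure of the algebra, bimodule decomposition of $\mathcal{H}_F$, normalisation of the inner product, and case analysis of the real structure $J_F$.

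First, $A_F$ is a finite-dimensional complex $*$-algebra, so Artin--Wedderburn immediately gives $A_F\simeq \bigoplus_{i=1}^l M_{N_i}(\mathbb{C})$. The zeroth-order condition $[a,J_F b^* J_F^*]=0$ then turns $\mathcal{H}_F$ into an $A_F$-bimodule via the right action $v\cdot a := J_F a^* J_F^{-1} v$, which commutes with the given left action. Equivalently, $\mathcal{H}_F$ is a left module for the semisimple algebra $A_F\otimes A_F^{\op}\simeq\bigoplus_{i,j}M_{N_i}(\mathbb{C})\otimes M_{N_j}(\mathbb{C})^{\op}$, whose simple modules are the $M_{N_i,N_j}(\mathbb{C})$ with the evident bimodule structure. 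Hence one obtains a canonical isotypic decomposition $\mathcal{H}_F\simeq\bigoplus_{i,j}M_{N_i,N_j}(\mathbb{C})\otimes\mathbb{C}^{m_{ij}}$ for some multiplicities $m_{ij}\in\mathbb{N}$.

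To fix the inner product up to unitary equivalence, I would use that the $*$-compatibility of both actions yields $\langle a v b, w\rangle = \langle v, a^* w b^*\rangle$; by Schur's lemma, any inner product on a single irreducible copy of $M_{N_i,N_j}(\mathbb{C})$ with this property must be a positive multiple of $\Tr(t_1^* t_2)$. A unitary change of basis of the multiplicity space $\mathbb{C}^{m_{ij}}$ diagonalises the resulting Gram matrix, and a rescaling of basis vectors absorbs the positive eigenvalues, leaving $\langle t_1, t_2\rangle = \Tr(t_1^* t_2)$ on every copy.

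Next, since $J_F(avb) = b^*(J_F v)a^*$, the real structure $J_F$ maps $\mathcal{H}_{ij}$ anti-linearly and bijectively onto $\mathcal{H}_{ji}$, so $m_{ij}=m_{ji}$. For $i\neq j$, any anti-linear intertwiner $\tau\colon M_{N_i,N_j}(\mathbb{C})\to M_{N_j,N_i}(\mathbb{C})$ satisfying $\tau(avb)=b^*\tau(v)a^*$ is of the form $\tau(v)=\mu v^*$ for some $\mu\in\mathbb{C}$: the \emph{linear} map $v\mapsto (\tau(v))^*$ is an $M_{N_i}$--$M_{N_j}$-bimodule endomorphism of $M_{N_i,N_j}(\mathbb{C})$, hence a scalar by Schur. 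Aligning the bases of $\mathbb{C}^{m_{ij}}$ and $\mathbb{C}^{m_{ji}}$ via a unitary then brings the restriction of $J_F$ to $\mathcal{H}_{ij}\oplus\mathcal{H}_{ji}$ into the stated off-diagonal block form, with the sign $\varepsilon$ in the upper-right entry forced by $J_F^2=\varepsilon$.

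The main obstacle is the diagonal case $i=j$ with $\varepsilon=-1$, where I must show that $m_{ii}$ is even and extract the quaternionic normal form. The same Schur argument shows that any anti-linear $\tau$ on a single copy of $M_{N_i}(\mathbb{C})$ satisfying the bimodule twist is of the form $\tau(v)=\mu v^*$, so $\tau^2(v)=|\mu|^2 v$ is a non-negative scalar multiple of $v$; hence $\tau^2=-\id$ is impossible on a single copy. Therefore $J_F$ must swap distinct copies of $M_{N_i}(\mathbb{C})$, and grouping them into pairs yields the $2\times 2$ block form with the relative sign dictated by $J_F^2=-1$. This forces $m_{ii}$ to be even, and the residual freedom in the pairing is absorbed into a unitary change of basis of $\mathbb{C}^{m_{ii}/2}$. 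The case $\varepsilon=+1$ is easier, since $\tau(v)=v^*$ itself satisfies $\tau^2=\id$, giving the stated diagonal form $(\cdot)^*\otimes(\Id_{m_{ii}}\circ\cc)$ after the same diagonalisation.
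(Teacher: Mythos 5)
The paper does not actually give a proof of this theorem: it is stated as a known classification result, with citations to Krajewski, to Paschke--Sitarz, and to van Suijlekom's book for the generalisation to arbitrary KO-dimension. So there is no in-paper argument to compare against; what can be said is whether your reconstruction is correct, and whether it matches the standard line of argument used in those references.

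Your proof is correct, and it does follow the standard route: Artin--Wedderburn for the algebra; the zeroth-order condition to turn $\mathcal{H}_F$ into an $A_F\otimes A_F^{\op}$-module and decompose it into isotypic blocks; Schur to normalise the inner product on each block (with the multiplicity-space Gram matrix diagonalised and rescaled); and finally Schur again to reduce the restriction of $J_F$ to each block (or pair of blocks) to the stated normal form. One small point worth tightening is the diagonal quaternionic case. The clean way to phrase it is: composing $J_F|_{\mathcal{H}_{ii}}$ with the anti-unitary $(\cdot)^*\otimes\cc$ produces a \emph{linear} bimodule endomorphism of $M_{N_i}(\mathbb{C})\otimes\mathbb{C}^{m_{ii}}$, which by Schur is $\Id\otimes T$ for some unitary $T$; hence $J_F|_{\mathcal{H}_{ii}} = (\cdot)^*\otimes S$ for an anti-unitary $S$ on $\mathbb{C}^{m_{ii}}$ with $S^2 = \varepsilon$. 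The conclusion that $m_{ii}$ is even when $\varepsilon=-1$ and that $S$ reduces to the standard symplectic form then follows from the standard normal form of a quaternionic structure on $\mathbb{C}^{m_{ii}}$ (pick $w$, observe $\langle Sw,w\rangle = -\langle Sw,w\rangle = 0$, and induct on the orthogonal complement of $\operatorname{span}\{w,Sw\}$). Your phrase ``$J_F$ must swap distinct copies'' gestures at exactly this but is somewhat loose; making the anti-unitary $S$ on the multiplicity space explicit, as above, closes the gap. The off-diagonal and $\varepsilon=+1$ diagonal cases are handled correctly.
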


\begin{remark}
\label{rmk:formfinspec}
For finite-dimensional complex vector spaces $V$ and $W$, consider the linear isomorphism 
\begin{align*}
L\colon V \otimes \bar{W} \rightarrow \Hom(W,V), \quad v \otimes \bar{w} \mapsto (w' \mapsto v\langle w,w' \rangle), \quad v \in V,\; w,w' \in W,   
\end{align*}
where $\bar W$ denotes the conjugate vector space. 
Write $V_i = \mathbb{C}^{N_i}$, endowed with the standard inner product. Then the finite-dimensional Hilbert space $\mathcal{H}_{ij}$ can also be put in the form
\begin{align*}
 \mathcal{H}_{F} = \bigoplus_{(i,j) \in K} V_i \otimes \bar{V}_j,
\end{align*}
endowed with its standard inner product. Here $K$ is a \emph{multiset} consisting of pairs in $I \times I$ such that the multiplicity of $(i,j)$ is equal to $(j,i)$ and such that the projection $K\rightarrow I$ on either of the factors is surjective (this last condition is equivalent to the faithfulness of the action of $A_F$ on $\mathcal{H}_F$). The algebra $A_F \otimes A_F^{\op}$ acts on a summand $V_i \otimes \bar{V_j}$ as
\begin{align*}
 (a,b^{\op})(v \otimes \bar{w}) = a_iv \otimes \bar{b_j^*w},
\end{align*}
and the corresponding real structure on $V_i \otimes \bar{V_j} \rightarrow V_j \otimes \bar{V_i}$ is simply given by
\begin{align*}
 J_F(v \otimes \bar{w}) =  \pm w \otimes \bar{v},
\end{align*}
where the signs are determined by the $KO$-dimension of $F$. We will use this form of the real finite spectral triple in \cref{sct:gaugemodules}.
\end{remark}

From now on we assume  that every real finite spectral triple (with $D_F =0 $) is of the form as mentioned in \cref{thm:formfinspec}. Later on, the algebra $(A_F)_{J_F}$ will also be of interest, so we conclude this subsection by determining its precise form.

Recall that, in general, for any real spectral triple $(\A,\mH,D,J)$, the complex central subalgebra $\A_J$ is defined as $\A_J = \{ a \in \A \mid aJ = Ja^* \}$. 

\begin{prop}
\label{prop:AJ}
With notation as above, 
we have
$$
(A_F)_{J_F} = \Big\{a = \bigoplus_{i\in I}\lambda_i\id_{N_i} \in A_F \mathrel{\big|} \lambda_i\in\C;\; \lambda_i = \lambda_j \textnormal{ if } \mathcal{H}_{ij} \neq \{0 \}\Big\} .
$$
\end{prop}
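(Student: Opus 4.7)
The plan is to reduce the condition $aJ_F = J_Fa^*$ to a block-by-block condition, using the explicit description of $\mathcal{H}_F$ and $J_F$ given in \cref{thm:formfinspec}. Write $a = \bigoplus_{i\in I} a_i$ with $a_i \in M_{N_i}(\C)$. Since $a$ preserves each block $\mathcal{H}_{ij}$, acting as left multiplication by $a_i$ on the $M_{N_i,N_j}(\C)$ factor, while $J_F$ interchanges $\mathcal{H}_{ij}$ with $\mathcal{H}_{ji}$, the equation $aJ_F = J_Fa^*$ splits naturally into one condition for each pair $\{(i,j),(j,i)\}$.

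First I would treat the off-diagonal case $i \neq j$ with $\mathcal{H}_{ij} \neq \{0\}$. On a simple tensor $t \otimes z \in M_{N_i,N_j}(\C) \otimes \C^{m_{ij}}$, the explicit formula for $J_F$ yields $J_Fa^*(t\otimes z) = t^*a_i \otimes \bar z$ and $aJ_F(t\otimes z) = a_j t^* \otimes \bar z$ (any factors of $\varepsilon$ appear on both sides and cancel, once one also tests against $\mathcal{H}_{ji}$). Equating these for all $t$ gives $a_j t^* = t^* a_i$ for every $t^* \in M_{N_j,N_i}(\C)$. Letting $t^*$ range over the standard matrix units $E_{k\ell}$ then forces $a_i = \lambda\,\mathrm{id}_{N_i}$ and $a_j = \lambda\,\mathrm{id}_{N_j}$ for a single common scalar $\lambda \in \C$.

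Next I would treat the diagonal case $i = j$ with $\mathcal{H}_{ii} \neq \{0\}$: an analogous computation in both subcases $J_F^2 = \pm 1$ of \cref{thm:formfinspec} yields $t^*a_i = a_i t^*$ for all $t \in M_{N_i}(\C)$, forcing $a_i$ to be central in $M_{N_i}(\C)$ and hence a scalar multiple of $\mathrm{id}_{N_i}$. Since the representation of $A_F$ on $\mathcal{H}_F$ is faithful (every $i$ appears in some nonzero $\mathcal{H}_{ij}$; cf.\ \cref{rmk:formfinspec}), every $a_i$ is forced to be scalar.

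Combining the two cases, any $a \in (A_F)_{J_F}$ is of the claimed form $a = \bigoplus_{i} \lambda_i\,\mathrm{id}_{N_i}$ with $\lambda_i = \lambda_j$ whenever $\mathcal{H}_{ij} \neq \{0\}$. For the converse, substituting such an $a$ back into the block computations above makes both sides manifestly coincide on each $\mathcal{H}_{ij}$ (using $\lambda\,\mathrm{id}_{N_j}\cdot t^* = \lambda t^* = t^*\cdot\lambda\,\mathrm{id}_{N_i}$). The only technical nuisance is keeping track of the transposes and the $\varepsilon$-signs introduced by $J_F$, but these cancel cleanly, so the argument is elementary once the bookkeeping is set up.
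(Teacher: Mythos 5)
Your argument is correct and follows essentially the same route as the paper's proof: decompose $a$ block-wise, apply the explicit form of $J_F$ on each $\mathcal{H}_{ij}$ to reduce $aJ_F=J_Fa^*$ to the matrix equation $a_jt^*=t^*a_i$, and then let $t^*$ run over matrix units to force $a_i$ and $a_j$ to be equal scalars whenever $\mathcal{H}_{ij}\neq\{0\}$. The only cosmetic differences are that you separate the $i=j$ case and explicitly invoke faithfulness to cover every index $i$, both of which the paper leaves implicit.
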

\begin{proof}
We can assume that $J$ is in standard form. Write $A_F = \bigoplus_i M_{N_i}(\mathbb{C})$ and consider an element $a = \bigoplus_{i\in I} a_i \in A_F$. If $t \in \mathcal{H}_{ij}^\alpha$ ($1 \leq\alpha \leq m_{ij}$), then
\begin{align*}
 a (J_Ft) =   \pm a_j t^* \quad \text{and} \quad J_F(a^*t) = \pm t^* a_i.
\end{align*}
Choose $t^* = e_{kl}$, where $1 \leq k  \leq N_j$ and $1 \leq l \leq N_i$. Then 
\begin{align*} 
 (a_j e_{kl})_{\gamma \beta} = (a_j)_{\gamma k}   \delta_{\beta l}, \quad \text{and} \quad (e_{kl} a_i)_{\gamma \beta} = \delta_{\gamma  k} (a_i)_{l \beta},
\end{align*} 
Therefore, $aJ_F = J_Fa^*$ if and only if 
\begin{align*}
 (a_j)_{\gamma  k}   \delta_{\beta l} =(a_i)_{l \beta}  \delta_{\gamma  k} ,
\end{align*}
for all  $1 \leq k, \gamma  \leq N_j$ and $1 \leq l, \beta \leq N_i$. It follows that $a_i$, $a_j$ are diagonal and $(a_j)_{kk} = (a_i)_{ll}$ for all $1 \leq k\leq N_j$ and $1 \leq l \leq N_i$. Hence, $a \in (A_F)_{J_F}$ if and only if each $a_i = \lambda_i \text{id}_{N_i}$ and $\lambda_i = \lambda_j$ if $\mathcal{H}_{ij} \neq \{0\}$.   
\end{proof}

The following definition is inspired by the proof of \cref{prop:AJ}.
\begin{defn}
\label{defn:connected_irreps}
Let $A_F = \bigoplus_{i\in I} M_{N_i}(\C)$ act on $\mH_F = \bigoplus_{i,j\in I} \mH_{ij}$ as above. We define an equivalence relation on $I$ as follows. For $i\neq j\in I$ we set $i\sim j$ if there exists a sequence $i=i_0,\ldots,i_k=j$ such that $\mH_{i_m,i_{m+1}}\neq\{0\}$ for all $0\leq m<k$. If $i\sim j$ we say that \emph{$i$ is connected to $j$}. 
\end{defn}

\cref{prop:AJ} in particular shows that $\mathbb{C} \subset  (A_F)_{J_F}  \subset Z(A_F)$. 

\begin{coro}
\label{coro:AJ}
We have the isomorphism $(A_F)_{J_F} \simeq \bigoplus_{[i]\in I/\sim} \C$. In particular, the two extreme cases are:
\begin{itemize}
\item $(A_F)_{J_F} = Z(A_F)$ if and only if $\mathcal{H}_{ij} =0$ for all $i \neq j$ (that is, $I/\!\!\sim\;\simeq I$).
\item $(A_F)_{J_F} = \C$ if and only if $i$ is connected to $j$ for all $i,j\in I$ (that is, $I/\!\!\sim\;\simeq \{1\}$). 
\end{itemize} 
\end{coro}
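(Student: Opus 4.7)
The plan is to deduce this corollary directly from \cref{prop:AJ} by observing that the pairwise constraints it imposes propagate along chains and thereby recover exactly the equivalence relation of \cref{defn:connected_irreps}. By \cref{prop:AJ}, an element of $(A_F)_{J_F}$ corresponds to a tuple $(\lambda_i)_{i\in I}\in\C^I$ subject to $\lambda_i=\lambda_j$ whenever $\mathcal{H}_{ij}\neq\{0\}$. First I would iterate this condition along any chain $i=i_0,\ldots,i_k=j$ witnessing $i\sim j$ to conclude $\lambda_{i_0}=\cdots=\lambda_{i_k}$, so that $\lambda$ must be constant on each equivalence class. Conversely, any tuple that is constant on $\sim$-classes trivially satisfies the pairwise condition, because $\mathcal{H}_{ij}\neq\{0\}$ directly yields $i\sim j$ via a length-one chain.

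The assignment $a=\bigoplus_{i\in I}\lambda_i\id_{N_i}\mapsto\big([i]\mapsto\lambda_i\big)$ is then visibly linear, multiplicative (multiplication on both sides is coordinatewise), and bijective, so it furnishes the desired $*$-algebra isomorphism
\[
(A_F)_{J_F}\;\simeq\;\bigoplus_{[i]\in I/\sim}\C.
\]

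The two extreme cases then follow by reading off the structure of $\sim$. If $\mathcal{H}_{ij}=\{0\}$ for all $i\neq j$, then no chains exist between distinct indices, so $I/\!\!\sim\;\simeq I$ and the isomorphism gives $(A_F)_{J_F}\simeq\bigoplus_{i\in I}\C=Z(A_F)$; conversely, any single nonzero $\mathcal{H}_{ij}$ with $i\neq j$ already merges the classes of $i$ and $j$, so the right-hand side has strictly fewer summands than $Z(A_F)$. Similarly, $(A_F)_{J_F}\simeq\C$ if and only if $I/\!\!\sim$ consists of a single class, which by \cref{defn:connected_irreps} is precisely the statement that every pair $i,j\in I$ is connected. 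Since the whole argument reduces to a transitive-closure observation layered on top of \cref{prop:AJ}, I do not anticipate a serious obstacle; the only mild bookkeeping point is that \cref{defn:connected_irreps} defines $\sim$ only for $i\neq j$ and implicitly extends by reflexivity, which matches the trivial case $i=j$ of the pairwise constraint.
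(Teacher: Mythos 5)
Your proof is correct and follows exactly the route the paper leaves implicit: the corollary is stated without a proof because it is the transitive-closure observation you spell out, applied to \cref{prop:AJ} and \cref{defn:connected_irreps}. The bookkeeping remark about $i=j$ and reflexivity is a fair point but, as you note, does not cause any issue.
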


\subsection{Principal modules}
\label{ssct:principalmodules}
We now want to find spectral triples for gauge theories that are globally non-trivial. Recall from \cref{defn:gauge_theory} that a general gauge theory with structure group $G_F$ on a manifold $M$ is given by a principal $G_F$-bundle $\bP$ over $M$ (along with a prescribed action functional or Lagrangian).

If $(A_F, \mathcal{H}_F, D_F, J_F)$ is a finite-dimensional real spectral triple, then the corresponding gauge group $G_F$ is given by (see also \cref{eq:gaugegroup_ST})
$$
G_F := \{uJ_FuJ_F^* \mid u\in\mU(A_F)\} \simeq \mathcal{U}(A_F) / \mathcal{U}((A_F)_{J_F}) .
$$
Such finite spectral triples can be used to describe globally trivial gauge theories over $M$ (see the Introduction). Any finite spectral triple $F$ automatically yields an internal space
$$
I^\infty_F = \big( \Gamma^\infty(M\times A_F) , \Gamma^\infty(M\times\mH_F), D_F , J_F \big) , 
$$
where now $D_F$ and $J_F$ are seen as constant bundle endomorphisms acting on the fibre $\mH_F$. 
We now want to generalise this construction in order to describe globally non-trivial gauge theories. Of course, fibre-wise we want to obtain the finite-dimensional situation that 
has been explained in \cref{ssct:classificationrealfinite}.  

The most straightforward way to obtain (examples of) globally non-trivial gauge theories over $M$ would then be as follows (see also \refcite[Lemma 2.5]{Cac12} and \refcite{BvS11}). Take any real finite spectral triple $F := (A_F, \mH_F, D_F, J_F)$ 
with gauge group $G_F$, 
and let $M$ be a smooth compact $4$-dimensional Riemannian spin manifold. Take any principal $G_F$-bundle $\bP\to M$. We construct the globally non-trivial triple of the form
\begin{align*}
\bP\times_{G_F}F := \big( \Gamma^\infty(\bP \times_{G_F} A_F) , \Gamma^\infty(\bP \times_{G_F} \mH_F) , D_\bP , 1\times J_F \big) .
\end{align*}
Here $D_\bP$ is an endomorphism acting on the vector bundle $\bP\times_{G_F}\mH_F$ satisfying certain compatibility requirements (which we will specify later in \cref{defn:mass_matrix}). 

\begin{remark}
Note that (in contrast to \refcite{Cac12}) we do not require $D_\bP$ to be of the form $1\times D_F$, where $D_F$ is a $G_F$-invariant operator on $\mH_F$, as such an assumption is too strong for our purposes. In particular, in specific examples (such as the noncommutative Standard Model) that requirement would prevent the appearance of a scalar (Higgs-like) field through inner fluctuations. 
\end{remark}

For the remainder of this section we ignore the endomorphism $D_\bP$, since it is not relevant for the definition of the gauge group, and we define the following:

\begin{defn}
Let $F := (A_F, \mH_F, 0, J_F)$ be a real finite spectral triple of the same form as in \cref{thm:formfinspec}. Write $G_F$ for the corresponding gauge group. Let $M$ be a smooth compact Riemannian spin manifold and let $\bP\to M$ be any principal $G_F$-bundle. A triplet of the form
\begin{align*}
\bP\times_{G_F}F := \big( \Gamma^\infty(\bP \times_{G_F} A_F) , \Gamma^\infty(\bP \times_{G_F} \mH_F) , 1\times J_F \big) ,
\end{align*}
is called a \emph{principal $G_F$-module} over $M$ (or $C^\infty(M)$) with fibre $F$. For brevity, we introduce the notation $\bB := \bP \times_{G_F} A_F$, $\bE := \bP \times_{G_F} \mH_F$, $\B := \Gamma^\infty(\bB)$, $\E := \Gamma^\infty(\bE)$, and $J := 1\times J_F$. 
\end{defn}

\begin{remark}
\label{rem:bundle-atlas}
 The principal fibre bundle $\bP$ is an explicit ingredient in the definition of a principal module. 
From $\bP$ we constructed the associated vector bundle $\bE = \bP\times_{G_F}\mH_F$, and (as discussed in \cref{sec:structure_group}) $\bP$ equips $\bE$ with a unique equivalence class of $G_F$-atlases. Whenever we consider transition functions of $\bE$, we therefore assume that they form a $G_F$-atlas in the equivalence class obtained from $\bP$.  
Given a $G_F$-atlas, the vector bundle $\bE$ inherits a hermitian structure from the inner product on $\mH_F$, which is well-defined because the action of $G_F$ on $\mH_F$ is unitary. For two equivalent $G_F$-atlases, the corresponding hermitian structures are isometric. 

We stress that, given only the vector bundle $\bE$ (with structure group $G_F$), we cannot reconstruct the principal $G_F$-bundle $\bP$. In order to reconstruct $\bP$, we also need to know the corresponding equivalence class of $G_F$-atlases. 
\end{remark}

\begin{prop}
A principal module $\bP\times_{G_F}F$ is a real internal space $(\Gamma^\infty(\bP \times_{G_F} A_F) , \Gamma^\infty(\bP \times_{G_F} \mH_F) , 0 , 1\times J_F)$ over $M$. 
\end{prop}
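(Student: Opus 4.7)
The plan is to verify, one after another, each piece of data required for a real internal space, namely the hermitian vector bundle $\bE$, the unital $*$-algebra subbundle $\bB\subset\End(\bE)$, the hermitian endomorphism $D_I$, and the real structure $J$. Three of these require only routine bookkeeping: $D_I=0$ is trivially hermitian, while the fibrewise hermitian structure on $\bE=\bP\times_{G_F}\mH_F$ is the one recorded in \cref{rem:bundle-atlas}, well defined because the representation of $G_F$ on $\mH_F$, $g=uJ_FuJ_F^*\mapsto uJ_FuJ_F^*$, takes values in the unitaries of $\mH_F$. For $\bB$, I would note that the $G_F$-action on $A_F$ induced from $\mU(A_F)$ is by inner automorphisms $a\mapsto uau^*$; this descends from $\mU(A_F)$ to $G_F=\mU(A_F)/\mU((A_F)_{J_F})$ because $(A_F)_{J_F}$ is central in $A_F$ by \cref{prop:AJ}, and it preserves the $*$-algebra structure. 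Hence $\bB$ is a unital $*$-algebra bundle in the sense of \cref{defn:cbundles}.

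To embed $\bB$ into $\End(\bE)$, one verifies that the (faithful) action of $A_F$ on $\mH_F$ is $G_F$-equivariant. Because the zeroth-order condition on $F$ says that $J_FuJ_F^*$ commutes with every $a\in A_F$, a direct calculation gives
\begin{align*}
g(av)=u(J_FuJ_F^*)(av)=(uau^*)(uJ_FuJ_F^*v)=(g\cdot a)(gv)
\end{align*}
for all $g=uJ_FuJ_F^*\in G_F$, $a\in A_F$, $v\in\mH_F$. This equivariance lets the representation descend to a faithful fibrewise $*$-homomorphism $\bB\hookrightarrow\End(\bE)$, realising $\bB$ as a $*$-algebra subbundle of $\End(\bE)$.

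The substantive step is showing that $J_F$ descends to a well-defined anti-unitary endomorphism $J:=1\times J_F$ of $\bE$; this reduces to proving that $J_F$ commutes on $\mH_F$ with every $g\in G_F$. Setting $\tilde u:=J_FuJ_F^{-1}$ (recall $J_F^*=J_F^{-1}$) one writes $g=u\tilde u$, and the zeroth-order condition gives $[u,\tilde u]=0$. Since $J_F^2=\varepsilon$ is a scalar, conjugation by $J_F^2$ acts trivially on operators, so
\begin{align*}
J_FgJ_F^{-1}=(J_FuJ_F^{-1})(J_F\tilde uJ_F^{-1})=\tilde u\cdot(J_F^2uJ_F^{-2})=\tilde u\,u=u\tilde u=g.
\end{align*}
Therefore $J$ is a well-defined fibrewise anti-unitary endomorphism of $\bE$, and the identity $J^2=\varepsilon$ is inherited pointwise from $F$.

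Finally, the zeroth-order condition $[a,JbJ^*]=0$ for $a,b\in\B$ holds pointwise from the analogous property of $F$, while the first-order condition and the relation $JD_I=\varepsilon'D_IJ$ are automatic because $D_I=0$. The only non-routine point is the commutation $[J_F,G_F]=0$ carried out above; everything else is a matter of transporting fibrewise identities for $F$ through the associated-bundle construction.
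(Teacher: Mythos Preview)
Your proof is correct and follows essentially the same approach as the paper: verify that $\bB$ embeds as a $*$-algebra subbundle of $\End(\bE)$ via the $G_F$-equivariance of the $A_F$-action, observe that $D_I=0$ is trivially hermitian, and check that $J_F$ commutes with $G_F$ so that $1\times J_F$ is well defined on $\bE$. Your argument is in fact more thorough than the paper's, which simply asserts that $J_F$ commutes with $G_F$ without the explicit computation you provide.
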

\begin{proof}
The action of $G_F$ on $A_F$ is given by conjugation when $A_F$ is considered as a $*$-subalgebra of $\text{End}(\mathcal{H}_F)$. Consequently, the fibre-wise action of the $*$-algebra bundle $\bB = \bP \times_{G_F} A_F$ on $\bE$ is well defined, and hence $\bB$ is a unital $*$-algebra subbundle of $\End(\bE)$. 
The operator $D_I = 0$ is trivially a hermitian endomorphism. 
Since the operator $J_F$ commutes with $G_F$, 
it induces a real structure $J_x$ on each fibre of $\bE$. The operator $J = 1 \times J_F$ denotes the anti-linear operator on $\bE$ that is induced by these real structures $J_x$ on the fibres.
\end{proof}

\begin{remark}
\label{rmk:innerautomorphisms}
Because $(uJ_FuJ_F^*) a (J_F u^*J_F^* u^*) 
= uau^*$ for all $a \in A_F$, $u \in  \mathcal{U}(A_F)$, we see that the given action of an element $uJ_FuJ_F^* \in G_F$ on $A_F$ coincides with the usual conjugation of the element $u \in \mU(A_F)$. Since $(A_F)_{J_F} \subset Z(A_F)$, the map $\tau\colon G_F \ni uJ_FuJ_F^* \mapsto \text{Ad}(uJ_FuJ_F^*) = \text{Ad }u \in \text{Inn}(A_F)$ does not depend on the choice of $u$. 
Thus, the surjective map $\tau\colon G_F \rightarrow \mathcal{U}(A_F) / \mathcal{U}(Z(A_F)) \simeq \text{Inn}(A_F)$ is induced by the usual map $\mathcal{U}(A_F) \rightarrow \text{Inn}(A_F)$ (recall that $G_F$ is the quotient $\mathcal{U}(A_F) / \mathcal{U}((A_F)_{J_F})$).
\end{remark}

\subsubsection{The gauge group}
\label{sssct:gaugegroup}
Consider a principal module $\bP\times_{G_F}F =\big( \B, \E, J \big)$ over $M$. 
Using the classification of $A_F$ and $\mH_F$, as given in \cref{ssct:classificationrealfinite}, we can decompose the bundles $\bB = \bP\times_{G_F}A_F$ and $\bE = \bP\times_{G_F}\mH_F$ in a similar way:
\begin{align*}
\bB &= \bigoplus_{i\in I} \bB_i , & \bB_i &= \bP\times_{G_F}M_{N_i}(\C) , \\
\bE &= \bigoplus_{i,j\in I} \bE_{ij} , & \bE_{ij} &= \bP\times_{G_F}\mH_{ij} . 
\end{align*}
Each vector bundle $\bE_{ij}$ carries the obvious action by $\bB\otimes\bB^\op$. Note, however, that even though $\mH_{ij} = \C^{N_i}\otimes\C^{N_j}\otimes\C^{m_{ij}}$, $\bE_{ij}$ is not necessarily of the form $\bE_i\otimes \bE_j\otimes \C^{m_{ij}}$ for some vector bundles $\bE_i$ and $\bE_j$ (see \cref{sec:ex_YM} for an example).

Note that, for the case $i=j$, the bundle $\bE_{ii}$ is necessarily isomorphic to (a number of copies of) $\bB_i$. Indeed, the $G_F$-valued transition functions act on the fibres of $\bE_{ii}$, which are isomorphic to (copies of) $M_{N_i}(\mathbb{C})$, by conjugation with an element $u \in U(N_i)$, and are therefore inner automorphisms of the algebra $M_{N_i}(\mathbb{C})$. By \cref{rmk:innerautomorphisms} these transition functions are equal to those for the $*$-algebra bundle $\bB_i$.

Denote by $[i]$ the equivalence class of all $j\in I$ that are connected to $i$ (see \cref{defn:connected_irreps}). Write 
\begin{align*}
\bB_{[i]} = \bigoplus_{s\in[i]}  \bB_s,
\end{align*}
and write $b_{[i]}$ for the projection of an element $b$ onto $\bB_{[i]}$. 
As $(\B_{[i]})_J = C^\infty(M)$ we obtain (see also \cref{coro:AJ})
$$
\B_J = \bigoplus_{[i]\in I/\sim} C^\infty(M).
$$

The gauge group of the principal module $\bP\times_{G_F}F = (\B,\E,J)$ is defined as (see \cref{eq:gaugegroup_IS})
$$
\G(\bP\times_{G_F}F) := \big\{ uJuJ^* \mid u\in\mU(\B) \big\} \simeq \mU(\B) / \mU(\B_J).
$$

At the same time, a principal $G_F$-bundle $\bP\to M$ is equipped with the gauge group $\G(\bP) = \Gamma^\infty(\Ad\bP)$ (see \cref{sec:pfb}). We now aim at showing that for a principal module $\bP\times_{G_F}F$, the gauge groups $\G(\bP\times_{G_F}F)$ and $\G(\bP)$ coincide, provided that $M$ is simply connected.

Consider the group bundle map
\begin{align*}
\phi&\colon  \mathcal{U}(\bB) \simeq \bP \times_{G_F} \mathcal{U}(A_F) \rightarrow \bP \times_{G_F} \mathcal{U}(\mathcal{H}_F), &  u_x &\mapsto u_xJ_xu_xJ_x^*.
\end{align*}
The image $\phi(\mU(\bB))$ is a group subbundle of $\bP \times_{G_F} \mathcal{U}(\mathcal{H}_F)$, with fibres isomorphic to $G_F$. In fact, this subbundle is isomorphic to the group bundle $\Ad\bP$. The induced map $\phi_*$ on the sections $\mathcal{U}(\mathcal{B}) \simeq \mathcal{U}(\Gamma^\infty( \bB )) \rightarrow \mathcal{U}(\Gamma^\infty(\End(\bE)) )$ is precisely the map $u \mapsto uJuJ^*$, $u \in \mathcal{U}(\mathcal{B})$. Thus, $\phi_*$ maps $\mathcal{U}(\mB)$ into $\Gamma^\infty(\Ad\bP)$. 
However, as discussed in \cref{sec:covering}, this map $\phi_*$ need not always be surjective. We will proceed by showing that in our case, under the assumption that $M$ is simply connected, we do have surjectivity. 

\begin{prop}
\label{prop:covering}
 Let $\bP\times_{G_F}F$ be a principal module over $M$. There exists a group subbundle $\bU \subset \mathcal{U}(\bB)$ such that the restriction $\phi\colon \bU \rightarrow \Ad\bP$ is a covering map.
\end{prop}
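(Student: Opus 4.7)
The strategy is to exhibit a compact Lie subgroup $U \leq \mU(A_F)$ that is invariant under the $G_F$-action on $\mU(A_F)$ (giving rise to an associated subbundle $\bU := \bP \times_{G_F} U$ of $\mU(\bB) = \bP \times_{G_F} \mU(A_F)$) and such that the fibrewise map $U \to G_F \simeq \mU(A_F)/K$, where $K := \mU((A_F)_{J_F})$, is a covering of Lie groups. Over any local trivialisation of $\bP$, both $\bU$ and $\Ad \bP$ restrict to products with fibres $U$ and $G_F$, on which $\phi$ acts as the fixed Lie-group covering $U \to G_F$. Local triviality of this covering then promotes to a covering map of the total spaces.

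Using the decomposition $A_F = \bigoplus_{i\in I} M_{N_i}(\C)$ from \cref{thm:formfinspec}, we have $\mU(A_F) = \prod_{i\in I} U(N_i)$, and by \cref{coro:AJ} the kernel $K \simeq \prod_{[j]\in I/\sim} U(1)$ sits in $\mU(A_F)$ diagonally via $\lambda_{[j]} \mapsto (\lambda_{[i]} \id_{N_i})_{i \in I}$. I define
$$
U := \Big\{(u_i)_{i\in I} \in \mU(A_F) \;\Big|\; \prod_{i\in [j]} \det(u_i) = 1 \text{ for each } [j]\in I/\sim \Big\} .
$$
By \cref{rmk:innerautomorphisms}, $G_F$ acts on $\mU(A_F)$ through the composite $G_F \to \Inn(A_F)$, i.e.\ by conjugation, which preserves each $\det u_i$; hence $U$ is $G_F$-stable, and $\bU := \bP \times_{G_F} U$ is a well-defined group subbundle of $\mU(\bB)$.

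To check that $U \to G_F$ is a (finite) covering of Lie groups, I verify separately that $U\cap K$ is finite and that $UK = \mU(A_F)$. An element $(\lambda_{[i]}\id_{N_i})_i \in K$ lies in $U$ iff $\lambda_{[j]}^{\sum_{i\in [j]} N_i} = 1$ for every $[j] \in I/\sim$, giving a finite set of solutions. For surjectivity, given any $(u_i)_i \in \mU(A_F)$, for each class $[j]$ choose a $\bigl(\sum_{i\in[j]} N_i\bigr)$-th root $\lambda_{[j]} \in U(1)$ of $\prod_{i\in [j]} \det u_i$; then $(\lambda_{[i]}^{-1} u_i)_i \in U$ and $(u_i)_i = (\lambda_{[i]}\id_{N_i})_i \cdot (\lambda_{[i]}^{-1} u_i)_i \in K \cdot U$. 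Hence $U/(U\cap K) \simeq G_F$ as compact Lie groups, so the restricted map $U \to G_F$ is a finite-sheeted covering. Local triviality of $\bP$ then yields, over each trivialising open $V \subset M$, a commutative diagram $V \times U \to V \times G_F$ given by $\id \times (U \to G_F)$, which is a trivial covering. Pasting these gives that $\phi|_{\bU}\colon \bU \to \Ad \bP$ is a covering map of bundles over $M$.

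The main delicacy is picking the correct subgroup $U$: the naive candidate $\prod_i SU(N_i)$ surjects onto $G_F$ only when every equivalence class $[j]\in I/\sim$ is a singleton, because when $[j]$ contains several indices only one diagonal $U(1)$ is quotiented away, so only the product $\prod_{i\in[j]} \det u_i$ (and not each individual $\det u_i$) can be normalised to $1$. Imposing one determinant relation per equivalence class is exactly what makes $\dim U = \dim G_F$ and $UK = \mU(A_F)$ simultaneously.
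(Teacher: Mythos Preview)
Your proof is correct and follows essentially the same strategy as the paper: cut out a subgroup by imposing one determinant-type constraint per equivalence class $[j]\in I/\!\sim$, verify surjectivity onto $G_F$ and finiteness of the kernel, and conclude via local triviality. The paper defines $\bU$ directly at the bundle level via the fibrewise determinant of $u_{[i]}$ acting on $\bE_{[i]} := \bB_{[i]}\cdot\bE$, whereas you work at the fibre level with the simpler condition $\prod_{i\in[j]}\det u_i = 1$ in the defining representation, check $G_F$-invariance, and pass to the associated bundle; the two determinant conditions differ (yours gives $\sum_{i\in[j]} N_i$ in the kernel computation, the paper's gives the rank $N_{[i]}$ of $\bE_{[i]}$), but both yield a finite-sheeted cover of $G_F$, so the distinction is immaterial.
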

\begin{proof}
Consider the subbundle $\bE_{[i]} := \bB_{[i]} \cdot \bE$ (i.e.\ the subbundle on which $\bB_{[i]}$ acts non-trivially). 
Define the group subbundle
\begin{align*}
 \bU := \{ u \in \mathcal{U}(\bB) \mid \det\!{}_{[i]} u_{[i]} = 1 \text{ for all } [i] \}, 
\end{align*}
where $\det_{[i]} u_{[i]}$ denotes the fibrewise determinant of $u_{[i]}$ seen as an element of the bundle $\End \bE_{[i]}$. 
Denote the rank of $\bE_{[i]}$ by $N_{[i]}$. 
Since any element $u \in \mathcal{U}(\bB)$ can be written as $u=vw$, where $v  \in\bU$ and $w  \in \mathcal{U}(\bB_J)$ (just take $w_{[i]} = \left(\det_{[i]} u_{[i]}\right)^{\frac{1}{N_{[i]}}} \text{id}_{N_{[i]}}$ and $v = uw^{-1}$), the image $\phi(\bU)$ is equal to the image $\phi(\mathcal{U}(\bB)) = \Ad\bP$. 

Let us calculate the kernel $\phi_x\colon \bU_x \rightarrow (\Ad\bP)_x$. Choose $u \in \bU_x \cap \ker \phi_x$. Since $u \in \ker \phi_x$, each $u_{[i]}$ is diagonal. Because $\det_{[i]} u_{[i]} =1$, we obtain that $u_{[i]} = \lambda_{[i]} \text{id}_{N_{[i]}}$, where $\lambda_{[i]}$ is an $N_{[i]}$-th root of unity. Since there are only finitely many equivalence classes $[i]$, the group $\bU_x \cap \ker \phi_x$ is finite. 

The condition for a map to be a covering map is of a local nature, so we can assume that all bundles are globally trivial. In that case, it follow from the fact that $\bU_x \cap \ker \phi_x$ is finite, that $\bU \rightarrow \Ad\bP$ is a covering map.
\end{proof}

Combining \cref{prop:covering} with \cref{coro:surjective,thm:1} immediately yields the desired result:

\begin{thm}
\label{coro:gaugegroupsisomorphic}
 Let $\bP\times_{G_F}F$ be a principal module over $M$. If $M$ is simply connected, then 
 $$
 \G(\bP\times_{G_F}F) \simeq \Gamma^\infty(\Ad\bP) \simeq \G(\bP).
 $$
\end{thm}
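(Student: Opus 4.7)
The plan is to combine the three immediately preceding results, namely \cref{prop:covering,coro:surjective,thm:1}, to identify the algebraic gauge group $\mU(\B)/\mU(\B_J)$ with the classical gauge group $\Gamma^\infty(\Ad\bP)$.

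First, I would unwind the definitions. By construction the map
\[
\phi_*\colon \mU(\B)\longrightarrow \Gamma^\infty(\Ad\bP),\qquad u\longmapsto uJuJ^*,
\]
is a group homomorphism whose kernel is exactly $\mU(\B_J)$ (this uses the defining property of $\B_J$, that $aJ=Ja^*$, together with the computation in \cref{rmk:innerautomorphisms} identifying the image of $uJuJ^*$ with the adjoint action). Hence $\phi_*$ descends to an injective homomorphism
\[
\bar\phi_*\colon \mU(\B)/\mU(\B_J)\hookrightarrow\Gamma^\infty(\Ad\bP),
\]
and by the definition in \cref{eq:gaugegroup_IS} the left-hand side is precisely $\G(\bP\times_{G_F}F)$. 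The only thing that remains is surjectivity.

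Next, I would invoke \cref{prop:covering} to produce a group subbundle $\bU\subset\mU(\bB)$ such that the restriction $\phi|_\bU\colon \bU\to\Ad\bP$ is a covering map (with finite fibres given by roots of unity in each block $[i]$). Since $M$ is assumed to be simply connected, \cref{coro:surjective} applies to the covering $\bU\to\Ad\bP$ and shows that $\phi_*\colon \Gamma^\infty(\bU)\to\Gamma^\infty(\Ad\bP)$ is surjective. Because $\Gamma^\infty(\bU)\subset\mU(\B)$, we conclude that $\phi_*\colon \mU(\B)\to\Gamma^\infty(\Ad\bP)$ is itself surjective, and hence $\bar\phi_*$ is an isomorphism. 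Finally, \cref{thm:1} identifies $\Gamma^\infty(\Ad\bP)$ with $\G(\bP)$, giving the chain of isomorphisms in the statement.

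The only substantive step is the lifting of sections, but that has already been packaged in \cref{prop:covering,coro:surjective}; the simply-connectedness of $M$ is used precisely to ensure that the obstruction $f_*(\pi_1(M))\subset p_*(\pi_1(\bU))$ of \cref{lem:subcover} vanishes trivially. All other parts are bookkeeping with the definitions.
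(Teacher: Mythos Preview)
Your proposal is correct and follows essentially the same approach as the paper: the paper's own proof is a single sentence stating that combining \cref{prop:covering} with \cref{coro:surjective,thm:1} immediately yields the result, and you have simply unpacked this combination in detail. Your additional bookkeeping (identifying the kernel of $\phi_*$ with $\mU(\B_J)$ to get injectivity of $\bar\phi_*$) is implicit in the paper's earlier discussion preceding \cref{prop:covering}.
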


\begin{remark}
\label{remark:unimod}
It follows from the above that for each element $g$ of the gauge group $\G(\bP\times_{G_F}F)$, there exists a unitary section $u\in \B$ with (fibre-wise) determinant equal to $1$, such that $g = uJuJ^*$. 
In this sense, the gauge group is \emph{unimodular} by default. This only holds for complex algebras $\B$. For real algebras (including the one describing the noncommutative Standard Model \cite{Connes96,CCM07}) one needs to impose unimodularity by hand (see also \refcite{LS01} and references therein). 
\end{remark}

\section{Gauge modules}
\label{sct:gaugemodules}
In \cref{ssct:principalmodules} we introduced the notion of principal modules, which have an entirely \emph{geometric} nature. In this section we introduce so-called \emph{gauge modules}, which are of a purely \emph{algebraic} nature. We show that each gauge module is in fact also a principal module, but unfortunately not all principal modules can be obtained from gauge modules. 

Inspired by the standard form of finite spectral triples as obtained in \cref{thm:formfinspec,rmk:formfinspec}, we introduce the following definition, which might be considered an extension of Krajewski diagrams to the globally non-trivial case.
\begin{defn}
\label{defn:gaugemodule}
 Let $\A := C^\infty(M)$. Suppose we are given a finite set of non-degenerate hermitian finitely generated projective $\A$-modules $\E_i$ (for $i\in I = \{1, \dots, l\}$), and define the module algebras $\B_i := \End_\A(\E_i)$. Take a  \emph{multiset} $K$ consisting of pairs in $I\times I$ such that the multiplicity of $(i,j)$ is equal to the multiplicity of $(j,i)$, and such that the projection $K\to I$ on either of the factors is surjective. Denote the multiplicity of the pair $(i,j)$ by $m_{ij}$ and write $(i_\alpha,j_\alpha)$ ($1 \leq \alpha\leq m_{ij}$) to distinguish the pairs in $K$ that occur more than once (see also \cref{thm:formfinspec} for this notation). 

A \emph{gauge module} $(\B,\E,J)$ is of the form
\begin{align*}
\B &:= \bigoplus_{i\in I} \B_i , & \E &:= \bigoplus_{(i,j)\in K} \E_i\otimes_\A \bar{\E_j} , & J&\colon \E_i\otimes_\A\bar{\E_j}\to\E_j\otimes_\A\bar{\E_i},
\end{align*}
where $J$ is of the same standard form as the finite operator $J_F$ in \cref{thm:formfinspec} (and which depends on the value of $J^2 = \varepsilon = \pm 1$, e.g.\ $J_{ij}(e_{i_\alpha} \otimes \bar{e_{j_\alpha}}) = \varepsilon e_{j_\alpha} \otimes \bar{e_{i_\alpha}}$, for $e_{i_\alpha} \otimes \bar{e_{j_\alpha}} \in \E_{i_\alpha} \otimes \bar{\E_{j_\alpha}}$ if $j<i$). 
\end{defn}

The assumption that the projection $K\to I$ is surjective ensures that the action of $\B$ on $\E$ is faithful. 
From the Serre-Swan theorem we know that each module $\E_i$ is given by the smooth sections of a vector bundle $\bE_i\to M$. Because the hermitian structure on $\E_i$ is non-degenerate, this yields a hermitian structure on $\bE_i$. 
By \cref{thm:serreswanalgebrabundles} the module algebra $\B_i$ is given by the smooth sections of a unital weak $*$-algebra bundle $\bB_i\to M$. Since $\B_i = \End_\A(\E_i)$ we obtain $\bB_i = \End(\bE_i)$. The local triviality of $\bB_i$ then follows from the local triviality of $\bE_i$, which means that $\bB_i$ is in fact a unital $*$-algebra bundle. 

As mentioned in \cref{rem:bundle-atlas}, given a principal module $\bP\times_{G_F}F = (\B,\E,J)$ (but not $\bP$ itself), it is not possible to reconstruct $\bP$, unless we are given the equivalence class of $G$-atlases on the vector bundle $\bE = \bP\times_{G_F}\mH_F$. 
However, we will show below that for gauge modules it is possible to uniquely reconstruct the corresponding principal $G_F$-bundle. The main distinctive feature of gauge modules is that the vector bundle $\bE$ decomposes as a direct sum of tensor products of hermitian vector bundles $\bE_i$. To each $\bE_i$ there uniquely (up to isomorphism) corresponds a principal $U(N_i)$-bundle. From these principal $U(N_i)$-bundles we can subsequently construct the corresponding principal $G_F$-bundle $\bP$. 

\begin{prop}
\label{prop:gaugemodulestructuregroup}
Let $(\B,\E,J)$ be a gauge module. Then:
\begin{enumerate}
 \item There exist a real finite spectral triple $F = (A_F, \mH_F, 0, J_F)$ and a principal $\mU(A_F)$-bundle $\bQ$ such that 
$(\B,\E,J) = Q \times_{\mathcal{U}(A_F)} F$.
 \item There exists a principal $G_F$-bundle $\bP$ such that $(\B, \E, J) = \bP\times_{G_F}F$. 
\end{enumerate}
\end{prop}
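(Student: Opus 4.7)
The plan is to read off the data of a real finite spectral triple $F$ from the gauge module, build a principal $\mU(A_F)$-bundle $\bQ$ from the hermitian summands $\bE_i$ via \cref{ex:unitary-princ}, and then descend to $\bP$ through the surjection $\mU(A_F)\twoheadrightarrow G_F$.

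For part (1), set $N_i := \mathrm{rank}\,\bE_i$ and define
\begin{align*}
A_F := \bigoplus_{i\in I} M_{N_i}(\C), \qquad \mH_F := \bigoplus_{(i,j)\in K} \C^{N_i} \otimes \overline{\C^{N_j}},
\end{align*}
with $J_F$ the antilinear map $v\otimes\bar w \mapsto \pm w \otimes \bar v$ in the standard form that matches the given $J$. By \cref{thm:formfinspec} and \cref{rmk:formfinspec}, $F := (A_F,\mH_F,0,J_F)$ is a real finite spectral triple. For each $i$, \cref{ex:unitary-princ} yields a principal $U(N_i)$-bundle $\bP_i$, unique up to isomorphism, such that $\bE_i \simeq \bP_i\times_{U(N_i)}\C^{N_i}$; I then define $\bQ := \bP_1 \times_M \cdots \times_M \bP_l$, which is naturally a principal $\prod_i U(N_i) = \mU(A_F)$-bundle over $M$. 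The two required identifications
\begin{align*}
\bQ\times_{\mU(A_F)} A_F \simeq \bigoplus_i\End(\bE_i) = \bB, \qquad \bQ\times_{\mU(A_F)} \mH_F \simeq \bigoplus_{(i,j)\in K}\bE_i\otimes_\A\bar{\bE_j} = \bE,
\end{align*}
then follow by functoriality of associated bundles: the block-diagonal adjoint action on $A_F$ yields the endomorphism bundles, while the tensor-product representation on the $(i,j)$-summand of $\mH_F$ (in which $u_i$ acts on $\C^{N_i}$ and $u_j^*$ on $\overline{\C^{N_j}}$) reproduces $\bE_i\otimes_\A\bar{\bE_j}$ through the conjugate-bundle construction of \cref{sec:conjugate} and equation~\eqref{eq:transitionfunctionsconjugatevectorbundle}. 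In its standard form $J_F$ is $\mU(A_F)$-equivariant for this action and descends to the given real structure $J$.

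For part (2), I would observe that both $\mU(A_F)$-actions above factor through $G_F = \mU(A_F)/\mU((A_F)_{J_F})$: the action on $A_F$ is by conjugation and $\mU((A_F)_{J_F})\subset\mU(Z(A_F))$ by \cref{prop:AJ}, while the action on $\C^{N_i}\otimes\overline{\C^{N_j}}$ is
\begin{align*}
u\cdot(v\otimes\bar w) = u_i v\otimes \bar w u_j^* = u_i v \otimes \overline{u_j w},
\end{align*}
and for $u = \bigoplus_{[k]}\lambda_{[k]}\id_{N_{[k]}}\in\mU((A_F)_{J_F})$ one has $\lambda_{[i]}=\lambda_{[j]}$ whenever $(i,j)\in K$ by \cref{coro:AJ}, so this reduces to multiplication by $\lambda_{[i]}\bar\lambda_{[j]} = |\lambda_{[i]}|^2 = 1$. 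Setting $\bP := \bQ\times_{\mU(A_F)} G_F \simeq \bQ/\mU((A_F)_{J_F})$, which is a principal $G_F$-bundle, then gives $\bP\times_{G_F} F = \bQ\times_{\mU(A_F)} F = (\B,\E,J)$, establishing (2) from (1).

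The main technical point I expect to grind through is the identification $\bP_i\times_M\bP_j\times_{U(N_i)\times U(N_j)}(\C^{N_i}\otimes\overline{\C^{N_j}}) \simeq \bE_i\otimes_\A\bar{\bE_j}$: one has to track the $g^*$-twist in the transition functions of the conjugate vector bundle (equation~\eqref{eq:transitionfunctionsconjugatevectorbundle}) and check that it matches precisely the right action $\bar w \mapsto \bar w u_j^* = \overline{u_j w}$ on $\overline{\C^{N_j}}$ coming from the diagonal $\mU(A_F)$-representation. Everything else reduces to formal bookkeeping with associated bundles and the standard form of $J_F$.
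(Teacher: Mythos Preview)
Your proposal is correct and follows essentially the same approach as the paper: build $\bQ$ as the fibre product of the principal $U(N_i)$-bundles from \cref{ex:unitary-princ}, identify the associated bundles via the transition-function computation for conjugate bundles \eqref{eq:transitionfunctionsconjugatevectorbundle}, and then descend to $\bP = \bQ\times_{\mU(A_F)}G_F$. The only cosmetic difference is that in part~(2) you explicitly verify that the $\mU(A_F)$-action on $\mH_F$ factors through $G_F$ using \cref{prop:AJ} and \cref{coro:AJ}, whereas the paper simply invokes the chain of associated-bundle isomorphisms $(\bQ\times_{\mU(A_F)}G_F)\times_{G_F}\mH_F \simeq \bQ\times_{\mU(A_F)}\mH_F$; these are two sides of the same observation.
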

\begin{proof}
\begin{enumerate}
\item 
The gauge module $(\B,\E,J)$ is constructed from a given set of hermitian vector bundles $\bE_i$ of rank $N_i$ and the index (multi)sets $I$ and $K$. By assumption $\bB_i = \End(\bE_i)$, and so $\bB_i$ has typical fibre $M_{N_i}(\C)$. We define
\begin{align*}
A_F &:= \bigoplus_{i\in I} M_{N_i}(\C) , & \mH_F &:= \bigoplus_{(i,j)\in K} \C^{N_i}\otimes\bar{\C^{N_j}} . 
\end{align*}
For each $\bE_i$, there is a principal $U(N_i)$-bundle $\bQ_i$ (which is unique up to isomorphism) such that $\bE_i \simeq \bQ_i\times_{U(N_i)}\C^{N_i}$ (see \cref{ex:unitary-princ}). 
Let $(U,u\Sub{UV}^i)$ be a $U(N_i)$-atlas on $\bE_i$ corresponding to local trivialisations of $\bQ_i$. 
The transition functions $\bar{u\Sub{UV}^j}$ of $\bar{\bE_j}$ are given by the right action of $(u\Sub{UV}^j)^*$ on $\bar{\C^{N_j}}$ (see \cref{eq:transitionfunctionsconjugatevectorbundle}), which is implemented as $(v_i\otimes\bar{w_j})(u\Sub{UV}^j)^* = Ju\Sub{UV}^jJ^*(v_i\otimes\bar{w_j})$. Hence we obtain transition functions for $\bE$ of the form
$$
g\Sub{UV} = \bigoplus_{(i,j)\in K} u\Sub{UV}^i \otimes {(u\Sub{UV}^j)^*}^\op = \bigoplus_{(i,j)\in K} u\Sub{UV}^iJu\Sub{UV}^jJ^* .
$$
Writing $u\Sub{UV} = \bigoplus_{i\in I} u\Sub{UV}^i \in C^\infty(U\cap V,\mU(A_F))$, we see that $g\Sub{UV} = u\Sub{UV}Ju\Sub{UV}J^* \in C^\infty(U\cap V,G_F)$. Since the $u\Sub{UV}^i$ are transition functions of $\pi_i\colon\bQ_i\to M$, we see that the $u\Sub{UV}$ are the transition functions of the principal $\mU(A_F)$-bundle 
$$
\bQ := \bQ_1\times_M\cdots\times_M\bQ_{l} := \{ (q_1, \dots,q_{l}) \in \bQ_1 \times \cdots \times \bQ_{l} \mid \pi_1(q_1) = \cdots = \pi_l(q_{l}) \} . 
$$
Since the action of $u\Sub{UV}$ on $\mathcal{H}_F$ is given by $g\Sub{UV} = u\Sub{UV}Ju\Sub{UV}J^*$, we see that $\bE \simeq \bQ\times_{\mU(A_F)}\mH_F$ as hermitian vector bundles. As conjugation by $u\Sub{UV}$ coincides with conjugation by $g\Sub{UV}$ on the algebra $A_F$, we also have $\bB \simeq \bQ\times_{\mU(A_F)}A_F$. It is straightforward to check that $J$ is invariant under conjugation by a transition function $g\Sub{UV}$, and hence it is simply of the form $J = 1\times J_F$. Since $J$ is an anti-unitary operator satisfying $J^2=\varepsilon$ and the order-zero condition, it follows that $J_F$ is a real structure on $\mH_F$. 
\item 
Given the principal $\mU(A_F)$-bundle $\bQ$ from the first part of this lemma, we simply construct a principal $G_F$-bundle as
\begin{align*}
\bP &:= \bQ \times_{\mU(A_F)}G_F,
\end{align*}
where $u \in \mU(A_F)$ acts on $G_F$ as left multiplication by the element $uJ_FuJ_F^*$. 
The transition functions of $\bP$ are given by $g\Sub{UV} = u\Sub{UV}Ju\Sub{UV}J^* \in C^\infty(U\cap V,G_F)$. 
It then straightforwardly follows that 
\begin{align*}
\bP\times_{G_F}\mH_F &\simeq (\bQ \times_{\mU(A_F)}G_F)\times_{G_F}\mH_F \simeq \bQ\times_{\mU(A_F)}\mH_F \simeq \bE , 
\end{align*}
and similarly we obtain $\bP\times_{G_F}A_F \simeq \bB$.  \qedhere
\end{enumerate}
\end{proof}

The above proposition shows that each gauge module is in fact a principal module $\bP\times_{G_F}F$ (where we can uniquely reconstruct $F$ and $\bP$), where $\bP$ can be lifted to a principal $\mU(A_F)$-bundle $\bQ$ (which is unique up to isomorphism). We now show the converse, namely that a principal module $\bP\times_{G_F}F$ with a lift $\tau\colon\bQ\to\bP$ uniquely corresponds to a gauge module. 

\begin{prop}
\label{prop:principal_with_lift}
Let $\bP\times_{G_F}F = (\B,\E,J)$ be a principal module, and suppose we have a principal $\mU(A_F)$-bundle $\bQ$ that lifts $\bP$. Then $\bQ$ naturally induces a gauge module structure on $(\B,\E,J)$. 
\end{prop}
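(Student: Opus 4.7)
The strategy is to use the lift $\tau\colon \bQ \to \bP$ to replace the $G_F$-action on $\mH_F$ and $A_F$ (which enters through the ``twisted'' formula $u \mapsto uJ_FuJ_F^*$) by the more elementary $\mU(A_F)$-action, which splits into a product of $U(N_i)$-actions on the individual summands of $\mH_F$ and $A_F$. Once this replacement is made, the decompositions of $A_F$ and $\mH_F$ given by \cref{thm:formfinspec} and \cref{rmk:formfinspec} immediately produce the required splittings of $\bB$ and $\bE$.

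\textbf{Step 1.} Write $\mU(A_F) \simeq \prod_{i\in I} U(N_i)$ and let $p_i\colon \mU(A_F) \to U(N_i)$ denote the natural projection. From $\bQ$ construct the principal $U(N_i)$-bundles $\bQ_i := \bQ \times_{p_i} U(N_i)$ and the associated hermitian vector bundles $\bE_i := \bQ_i \times_{U(N_i)} \C^{N_i}$. Equivalently, $\bE_i = \bQ \times_{\rho_i} \C^{N_i}$, where $\rho_i = p_i\circ(\cdot)$ is the standard representation. Set $\E_i := \Gamma^\infty(\bE_i)$; each is a non-degenerate hermitian finitely generated projective $\A$-module, and $\B_i := \End_\A(\E_i) \simeq \Gamma^\infty(\End(\bE_i))$.

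\textbf{Step 2.} Use the defining property of the lift, namely $\bQ\times_\phi G_F \simeq \bP$ (where $\phi\colon \mU(A_F)\to G_F$, $u\mapsto uJ_FuJ_F^*$), to identify
\begin{align*}
\bE \;=\; \bP\times_{G_F}\mH_F \;\simeq\; (\bQ\times_\phi G_F)\times_{G_F}\mH_F \;\simeq\; \bQ\times_{\mU(A_F)}\mH_F ,
\end{align*}
where $\mU(A_F)$ acts on $\mH_F$ via $\phi$ composed with the $G_F$-action, i.e.\ via $u\cdot v = uJ_FuJ_F^*v$. By \cref{rmk:formfinspec}, this action on $\mH_F = \bigoplus_{(i,j)\in K}\C^{N_i}\otimes\bar{\C^{N_j}}$ is precisely $u\cdot(v\otimes\bar w) = u_iv\otimes\bar{u_jw}$, i.e.\ the tensor product of the standard representation $\rho_i$ with the conjugate of $\rho_j$. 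Applying the functoriality of associated bundles to this direct-sum decomposition gives
\begin{align*}
\bE \;\simeq\; \bigoplus_{(i,j)\in K} (\bQ\times_{\rho_i}\C^{N_i}) \otimes \overline{(\bQ\times_{\rho_j}\C^{N_j})} \;=\; \bigoplus_{(i,j)\in K} \bE_i\otimes_\A\bar{\bE_j} ,
\end{align*}
as hermitian bundles. The analogous argument (with $A_F = \bigoplus_i M_{N_i}(\C)$ and $u$ acting by conjugation on each summand, which is just conjugation by $u_i$) yields $\bB \simeq \bigoplus_i \End(\bE_i)$, so that $\B = \bigoplus_i \B_i$ as $*$-algebras.

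\textbf{Step 3.} Finally, verify that the real structure $J = 1\times J_F$ on $\bE$ is carried to the standard form required in \cref{defn:gaugemodule} under the above isomorphism. This follows immediately from \cref{rmk:formfinspec}: under the identification $\mH_F = \bigoplus_{(i,j)\in K}\C^{N_i}\otimes\bar{\C^{N_j}}$, the operator $J_F$ is $J_F(v\otimes\bar w) = \pm w\otimes\bar v$, which exactly matches the prescription in \cref{defn:gaugemodule}. Together with Steps 1--2, this exhibits $(\B,\E,J)$ as a gauge module associated with the index data $(I,K,\{\bE_i\})$.

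The only subtle point is the naturality of the isomorphism $\bE\simeq\bQ\times_{\mU(A_F)}\mH_F$ (and likewise for $\bB$): one must check that the intertwining between the $G_F$-action on $\mH_F$ via $\phi$ and the direct $\mU(A_F)$-action is carried through on every overlap of local trivialisations. This is precisely the content of \cref{defn:lift} together with the fact that the $\mU(A_F)$-representation on $\mH_F$ factors through $\phi$ by construction of the gauge group (cf.\ \cref{rmk:innerautomorphisms}), so no ambiguity arises.
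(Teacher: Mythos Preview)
Your proposal is correct and follows essentially the same approach as the paper: construct the $\bE_i$ from the principal $U(N_i)$-bundles $\bQ_i := \bQ\times_{\mU(A_F)}U(N_i)$, use the lift to identify $\bE \simeq \bQ\times_{\mU(A_F)}\mH_F$ (and likewise for $\bB$), and then apply the decomposition of $\mH_F$ from \cref{rmk:formfinspec} to obtain the gauge-module splitting, with $J=1\times J_F$ matching the standard form. Your write-up is in fact somewhat more explicit than the paper's in spelling out why the $\mU(A_F)$-action on each summand $\C^{N_i}\otimes\bar{\C^{N_j}}$ is $\rho_i\otimes\bar\rho_j$ and in flagging the naturality of the associated-bundle isomorphism, but there is no substantive difference in strategy.
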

\begin{proof}
As we have seen in \cref{ssct:classificationrealfinite}, the real finite spectral triple $F = (A_F, \mH_F, 0, J_F)$ has a decomposition of the form
\begin{align*}
 A_F &= \bigoplus_{i\in I} M_{N_i}(\C) , & \mH_F &= \bigoplus_{(i,j)\in K} \C^{N_i}\otimes\bar{\C^{N_j}} . 
\end{align*}
Thus we have $\mU(A_F) = \times_{i\in I} U(N_i)$, and the principal $\mU(A_F)$-bundle $\bQ$ then decomposes as $\bQ_1 \times_M \cdots \times_M \bQ_{l}$, where each $\bQ_i$ is a principal $U(N_i)$-bundle given by $\bQ_i := \bQ \times_{\mU(A_F)} U(N_i)$. 
We then construct
\begin{align*}
 \bB_i &:= \bQ \times_{\mU(A_F)} M_{N_i}(\C) \simeq \bQ_i \times_{U(N_i)} M_{N_i}(\C) , & \bE_i &:= \bQ \times_{\mU(A_F)} \C^{N_i} \simeq \bQ_i \times_{U(N_i)} \C^{N_i} ,
\end{align*}
where $\mU(A_F) = \times_{i\in I} U(N_i)$ acts on $\C^{N_i}$ as left multiplication by the factor $U(N_i)$, and on $M_{N_i}(\C)$ as conjugation by $U(N_i)$. The bundle $\bE_i$ naturally inherits a hermitian structure from the standard inner product on $\C^{N_i}$. Because $\bQ$ lifts $\bP$, the bundles $\bB$ and $\bE$ corresponding to the principal module $\bP\times_{G_F}F$ are in fact of the form
\begin{align*}
 \bB &:= \bQ\times_{\mU(A_F)}A_F = \bigoplus_{i\in I} \bB_i , & \bE &:= \bQ\times_{\mU(A_F)}\mH_F = \bigoplus_{(i,j)\in K} \bE_i \otimes \bar{\bE_j} .
\end{align*}
Furthermore, as the transition functions of $\bB_i$ are given by conjugation by the transition functions of $\bE_i$, and as its fibre equals $M_{N_i}(\C) = \End(\C^{N_i})$, it follows that $\bB_i = \End(\bE_i)$ and $\bB_i$ acts as such on $\bE$. Hence we have shown that the principal module $\bP\times_{G_F}F$ is equal to the gauge module given by the modules $\E_i := \Gamma^\infty(\bE_i)$ and the real structure $J=1\times J_F$. 
\end{proof}

The previous two propositions then lead us to the main result of this section:
\begin{thm}
 A gauge module is characterised uniquely (up to isomorphism) by a principal module $\bP\times_{G_F}F$ for which there exists a principal $\mU(A_F)$-bundle $\bQ$ that lifts $\bP$. 
\end{thm}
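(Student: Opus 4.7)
The plan is to combine Propositions \ref{prop:gaugemodulestructuregroup} and \ref{prop:principal_with_lift} to set up a bijection, on isomorphism classes, between gauge modules on the one hand and pairs consisting of a principal module together with a lift to a principal $\mU(A_F)$-bundle on the other.

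First, I would observe that Proposition \ref{prop:gaugemodulestructuregroup} already assigns to any gauge module $(\B,\E,J)$ a triple $(F,\bQ,\bP)$, where $F = (A_F,\mH_F,0,J_F)$ is reconstructed from the block decomposition $\B = \bigoplus_i \End_\A(\E_i)$ (which fixes the matrix block sizes $N_i$) and the multiset $K$ (which fixes $\mH_F$ together with the standard-form $J_F$), $\bQ = \bQ_1 \times_M \cdots \times_M \bQ_l$ is assembled from the unique-up-to-isomorphism principal $U(N_i)$-bundles $\bQ_i$ attached to the hermitian vector bundles $\bE_i$ via \cref{ex:unitary-princ}, and $\bP := \bQ \times_{\mU(A_F)} G_F$ is the associated principal $G_F$-bundle, for which $\bQ$ is tautologically a lift in the sense of \cref{defn:lift}. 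Conversely, Proposition \ref{prop:principal_with_lift} takes a principal module $\bP\times_{G_F}F$ equipped with a lift $\bQ$ and produces a gauge module structure on the underlying $(\B,\E,J)$.

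What remains is to check that these two assignments are mutually inverse up to isomorphism. Starting from a gauge module and applying the two constructions in succession, the reconstructed constituent bundles are $\bQ_i \times_{U(N_i)} \C^{N_i}$, which are isomorphic as hermitian vector bundles to the original $\bE_i$ by the defining property of $\bQ_i$; the induced decomposition of $\bE$ and the real structure $J$ (which is in standard form on both sides) then coincide with the original data. In the other direction, starting from a principal module with lift $(\bP\times_{G_F}F,\bQ)$, the $\mU(A_F)$-bundle produced by Proposition \ref{prop:gaugemodulestructuregroup} is the fibre product over $M$ of the $\bQ_i := \bQ \times_{\mU(A_F)} U(N_i)$, which canonically recovers $\bQ$, and the extension of structure group along $\mU(A_F)\to G_F$ then recovers $\bP$.

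The main (rather mild) obstacle is the bookkeeping needed to verify that all structural data transport coherently through the back-and-forth: the hermitian structures on each $\bE_i$, the identification $\bB_i = \End(\bE_i)$, the decomposition $\bE = \bigoplus_{(i,j)\in K}\bE_i\otimes_\A\bar{\bE_j}$, and the standard form of $J$ must all match up. Since each of these pieces is canonically determined either by the gauge module data in \cref{defn:gaugemodule} or by the principal-module-with-lift data via the two preceding propositions, this verification is routine.
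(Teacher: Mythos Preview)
Your proposal is correct and follows essentially the same approach as the paper: both combine \cref{prop:gaugemodulestructuregroup} and \cref{prop:principal_with_lift} and then assert that the two constructions are mutually inverse. The paper's proof is in fact terser than yours, simply stating ``These constructions are inverse to each other'' without spelling out the verification you sketch in your last two paragraphs.
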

\begin{proof}
Given a gauge module, we have shown in \cref{prop:gaugemodulestructuregroup} that we can uniquely construct a real finite spectral triple $F = (A_F, \mH_F, 0, J_F)$, a principal $G_F$-bundle $\bP$, and a principal $\mU(A_F)$-bundle $\bQ$ that lifts $\bP$. Conversely, given such $F$, $\bP$, and $\bQ$, \cref{prop:principal_with_lift} shows that $\bP\times_{G_F}F$ is in fact given by a gauge module. These constructions are inverse to each other.
\end{proof}

\begin{remark}
\begin{enumerate}
\item If there exists a principal $\mU(A_F)$-bundle $\bQ$ that lifts $\bP$, then $\bQ$ is unique up to isomorphism, because each principal $U(N_i)$-bundle $\bQ_i$ is unique up to isomorphism (cf.\ \cref{ex:unitary-princ}). 
\item Every \emph{globally trivial} principal module, constructed from a finite spectral triple $F$ and the principal bundle $\bP = M\times G_F$, is in fact a gauge module, with the lift $\bQ = M\times \mU(A_F)$. 
\item An example of a principal module that is (in general) \emph{not} a gauge module (except when for instance the underlying manifold is simply connected and $4$-dimensional) is described in \cref{sec:ex_YM}.  
\end{enumerate}
\end{remark}

\section{Gauge theory}
\label{sec:gauge_theory}

In this section we show how principal modules describe gauge theories on $4$-dimensional compact spin manifolds. First we will introduce a `mass matrix'. Viewing the (now massive) principal module as an internal space and endowing it with a (suitable) connection, we can then use it to construct an almost-commutative manifold. Subsequently, we determine the inner fluctuations and provide an explicit formula for the spectral action of this almost-commutative manifold. We end this section by stating our main result, namely that such an almost-commutative manifold indeed describes a gauge theory in the sense of \cref{defn:gauge_theory}. 

\subsection{Principal almost-commutative manifolds}
\label{sct:pacm}
\begin{defn}
\label{defn:mass_matrix}
Consider a principal module $\bP\times_{G_F}F = (\B,\E,J_I)$ (from here on we include a subscript $I$ in order to differentiate between the different operators occurring).
In order to be able to describe massive gauge theories, we now introduce a \emph{`mass matrix'}
$$
D_I\in\Gamma^\infty(\End(\bE))\simeq\End_{\A}(\E) ,
$$
satisfying
\begin{align*}
D_I &= D_I^* , & D_IJ_I &= \varepsilon'J_ID_I , & \big[[D_I,a],JbJ^*\big] &= 0 \quad\forall a,b\in\B ,
\end{align*}
where the sign $\varepsilon'$ (along with the signs $\varepsilon,\varepsilon''$ obtained through the finite spectral triple $F$) is determined by the KO-dimension according to the same table as in \cref{defn:spectral_triple}. 
We then call $I^\infty_\bP := (\B,\E,D_I,J_I)$ a \emph{massive principal module over $M$}. We say $I^\infty_\bP$ is \emph{even} if there exists a grading operator $\gamma_I$ on $\E$ such that $D_I\gamma_I = -\gamma_ID_I$, $\gamma_I J_I = \varepsilon'' J_I \gamma_I$ and $a\gamma_I = \gamma_I a$ for all $a \in \B$.  
\end{defn}
 It is an immediate consequence of the definition that a massive principal module over $M$ is a real internal space over $M$. If $(\B,\E,J_I)$ is in fact a gauge module, we shall call $(\B,\E,D_I,J_I)$ a \emph{massive gauge module}. 

Let $\bP\times_{G_F}F$ be a principal module. Denote by $\g_F$ the Lie algebra of the structure group $G_F$. Take a connection on $\bP$, i.e.\ for each local trivialisation $(U_i,h_i)$ of $\bP$ we have a (local) $\mathfrak{g}_F$-valued $1$-form $\omega_i \in \Omega^1(U_i,\g_F)$ such that 
\begin{align*}
 \omega_j = g_{ij}^{-1} dg_{ij} + g_{ij}^{-1} \omega_i g_{ij}
\end{align*}
for all $i,j$ such that $U_i\cap U_j\neq\emptyset$ (see \cref{defn:connection}). These connection one-forms yield a connection $\nabla\colon\E\to\E\otimes_\A\Omega^1(M)$ by defining locally (i.e.\ on local trivialisations $(U_i,h_i)$ of $\bE$ that are induced by those of $\bP$) the expression
$$
\nabla|_{U_i} := h_i^{-1} \circ (d + \omega_i) \circ h_i ,
$$
where $d$ is the exterior derivative acting on the components of the local trivialisation.
The transformation property of $\omega_i$ ensures that $\nabla$ is globally well-defined. Connections on $\E$ of this form are also referred to as \emph{$G_F$-compatible connections}, or simply $G_F$-connections. 

Consider the associated vector bundle $\ad\bP := \bP\times_{\ad}\g_F$, where 
$\ad$ is the adjoint action of $G_F$ on $\g_F$. Since $\g_F$ is (isomorphic to) the image of $\lu(A_F)$ in $\lu(\mH_F)$ under the map $t\mapsto t+J_FtJ_F^*$, the bundle $\ad \bP$ is (isomorphic to) the image of $\mathfrak{u}(\bB)$ in $\mathfrak{u}(\bE)$ under the map $\tau\colon t \mapsto t + J_ItJ_I^*$. The kernel of this map is equal to the set of all elements $t \in \mathfrak{u}(\bB)$ satisfying $t = -J_ItJ_I^* = J_It^*J_I^*$, or equivalently,
\begin{align*}
 \ker \tau = \{ t \in \mathfrak{u}(\bB) \mid tJ_I = J_It^* \} = \mathfrak{u}(\bB_J).
\end{align*}
Hence we see that $\ad\bP$ is isomophic to $\lu(\bB) / \lu(\bB_J)$. In particular, $\mathfrak{g}_F = \mathfrak{u}(A_F) / \mathfrak{u}((A_F)_{J_F})$. 

\begin{lem}
\label{lem:adP}
The induced map $\tau\colon\lu(\B)\to\Gamma^\infty(\ad\bP)$ is surjective, and 
$$
\Gamma^\infty(\ad\bP) \simeq \lu(\B)/\lu(\B_J) .
$$
Moreover, $\ad\bP$ is isomorphic to the subbundle 
\begin{align*}
\mathfrak{u} = \{ t \in \mathfrak{u}(\bB) \mid \Tr_{[i]} t_{[i]} = 0 \textnormal{ for all }[i] \}
\end{align*}
of $\lu(\B)$, where $\Tr_{[i]} t_{[i]}$ denotes the fibrewise trace of $t_{[i]}$ seen as an element of the bundle $\End \bE_{[i]}$, and
$ 
\lu(\bB) = \ker \tau \oplus \mathfrak{u},
$
with $\ker \tau = \mathfrak{u}(\bB_J)$.
\end{lem}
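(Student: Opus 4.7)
The plan is to derive everything from the fibrewise short exact sequence $0\to\lu(\bB_J)\to\lu(\bB)\xrightarrow{\tau}\ad\bP\to 0$ already established just before the lemma, by constructing an explicit linear splitting via traces. At the Lie algebra level there are no covering-space obstructions of the kind encountered for the gauge \emph{group} in \cref{coro:gaugegroupsisomorphic}, since everything is linear.

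First I would use the decomposition from \cref{sssct:gaugegroup}, namely $\bB=\bigoplus_{[i]}\bB_{[i]}$ with its faithful action on $\bE_{[i]}$, together with \cref{prop:AJ}, to describe the fibres of $\lu(\bB_J)$ explicitly as $\lu(\bB_J)_x=\bigoplus_{[i]} i\R\cdot\id_{\bE_{[i]}}$ (here $\id_{\bE_{[i]}}$ is the image in $\End(\bE_{[i]})$ of the unit of $\bB_{[i]}$). For any $t\in\lu(\bB)_x$ with components $t_{[i]}\in\lu(\bB_{[i]})_x$, the fibrewise scalars
\begin{align*}
\lambda_{[i]}(t):=\frac{\Tr_{[i]} t_{[i]}}{\mathrm{rank}(\bE_{[i]})}\in i\R
\end{align*}
then yield a unique decomposition $t=t^{c}+\til t$ with $t^{c}:=\bigoplus_{[i]}\lambda_{[i]}(t)\,\id_{\bE_{[i]}}\in\lu(\bB_J)_x$ and $\til t\in\mathfrak{u}_x$ (that is, $\Tr_{[i]}\til t_{[i]}=0$ for every $[i]$).

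Next I would observe that both the fibrewise trace $\Tr_{[i]}$ and the unit section $\id_{\bE_{[i]}}$ are invariant under conjugation by any transition function of $\bE_{[i]}$ (which takes values in unitary elements coming from $G_F$); hence the pointwise splitting assembles into a smooth vector bundle decomposition $\lu(\bB)=\lu(\bB_J)\oplus\mathfrak{u}$, exhibiting $\mathfrak{u}$ as a genuine smooth subbundle. Combined with $\ker\tau=\lu(\bB_J)$ from the discussion preceding the lemma, this forces the restriction $\tau|_{\mathfrak{u}}\colon\mathfrak{u}\to\ad\bP$ to be a fibrewise linear isomorphism, hence a vector bundle isomorphism.

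Finally I would pass to smooth sections: direct sums and bundle isomorphisms are preserved, so $\lu(\B)=\lu(\B_J)\oplus\Gamma^\infty(\mathfrak{u})$, and the restriction of $\tau_{*}$ to $\Gamma^\infty(\mathfrak{u})$ is a $C^\infty(M)$-linear isomorphism onto $\Gamma^\infty(\ad\bP)$. Surjectivity of $\tau_{*}\colon\lu(\B)\to\Gamma^\infty(\ad\bP)$ and the identification $\Gamma^\infty(\ad\bP)\simeq\lu(\B)/\lu(\B_J)$ then follow immediately. I do not anticipate any serious obstacle; the only care required is purely bookkeeping, specifically verifying that $\Tr_{[i]}$ is well-defined on $\End(\bE_{[i]})$ independently of the choice of local trivialisation and that $\id_{\bE_{[i]}}$ is a genuine global section of $\bB_{[i]}$.
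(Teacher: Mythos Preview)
Your proposal is correct and follows essentially the same approach as the paper: both arguments split $\lu(\bB)$ using the projection $t_{[i]}\mapsto \frac{1}{N_{[i]}}\Tr_{[i]}(t_{[i]})\cdot\id_{[i]}$ onto $\lu(\bB_J)$, then check that $\tau|_{\mathfrak{u}}$ is injective (hence a bundle isomorphism onto $\ad\bP$). The paper's proof is terser and notes in passing that the first two claims also follow from exactness of the Serre--Swan functor, whereas you are slightly more explicit about why the splitting is globally well-defined; otherwise the two arguments coincide.
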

\begin{proof}
Though the first two statements follow immediately from the exactness of the Serre-Swan equivalence functor $\Gamma^\infty$,  we prove them directly by showing that $\ad\bP$ is isomorphic to the subbundle $\mathfrak{u}$ (compare also \cref{prop:covering}). Indeed, every $t \in \lu(\bB)$ can be written as $s + q$, where $s \in \mathfrak{u}$ and $q \in \mathfrak{u}(\bB_J)$ (just take $q_{[i]} =  \frac{1}{N_{[i]}} \Tr_{[i]}(t_{[i]})\cdot \text{id}_{[i]}$ and $s = t - q$). Hence $\tau|_\lu$ is surjective. 

Suppose now that $t \in \ker \tau|_\mathfrak{u}$. Because $t \in \ker \tau$, we obtain $t_{[i]} = \lambda_{[i]} \text{id}_{N_{[i]}}$, where $\lambda_{[i]} \in i\mathbb{R}$ (see \cref{prop:AJ}). Since $t \in \mathfrak{u}$, each of the $t_{[i]}$ is traceless. Hence each of the $\lambda_{[i]}$ is zero, and consequently, the kernel of $\tau|_\mathfrak{u}$ is trivial.
\end{proof}

\begin{lem}
\label{conn-comm}
Let  $\bP\times_{G_F}F = (\B,\E,J_I,\gamma_I)$ be an even principal module. 
Any $G_F$-compatible connection $\nabla$ on $\E$ commutes with the real structure $J_I$ (in the sense that $\nabla_\mu J_I = J_I \nabla_\mu$) and the grading $\gamma_I$. 
\end{lem}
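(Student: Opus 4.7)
\bigskip

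\noindent\textbf{Proof proposal.} The plan is to verify the two commutation relations locally, using the fact that the real structure and grading of the finite spectral triple $F$ are invariant under the action of $G_F$, and hence their lifts $J_I$ and $\gamma_I$ look fibrewise like $J_F$ and $\gamma_F$ in any local trivialisation of $\bP$. In such a local trivialisation the connection is $\nabla|_U = d + \omega$ with $\omega \in \Omega^1(U,\g_F)$, so everything reduces to checking that $d$ and the one-form $\omega$ separately commute with $J_F$ and $\gamma_F$.

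First I would observe that $G_F$ commutes with $J_F$ and $\gamma_F$ as operators on $\mH_F$. For $J_F$: writing $g = u J_F u J_F^{*}$ with $u\in\mU(A_F)$ and using the zeroth-order condition to commute $u$ past $J_F u J_F^{*}$, together with $J_F^{*}J_F = 1$, one gets $gJ_F = J_F g$. For $\gamma_F$: since $\gamma_F$ commutes with every $a\in A_F$ and $J_F\gamma_F = \varepsilon''\gamma_F J_F$, it commutes with $J_F u J_F^{*}$ and hence with $g$. Differentiating along one-parameter subgroups $u(t) = \exp(tX)$ (with $X\in\lu(A_F)$), an element of $\g_F$ has the form $X + J_F X J_F^{*}$; a direct computation (using $J_F^{*}J_F = 1$ and $J_F^{*} = \varepsilon J_F$) then shows $(X + J_F X J_F^{*})J_F = J_F(X + J_F X J_F^{*})$, and similarly $\gamma_F$ commutes with $\g_F$.

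Next I would exploit the fact that the local trivialisations $(U_i,h_i)$ of $\bE$ are induced from those of $\bP$, so on $U_i$ we may identify sections of $\bE$ with $\mH_F$-valued functions, and under this identification $J_I$ and $\gamma_I$ act as the constant (fibre-independent) antilinear/linear operators $J_F$ and $\gamma_F$. For such constant operators the exterior derivative satisfies $d(J_F s) = J_F\, ds$ and $d(\gamma_F s) = \gamma_F\, ds$ (checked in a basis of $\mH_F$, using that $J_F$ is antilinear and $d\bar f = \overline{df}$). Combined with the pointwise identities $\omega_i(x)J_F = J_F\omega_i(x)$ and $\omega_i(x)\gamma_F = \gamma_F\omega_i(x)$ from the first step, this gives
\begin{align*}
\nabla|_{U_i}(J_I s) &= J_F\, ds + \omega_i J_F s = J_I \nabla|_{U_i} s, & \nabla|_{U_i}(\gamma_I s) &= \gamma_F\, ds + \omega_i \gamma_F s = \gamma_I \nabla|_{U_i} s.
\end{align*}
Because $\nabla$ is globally well-defined (by the transformation law of $\omega_i$ under $G_F$-valued transition functions) and the local identities glue together, the desired commutation relations hold on all of $\E$.

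I do not expect serious obstacles; the one point demanding care is the interplay between the antilinearity of $J_F$ and the exterior derivative, but once one writes $s = \sum_k s_k e_k$ in a basis of $\mH_F$ this is immediate. The structural content is really the identity $[\g_F, J_F] = [\g_F, \gamma_F] = 0$, which is exactly the reason one defined $G_F$ in terms of $u J_F u J_F^{*}$ in the first place.
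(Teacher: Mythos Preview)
Your proof is correct and follows essentially the same approach as the paper: both reduce the claim to showing that elements of $\g_F$ commute with $J_F$ and $\gamma_F$, using that such elements have the form $t + J_F t J_F^*$ for $t \in \lu(A_F)$, and that $\gamma_F$ commutes with $A_F$ and (anti-)commutes with $J_F$. You have simply spelled out in more detail the local-trivialisation argument and the (easy) commutation of the exterior derivative with the constant operators $J_F$, $\gamma_F$, which the paper leaves implicit.
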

\begin{proof}
It is sufficient to show that $J_F$ and $\gamma_F$ commute with elements in $\mathfrak{g}_F$. Any element in $\mathfrak{g}_F$ is of the form $t + J_F tJ_F^*$, with $t \in \mathfrak{u}(A_F)$. In particular, $J_F$ commutes with these elements. Since $\gamma_F$ commutes with elements in $A_F$, and (anti-)commutes with $J_F$, the grading $\gamma_F$ commutes with elements in $\mathfrak{g}_F$, too.
\end{proof}

If the principal module is obtained from a gauge module $(\B,\E,J_I)$, we can construct such a $G_F$-connection explicitly as follows. Consider the decomposition $\E = \bigoplus_{(i,j)} \E_i\otimes_\A\bar{\E_j}$, and choose a hermitian connection $\nabla^i$ on each $\E_i$. We define
\begin{align*}
\nabla &:= \bigoplus_{(i,j)} \left( \nabla^i\otimes\1 + \1\otimes\bar{\nabla^j} \right) ,
\end{align*}
where the conjugate connection $\bar{\nabla^j}$ is defined in \cref{sec:conjugate}. 
In order to see that $\nabla$ corresponds to a connection on the principal bundle $\bP$, we first need to check that its local connection one-forms take values in the Lie algebra $\g_F$. 
If $(U,h^i\Sub{U})$ are local trivialistions of $\E_i$, we can write $\nabla^i|_U = \left(h^i\Sub{U}\right)^{-1} \circ (d + \omega^i\Sub{U}) \circ h^i\Sub{U}$ for some local connection one-forms $\omega^i\Sub{U} \in \Omega^1(U,\lu(N_i))$. The connection $\nabla$ then locally has the connection $1$-form
\begin{align*}
\omega\Sub{U} &:= \bigoplus_{(i,j)} \left( \omega^i\Sub{U}\otimes\1 + \1\otimes({\omega^j\Sub{U}}^*)^\op \right) \in \Omega^1(U,A_F\otimes A_F^\op) .
\end{align*}
This ensures that $[\nabla,\cdot]$ yields a connection on $\B\otimes\B^\op$, which preserves $\B$ and $\B^\op$. 
Writing $t\Sub{U} = \bigoplus_{i\in I} \omega^i\Sub{U}$, we can write $\omega\Sub{U} = t\Sub{U} + J_Ft\Sub{U}J_F^* \in \Omega^1(U,\mathfrak{g}_F)$. To verify that $\omega\Sub{U}$ defines a connection on the principal $G_F$-bundle $\bP$ we need to show that $\omega\Sub{U}$ transforms correctly under the $G_F$-valued transition functions.

So, consider two neighbourhoods $U$ and $V$ such that $U\cap V\neq\emptyset$, and let $u = \times u_i \in C^\infty(U\cap V,\mU(A_F))$ be a transition function for the principal $\mU(A_F)$-bundle $\bQ$. The corresponding transition function for the principal $G_F$-bundle $\bP$ is $g:=uJ_FuJ_F^*$. Since the $\omega^i\Sub{U}$ are connection forms on $\bE_i$, $t\Sub{U}$ transforms as
$$
t\Sub{V} = \bigoplus_{i\in I} \omega^i\Sub{V} = \bigoplus_{i\in I} (u_i^* \omega^i\Sub{U} u_i + u_i^* du_i) = u^* t\Sub{U} u + u^* du .
$$
We then see that 
\begin{align*}
\omega\Sub{V} &= t\Sub{V} + J_F t\Sub{V} J_F^* = u^* t\Sub{U} u + u^* du + J_F (u^* t\Sub{U} u + u^* du) J_F^* \\
&= u^*J_Fu^*J_F^*  t\Sub{U} uJ_FuJ_F^* + J_F(u^*J_Fu^*J_F^*  t\Sub{U} uJ_FuJ_F^* ) J_F^* \\
&\quad+ u^*J_Fu^*J_F^* (du)J_FuJ_F^* +  u^*J_Fu^*J_F^* u J_F(du)J_F^* \\ 
&= g^{-1} (t\Sub{U} + J_Ft\Sub{U}J_F^*)g + g^{-1}dg = g^{-1} \omega\Sub{U} g +  g^{-1}dg .
\end{align*}
Thus, $U \mapsto \omega\Sub{U}$ indeed defines a $G_F$-connection.

\begin{prop}
\label{prop:connectiongaugemodule}
 Let $(\B,\E,J)$ be a gauge module. A connection on $\bE$ is of the form $\bigoplus_{(i,j)} \left( \nabla^i \otimes  \1 + \1 \otimes \bar{\nabla^j} \right)$ if and only if it induces a connection on the principal $\mathcal{U}(A_F)$-bundle $\mathtt{Q}$ from \cref{prop:gaugemodulestructuregroup}. 
\end{prop}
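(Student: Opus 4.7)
The plan is to pass through the one-to-one correspondence (for compact $M$) between hermitian connections on a hermitian vector bundle and principal connections on its associated unitary frame bundle. Concretely, by \cref{prop:gaugemodulestructuregroup} the gauge module data give us, for each $i\in I$, a principal $U(N_i)$-bundle $\bQ_i$ with $\bE_i\simeq\bQ_i\times_{U(N_i)}\C^{N_i}$, and $\bQ=\bQ_1\times_M\cdots\times_M\bQ_l$ with $\mU(A_F)=\prod_i U(N_i)$. Transition functions of $\bQ$ are diagonal, $u\Sub{UV}=\bigoplus_i u\Sub{UV}^i$, and the representation $\rho\colon\mU(A_F)\to U(\mH_F)$ inducing $\bE=\bQ\times_{\mU(A_F)}\mH_F$ decomposes on $V_i\otimes\bar V_j$ as $\rho_i\otimes\bar\rho_j$, so $\rho_*\colon\mathfrak{u}(A_F)\to\End(\mH_F)$ acts on $V_i\otimes\bar V_j$ by $t\mapsto t_i\otimes\1+\1\otimes\bar{t_j}$.

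For the forward direction, suppose $\nabla=\bigoplus_{(i,j)}(\nabla^i\otimes\1+\1\otimes\bar{\nabla^j})$ with each $\nabla^i$ hermitian. Locally write $\nabla^i|_U=(h^i\Sub U)^{-1}\circ(d+\omega^i\Sub U)\circ h^i\Sub U$ with $\omega^i\Sub U\in\Omega^1(U,\mathfrak{u}(N_i))$, and set $t\Sub U:=\bigoplus_i\omega^i\Sub U\in\Omega^1(U,\mathfrak{u}(A_F))$. Since each $\omega^i\Sub U$ transforms as $\omega^i\Sub V=(u^i\Sub{UV})^*\omega^i\Sub U u^i\Sub{UV}+(u^i\Sub{UV})^*du^i\Sub{UV}$ and the transition functions of $\bQ$ are the diagonal $u\Sub{UV}=\bigoplus_i u^i\Sub{UV}$, we get $t\Sub V=u\Sub{UV}^*t\Sub U u\Sub{UV}+u\Sub{UV}^*du\Sub{UV}$, so the $t\Sub U$ define a connection on $\bQ$. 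Its induced connection on $\bE$ has local form $d+\rho_*(t\Sub U)=\bigoplus_{(i,j)}\bigl((d+\omega^i\Sub U)\otimes\1+\1\otimes(\overline{d+\omega^j\Sub U})\bigr)$, which, after conjugating by the local trivialisations of the $\bE_i$, equals $\nabla$. (This is essentially the computation already carried out in the paragraph preceding the proposition.)

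For the reverse direction, suppose $\nabla$ is induced from a connection $\omega$ on $\bQ$. Its local one-forms $t\Sub U\in\Omega^1(U,\mathfrak{u}(A_F))$ split uniquely as $t\Sub U=\bigoplus_i\omega^i\Sub U$ with $\omega^i\Sub U\in\Omega^1(U,\mathfrak{u}(N_i))$. Because the transition functions of $\bQ_i=\bQ\times_{\mU(A_F)}U(N_i)$ are precisely the $i$-th components of those of $\bQ$, each family $\{\omega^i\Sub U\}$ transforms as a connection one-form on $\bQ_i$, hence defines a hermitian connection $\nabla^i$ on $\bE_i$. Applying $\rho_*$ summand by summand yields $\nabla=\bigoplus_{(i,j)}(\nabla^i\otimes\1+\1\otimes\bar{\nabla^j})$, as required.

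There is no substantial obstacle; the content of the proof is the bookkeeping in the preceding two paragraphs. The only point that merits care is verifying that the decomposition $\rho_*(t)|_{V_i\otimes\bar V_j}=t_i\otimes\1+\1\otimes\bar{t_j}$ matches the definition of the conjugate connection $\bar{\nabla^j}$ used in \cref{sec:conjugate}; this is immediate once one differentiates the group action $u_j\cdot\bar v=\bar{u_j v}$ on $\bar V_j$ at the identity.
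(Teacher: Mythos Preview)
Your proof is correct and follows essentially the same approach as the paper: both argue by passing to local connection one-forms, using the block decomposition $\mathfrak{u}(A_F)=\bigoplus_i\mathfrak{u}(N_i)$ together with the diagonal form $u\Sub{UV}=\bigoplus_i u\Sub{UV}^i$ of the transition functions of $\bQ$, and then identifying the action of $\rho_*$ on $V_i\otimes\bar V_j$ with the tensor-product connection $\nabla^i\otimes\1+\1\otimes\bar{\nabla^j}$. Your write-up is in fact slightly more explicit than the paper's in checking the transformation rule for $t\Sub U$ and in noting that the $\nabla^i$ must be hermitian for the local forms to land in $\mathfrak{u}(N_i)$.
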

\begin{proof}
Consider a local trivialisation $(U,h\Sub{U})$ of $\bP$, and let $\omega\Sub{U}\in\Omega^1(U,\lu(A_F))$ be a local connection form on $\bQ$, yielding a connection $\nabla$ on $\bE = \bQ\times_{\mU(A_F)} \mH_F$. Since the decomposition $\lu(A_F) = \bigoplus_{i\in I} \lu(N_i)$ is preserved by the action of $\mathcal{U}(A_F) $, we can write $\omega\Sub{U} = \bigoplus_{i\in I} \omega_i$, where each $\omega_i\in\Omega^1(U,\lu(N_i))$ yields a connection $\nabla^i$ on $\bE_i$. For $x\in U$, the connection form $\omega\Sub{U}$ acts on $(\bE_i \otimes \bar{\bE_j})|_x \simeq \C^{N_i}\otimes \bar{\C^{N_j}}$ as
 $$
 \omega(v_i\otimes\bar{w_j}) = \omega_iv_i\otimes\bar{w_j} + v_i\otimes\bar{w_j}\omega_j^* ,
 $$
 from which it follows that $\nabla = \bigoplus_{(i,j)} \left( \nabla^i \otimes  \1 + \1 \otimes \bar{\nabla^j} \right)$.
 
 For the converse, consider a connection on $\bE$ of the form $\nabla = \bigoplus_{(i,j)} \left( \nabla^i \otimes  \1 + \1 \otimes \bar{\nabla^j} \right)$. On a local trivialisation $(U,h\Sub{U})_i$ of $\bE_i$, each connection $\nabla^i$ yields a local connection form $\omega_i \in\Omega^1(U,\lu(N_i))$. Then $\omega\Sub{U} := \bigoplus_{i\in I}\omega_i \in \Omega^1(U,\lu(A_F))$ is a connection form on $\bQ$ that induces $\nabla$. 
\end{proof}

\begin{defn}
Let $I^\infty_\bP = (\B,\E,D_I,J_I)$ be a massive principal module of KO-dimension $k$ over $M$, where $M$ now has dimension $4$. Let $\nabla$ be a $G_F$-compatible connection on $\E$ . We construct the real almost-commutative manifold $I^\infty_\bP\times_\nabla M$ as in \cref{defn:ACM}. Since $I^\infty_\bP$ is now a massive principal module (instead of a more general internal space), we will refer to $I^\infty_\bP\times_\nabla M$ as a \emph{principal almost-commutative manifold}. 

If $I^\infty_\bP$ is even with grading $\gamma_I$, we obtain a real even almost-commutative manifold $I^\infty_\bP\times_\nabla M$. Since the connection $\nabla$ is $G_F$-compatible, it automatically commutes with $J_I$ and $\gamma_I$ (see \cref{conn-comm}). Moreover, the same condition implies that the induced connection $[\nabla, \cdot]$ on $\End \bE$ restricts to $\bB$. It then follows from \cref{prop:spec_trip} that $I^\infty_\bP\times_\nabla M$ is a real even spectral triple of KO-dimension $4+k$ (mod $8$). 
\end{defn}

We continue in the remainder of this section, as in the usual approach for globally trivial almost-commutative manifolds (see \refscite{Connes96,CCM07} or the review \refcite{vdDvS12}), by generating the gauge fields and Higgs fields via inner fluctuations, and subsequently calculating the spectral action. 

\subsection{Inner fluctuations}
\label{sec:inn_fluc}
Let $(\B,\mH,D)$ be a spectral triple. Consider the \emph{generalised one-forms} given by
$$
\Omega_D^1(\B) := \Bigl\{ \sum_j a_j[D,b_j] \bigm\vert a_j,b_j\in\B \Bigr\} .
$$
For the canonical triple $(\A,L^2(\bS),\sD)$ of a spin manifold $M$, the generalised one-forms $\Omega_\sD^1(\A)$ are simply given by the Clifford multiplication $c$ of the usual one-forms $\Omega^1(M)$. To be precise, for smooth functions $f_1,f_2\in\A$, we obtain $f_1[\sD,f_2] = -if_1c(df_2)$. 

\begin{defn}
Let $(\B,\mH,D,J)$ be a real spectral triple. An \emph{inner fluctuation} of the operator $D$ is a self-adjoint element $A=A^*\in\Omega_D^1(\B)$. Such an inner fluctuation yields the \emph{fluctuated operator}
$$
D_A := D + A + \varepsilon'JAJ^* ,
$$
where the sign $\varepsilon'=\pm1$ is determined by the KO-dimension of the spectral triple (see \cref{defn:spectral_triple}). 
\end{defn}

For the remainder of this paper, we again assume that the dimension of $M$ is equal to $4$. We would like to show that, for a principal almost-commutative manifold, these inner fluctuations yield gauge fields and scalar fields (the latter are interpreted as Higgs fields in the noncommutative Standard Model). The inner fluctuations of the twisted Dirac operator $\sD_\bE := \1\otimes_\nabla\sD$ are (finite sums of) elements of the form
$$
a[\sD_\bE,b] = -i (\1\otimes c) \circ (a[\nabla,b]\otimes\1) ,
$$
for $a,b\in\B$, where $c$ denotes Clifford multiplication. The fact that $\nabla$ is a $G_F$-compatible connection ensures that $a[\nabla,b]\in\B\otimes_\A\Omega^1(M) \simeq \Omega^1(M,\bB)$. 
Requiring that $a[\sD_\bE,b]$ is self-adjoint then implies that $a[\nabla,b] \in \Omega^1(M,\lu(\bB))$, where $\lu(\bB)$ contains the anti-hermitian elements of $\bB$. An arbitrary inner fluctuation of $\sD_\bE$ is thus given by 
$$
\alpha := \sum_{j} a_j[\nabla,b_j] \in \Omega^1(M,\lu(\bB)) .
$$
We can then write $Ja[\sD_\bE,b]J^* = - i (\1\otimes c) \circ (J_I \alpha J_I^*\otimes\1)$, and consequently we have
$$
a[\sD_\bE,b] +  Ja[\sD_\bE,b]J^* = - i (\1\otimes c) \circ ( (\alpha + J_I \alpha J_I^*)\otimes\1).
$$

The inner fluctuations of the operator $D_I\otimes\gamma_5$ are of the form $\phi\otimes\gamma_5$, where
$$
\phi = \phi^* := \sum_j a_j[D_I,b_j] \in \Gamma^\infty(\End(\bE)).
$$

\begin{prop}
\label{prop:fluc}
The fluctuated Dirac operator $D_A := D + A + JAJ^*$ for a real even almost-commutative manifold is of the form
$$
D_A = 1\otimes_{\nabla'}\sD + \Phi\otimes\gamma_5 ,
$$
where $\nabla' := \nabla + \beta$ for some $\beta\in \Omega^1(M,\ad\bP)$, and $\Phi = \Phi^* := D_I + \phi + J_I\phi J_I^*\in\Gamma^\infty(\End(\bE))$ for some $\phi = \phi^* := \sum_j a_j[D_I,b_j]$. 
\end{prop}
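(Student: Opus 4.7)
The plan is to split both $A$ and $JAJ^*$ into a ``gauge'' part that comes from fluctuating the twisted Dirac operator $\sD_\bE$ and a ``Higgs'' part that comes from fluctuating $D_I\otimes\gamma_5$, then reassemble. Concretely, for $A=\sum_j a_j[D,b_j]$ with $D=\sD_\bE+D_I\otimes\gamma_5$, I would write
\begin{align*}
A = \sum_j a_j[\sD_\bE,b_j] + \Big(\sum_j a_j[D_I,b_j]\Big)\otimes\gamma_5 = -i(\1\otimes c)(\alpha\otimes\1) + \phi\otimes\gamma_5,
\end{align*}
with $\alpha=\sum_j a_j[\nabla,b_j]$ and $\phi=\sum_j a_j[D_I,b_j]$, using the identity $a[\sD_\bE,b]=-ic([\nabla,b]a)$ already recorded in \cref{sec:inn_fluc}. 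Since the two summands are ``orthogonal'' in the $\gamma_5$-grading, self-adjointness of $A$ decouples into $\phi=\phi^*\in\Gamma^\infty(\End(\bE))$ and $\alpha\in\Omega^1(M,\lu(\bB))$ (the latter because $\nabla$ is $G_F$-compatible, so $a[\nabla,b]\in\Omega^1(M,\bB)$, and self-adjointness forces anti-hermiticity of the coefficients).

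Next I would conjugate by $J=J_I\otimes J_M$. The only nontrivial computation is that $J_M$ is anti-linear and satisfies $J_M\gamma^\mu=-\gamma^\mu J_M$, $J_M\gamma_5=\gamma_5 J_M$; writing Clifford multiplication locally as $-ic(dx^\mu)=-i\gamma^\mu$, a short calculation yields $J_M(-i\gamma^\mu)J_M^*=-i\gamma^\mu$, so
\begin{align*}
J\bigl(-i(\1\otimes c)(\alpha\otimes\1)\bigr)J^* = -i(\1\otimes c)\bigl((J_I\alpha J_I^*)\otimes\1\bigr), \qquad J(\phi\otimes\gamma_5)J^* = (J_I\phi J_I^*)\otimes\gamma_5.
\end{align*}
Adding everything up gives
\begin{align*}
D_A = \sD_\bE - i(\1\otimes c)\bigl((\alpha+J_I\alpha J_I^*)\otimes\1\bigr) + (D_I+\phi+J_I\phi J_I^*)\otimes\gamma_5.
\end{align*}

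Setting $\beta:=\alpha+J_I\alpha J_I^*$ and $\Phi:=D_I+\phi+J_I\phi J_I^*$, the second summand has exactly the form of a perturbation of the connection one-form of $\nabla$; thus $\sD_\bE - i(\1\otimes c)(\beta\otimes\1) = \1\otimes_{\nabla'}\sD$ for the new connection $\nabla':=\nabla+\beta$, which finishes the algebraic identification $D_A=\1\otimes_{\nabla'}\sD+\Phi\otimes\gamma_5$. Self-adjointness $\Phi=\Phi^*$ is immediate from $D_I^*=D_I$, $\phi^*=\phi$, and the anti-unitarity of $J_I$.

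The key conceptual step, and the only place anything beyond bookkeeping enters, is verifying that $\beta\in\Omega^1(M,\ad\bP)$, so that $\nabla'$ is again a $G_F$-connection. Here I would invoke \cref{lem:adP}: the map $\tau\colon\lu(\bB)\to\lu(\bE)$, $t\mapsto t+J_ItJ_I^*$, has image precisely the subbundle $\ad\bP\subset\lu(\bE)$. Applying $\tau$ pointwise to the $\lu(\bB)$-valued one-form $\alpha$ produces $\beta\in\Omega^1(M,\ad\bP)$, which is what was required. Everything else in the proof is essentially a tracking of indices, and I do not anticipate any real difficulty beyond the Clifford-algebra sign computation and the invocation of \cref{lem:adP}.
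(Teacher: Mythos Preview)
Your proposal is correct and follows exactly the approach the paper takes; you have simply written out in detail what the paper dismisses as ``follows straightforwardly'' after noting that $\beta=\alpha+J_I\alpha J_I^*$ lies in $\Omega^1(M,\ad\bP)$ by \cref{lem:adP} and that $\varepsilon'=1$. The one small slip is the formula $a[\sD_\bE,b]=-ic([\nabla,b]a)$, which should read $-i(\1\otimes c)(a[\nabla,b]\otimes\1)$, consistent with your own definition $\alpha=\sum_j a_j[\nabla,b_j]$; this is a typographical issue and does not affect the argument.
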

\begin{proof}
The expression $\beta = \alpha+J_I\alpha J_I^*$ is an $\ad \bP$-valued $1$-form on $M$ (see \cref{lem:adP}). Noting that $\varepsilon'=1$ by assumption, the statement follows straightforwardly.
\end{proof}

The construction of $I^\infty_\bP\times_\nabla M$ explicitly uses the choice of a connection $\nabla$. However, we now show that this choice is irrelevant once we take the inner fluctuations into account. We need the following lemma.

\begin{lem}
\label{lem:omega1}
Let $\bB\to M$ be a unital $*$-algebra bundle, and let $\til\nabla$ be a connection on $\B = \Gamma^\infty(\bB)$ such that $\widetilde{\nabla} (1) = 0$, where $1$ denotes the identity section. 
Write $\A = C^\infty(M)$. Then 
\begin{align}
\label{eq:innerfluc}
\Big\{ \sum_j a_j \til\nabla(b_j) \bigm\vert a_j,b_j \in \B \Big\} = \B \otimes_\A \Omega^1(M) \simeq \Omega^1(M,\bB) . 
\end{align}
Consequently, $\Omega^1(M,\lu(\bB))$ is given by the anti-hermitian elements in $\big\{ \sum_j a_j \til\nabla(b_j) \mid a_j,b_j \in \B \big\}$. 
\end{lem}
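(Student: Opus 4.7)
The inclusion "$\subset$" is immediate: by definition $\tilde\nabla$ maps $\B$ into $\B\otimes_\A\Omega^1(M)$, which is closed under the left $\B$-action, so every finite sum $\sum_j a_j\tilde\nabla(b_j)$ lies in $\B\otimes_\A\Omega^1(M)$.

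For the reverse inclusion, the plan is to reduce to elementary tensors of the form $b\otimes dg$ with $b\in\B$ and $g\in\A$, and then to use the Leibniz rule together with the hypothesis $\tilde\nabla(1)=0$. First, I would invoke the fact that $\Omega^1(M)$ is generated as an $\A$-module by exact $1$-forms $\{dg : g\in\A\}$; on a compact $M$ this follows by embedding $M\hookrightarrow\mathbb{R}^N$ via Whitney, since then the restricted coordinate functions $x^1,\dots,x^N$ satisfy $\Omega^1(M)=\sum_i \A\,dx^i$. Hence every element of $\B\otimes_\A\Omega^1(M)$ can be written as a finite sum $\sum_j b_j\otimes f_j\,dg_j$, and by absorbing $f_j$ into $b_j$ via the central $\A$-action we reduce to showing that $b\otimes dg\in\{\sum a_j\tilde\nabla(c_j)\}$.

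The key computation is then the following: regarding $g\in\A$ as the central section $g\cdot 1\in\B$, the Leibniz rule for $\tilde\nabla$ as a connection on the right $\A$-module $\B$ gives
\begin{equation*}
\tilde\nabla(g\cdot 1)=\tilde\nabla(1\cdot g)=\tilde\nabla(1)\,g+1\otimes dg=1\otimes dg,
\end{equation*}
where the last equality is precisely where the hypothesis $\tilde\nabla(1)=0$ enters. Therefore $b\otimes dg=b\cdot\tilde\nabla(g\cdot 1)$, which is of the required form, proving equality in \eqref{eq:innerfluc}. This use of $\tilde\nabla(1)=0$ is the main obstacle, in that without it one would pick up the unwanted term $\tilde\nabla(1)g$; it is the only place in the argument where the hypothesis is needed.

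The identification $\B\otimes_\A\Omega^1(M)\simeq\Omega^1(M,\bB)$ is a standard application of Serre--Swan: both sides are identified with $\Gamma^\infty(\bB\otimes T^*M)$ since $\bB$ is locally trivial and $\Omega^1(M)=\Gamma^\infty(T^*M)$. Finally, the claim about anti-hermitian elements follows fibrewise: the involution on $\B\otimes_\A\Omega^1(M)$ is given by $(b\otimes fdg)^*=b^*\otimes f^*dg^*$, and at each point $x\in M$ an element of $T_x^*M\otimes\bB_x$ is anti-hermitian precisely when its $\bB_x$-component lies in $\lu(\bB_x)$. Hence the anti-hermitian part of $\Omega^1(M,\bB)$ coincides with $\Gamma^\infty(T^*M\otimes\lu(\bB))=\Omega^1(M,\lu(\bB))$, completing the proof.
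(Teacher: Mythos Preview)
Your proof is correct and follows essentially the same route as the paper's: both establish the nontrivial inclusion by observing that $\tilde\nabla(g\cdot 1)=1\otimes dg$ for $g\in\A$ (via the Leibniz rule and $\tilde\nabla(1)=0$), and then use that $\Omega^1(M)$ is spanned over $\A$ by exact forms. The only difference is that you spell out the Whitney-embedding justification for $\Omega^1(M)=\{\sum f_j\,dg_j\}$ and make the Leibniz computation explicit, whereas the paper records these steps more tersely.
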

\begin{proof}
Since $\til\nabla(b) \in \B\otimes_\A\Omega^1(M)$, the left hand side of \cref{eq:innerfluc} is clearly contained in the right hand side of \cref{eq:innerfluc}. For the converse inclusion, first suppose that both $a_j$ and $b_j$ are in $\A \subset Z(\B)$. In that case, 
\begin{align*}
\Big\{ \sum_j f_j \til\nabla(g_j \Id_\B) \bigm\vert f_j,g_j \in \A \Big\} \simeq \Big\{ \sum_j f_j dg_j \bigm\vert f_j,g_j \in \A \Big\} = \Omega^1(M). 
\end{align*}
It follows from this that
\begin{align*}
 \Big\{ \sum_j a_j \til\nabla(g_j 1) \bigm\vert a_j \in \B,g_j \in \A \Big\} = \B \otimes_\A \Omega^1(M). 
\end{align*}
Of course, the left-hand side of the previous equation is contained in $\big\{ \sum_j a_j \til\nabla(b_j) \mid a_j,b_j \in \B \big\}$, which proves the other inclusion.
\end{proof}

\begin{prop}
\label{prop:innerfluctuations}
Let $\bP\times_{G_F}F = (\B,\E,J_I)$ be a principal module over $M$ (for simplicity we consider here the massless case $D_I=0$) with two ($G_F$-compatible) connections $\nabla$ and $\nabla'$. 
Then $\1\otimes_{\nabla'}\sD$ is obtained as an inner fluctuation of $\1\otimes_\nabla\sD$.
\end{prop}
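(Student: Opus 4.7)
The plan is to show that the difference $\omega := \nabla'-\nabla$ is an $\ad\bP$-valued one-form, then realise it as the ``symmetrised'' version of an anti-hermitian one-form of the form $\sum_j a_j[\nabla, b_j]$, and finally observe that the corresponding inner fluctuation of $\sD_\bE = \1\otimes_\nabla\sD$ is precisely $\1\otimes_{\nabla'}\sD$.

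First I would check that $\omega := \nabla'-\nabla\in\Omega^1(M,\ad\bP)$. Since $\nabla$ and $\nabla'$ are both $G_F$-compatible, on any trivialising chart $(U,h\Sub{U})$ of $\bP$ their local connection forms $\omega\Sub{U},\omega\Sub{U}'\in\Omega^1(U,\g_F)$ differ by an element of $\Omega^1(U,\g_F)$, and the transformation rule of \cref{defn:connection} shows that these differences glue together to an element of $\Omega^1(M,\ad\bP)$.

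Next, I would lift $\omega$ to an anti-hermitian one-form. By \cref{lem:adP} the map $\tau\colon\lu(\B)\to\Gamma^\infty(\ad\bP)$, $t\mapsto t+J_ItJ_I^*$, is surjective; extending $\tau$ by $\Omega^1(M)$-linearity gives a surjection $\Omega^1(M,\lu(\bB))\to\Omega^1(M,\ad\bP)$. Choose $\alpha\in\Omega^1(M,\lu(\bB))$ with $\omega = \alpha + J_I\alpha J_I^*$. Because $\nabla$ is $G_F$-compatible, the induced connection $[\nabla,\cdot]$ restricts to a connection on $\B$ satisfying $[\nabla,1]=0$, so \cref{lem:omega1} applies and yields $a_j,b_j\in\B$ with $\alpha = \sum_j a_j[\nabla,b_j]$ and with $\alpha$ anti-hermitian (so that the corresponding inner fluctuation is self-adjoint).

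Finally, set $A := \sum_j a_j[\sD_\bE,b_j] = -i(\1\otimes c)\circ(\alpha\otimes\1)$. Then $A=A^*\in\Omega^1_{\sD_\bE}(\B)$ is an inner fluctuation, and
\begin{align*}
A + \varepsilon' J A J^* &= -i(\1\otimes c)\circ\big((\alpha + J_I\alpha J_I^*)\otimes\1\big) = -i(\1\otimes c)\circ(\omega\otimes\1),
\end{align*}
using $\varepsilon'=1$ and $J_M c(\xi) J_M^* = -c(\xi)$ compensated by the sign in the KO-dimension-$4$ tensor product (this is the same computation as in \cref{prop:fluc}). Therefore
\begin{align*}
\sD_\bE + A + \varepsilon' J A J^* &= \1\otimes_\nabla\sD - i(\1\otimes c)\circ(\omega\otimes\1) = \1\otimes_{\nabla+\omega}\sD = \1\otimes_{\nabla'}\sD,
\end{align*}
as desired. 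The main obstacle is the lifting step: one must simultaneously guarantee that $\alpha$ is \emph{anti-hermitian} (so that $A$ is self-adjoint) and expressible as $\sum_j a_j[\nabla,b_j]$, but this is exactly the content of the last assertion of \cref{lem:omega1} combined with the surjectivity in \cref{lem:adP}.
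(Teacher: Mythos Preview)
Your proof is correct and follows essentially the same approach as the paper: both arguments observe that $\nabla'-\nabla\in\Omega^1(M,\ad\bP)$, lift it via \cref{lem:adP} to an anti-hermitian $\alpha\in\Omega^1(M,\lu(\bB))$, and then invoke \cref{lem:omega1} (applied to $\til\nabla=[\nabla,\cdot]$) to realise $\alpha$ as a sum $\sum_j a_j[\nabla,b_j]$. Your version is in fact more explicit than the paper's in spelling out the final computation that $\sD_\bE + A + JAJ^* = \1\otimes_{\nabla'}\sD$; the paper simply asserts that \cref{lem:omega1} finishes the argument.
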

\begin{proof}
The difference between the two connections $\beta := \nabla' - \nabla$ is an element in $\Omega^1(M,\ad\bP)$. By \cref{lem:adP} there exists a (unique) element $\alpha \in \Omega^1(M,\lu) \subset \Omega^1(M,\lu(\bB))$ such that $\beta = \alpha + J_I\alpha J_I^*$.
The connection $\til\nabla = [\nabla, \cdot]$ on $\End(\E)$ restricts to a connection on $\B$, and satisfies $\widetilde{\nabla}(1) = 0$. 
\cref{lem:omega1} now implies that $\beta$ is obtained as an inner fluctuation.
\end{proof}

\begin{remark}
 We have seen that considering inner fluctuations of the Dirac operator essentially replaces the $G_F$-connection $\nabla$ (chosen in the construction of the almost-commutative manifold $I^\infty_\bP\times_\nabla M$) by a different (arbitrary) $G_F$-connection $\nabla'$. Therefore, after taking into account the inner fluctuations, our construction of principal almost-commutative manifolds is essentially independent of the initial choice of the connection $\nabla$. 
 
However, we also note that the endomorphisms $\Phi$ obtained through inner fluctuations in general remain dependent on the initial choice of $D_I$. 
\end{remark}

\subsection{The spectral action}
\label{ssct:spectralaction}

As mentioned immediately below \cref{defn:gauge_theory}, the dynamics of a gauge theory can be obtained from a gauge-invariant action functional. In the case of almost-commutative manifolds, such an action functional can be formulated in terms of the spectral triple. 

Let us first recall the definitions of the bosonic and fermionic action functionals for an arbitrary spectral triple $T = (\A,\mH,D)$. The bosonic part of the action functional is given by the \emph{spectral action} \cite{CC97}, defined as
$$
S_b(T) := \Tr\left( f \left( \frac{D_A}{\Lambda} \right) \right) .
$$
Here $\Tr$ denotes the operator trace on $B(\mH)$, $D_A$ is the fluctuated Dirac operator, $f\colon \mathbb{R} \rightarrow \mathbb{R}$ is some positive even function, and $\Lambda\in\R$ is a (large) cut-off parameter. The function $f$ is assumed to decay sufficiently rapidly at infinity so that the trace of $f(D_A/\Lambda)$ exists. In particular, $f$ could be considered as a smooth approximation to a cut-off function (and as such it counts the number of eigenvalues of $D_A$ whose absolute values are smaller than $\Lambda$), but this viewpoint is not necessary for the following.

If the spectral triple is even (with grading $\gamma$) and has a real structure $J$ of KO-dimension $2$, the \emph{fermionic action} \cite{Connes06} is defined as
$$
S_f(T) := \frac12 \langle J\til\xi,D_A\til\xi\rangle ,
$$
where $\til\xi$ is the Grassmann variable corresponding to a vector $\xi\in\mH^+$ (i.e.\ $\gamma\xi=\xi$). 

We quote the following well-known result:
\begin{prop}[see e.g.\ {\refcite[\S2.6.1]{vdDvS12}}]
\label{prop:gauge_invariant}
For a real spectral triple $T = (\A,\mH,D,J,\gamma)$ of KO-dimension $2$, the action functionals $S_b(T)$ and $S_f(T)$ are invariant under the action of the gauge group $\G(T)$. 
\end{prop}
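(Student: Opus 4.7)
The plan is to reduce gauge invariance of both action functionals to a single covariance identity for the fluctuated Dirac operator, namely
\[
D_{A^u} = g\,D_A\,g^{*}, \qquad g := uJuJ^{*},
\]
where, for $u\in\mU(\A)$, the gauge-transformed one-form is defined by $A^{u} := uAu^{*} + u[D,u^{*}]$. First I would verify this covariance by direct computation: on the one hand $A\mapsto uAu^{*}+u[D,u^{*}]$ is exactly the conjugation of $D+A$ by $u$, and on the other hand, using the order-zero condition $[a,JbJ^{*}]=0$ and the first-order condition $[[D,a],JbJ^{*}]=0$, one shows that $JAJ^{*}$ commutes with $u$ and that $\widetilde{A^{u}} = \tilde{u}\,\tilde{A}\,\tilde{u}^{*} + \tilde{u}[D,\tilde{u}^{*}]$ with $\tilde u := JuJ^{*}$. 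Combining these two observations (and using $JD=DJ$, which in KO-dimension $2$ gives $\tilde{D}=JDJ^{*}=D$) yields the displayed covariance identity.

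For the bosonic action, once covariance is in hand, the argument is immediate: since $g$ is unitary, $D_{A^u}$ and $D_A$ are unitarily equivalent, hence $f(D_{A^u}/\Lambda) = g\,f(D_A/\Lambda)\,g^{*}$, and cyclicity of the trace yields $S_b(T^u)=S_b(T)$.

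For the fermionic action I would use, in addition to covariance, the commutation relation $Jg=gJ$. This follows from the identity $Ju=\tilde u\,J$ (equivalent to the definition of $\tilde u$), which gives $JuJ=\varepsilon\tilde{u}$ and consequently
\[
Jg \;=\; (JuJ)(uJ^{*}) \;=\; \varepsilon^{2}\,\tilde{u}\,u\,J \;=\; u\tilde{u}\,J \;=\; gJ,
\]
where the commutation $u\tilde u=\tilde u u$ is again the order-zero condition. Under a gauge transformation the fermion is mapped $\tilde\xi\mapsto g\tilde\xi$, and $g$ preserves the positive eigenspace $\mH^{+}$ because $\gamma u=u\gamma$ together with $\gamma J=-J\gamma$ (KO-dim $2$) gives $\gamma g=g\gamma$. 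The computation then reads
\[
S_f(T^u) \;=\; \tfrac12\langle J(g\tilde\xi),\,D_{A^u}(g\tilde\xi)\rangle \;=\; \tfrac12\langle Jg\tilde\xi,\,gD_A\tilde\xi\rangle \;=\; \tfrac12\langle gJ\tilde\xi,\,gD_A\tilde\xi\rangle \;=\; S_f(T),
\]
the last equality being the unitarity of $g$.

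The main obstacle is really the covariance identity $D_{A^u}=gD_Ag^{*}$: once the order-zero and first-order conditions have been exploited correctly to move $u$ and $\tilde u$ past each other and past $[D,\cdot]$, the rest of the proof is essentially bookkeeping. A minor secondary technical point is that for the bosonic action one must check that $D_{A^u}$ has the same spectral properties (self-adjointness, decay of $f$ along the spectrum) as $D_A$, but this is automatic from unitary equivalence and the standing assumptions on $f$.
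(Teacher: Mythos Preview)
Your proposal is correct and follows the standard argument. The paper itself does not give a proof of this proposition but merely cites \cite[\S2.6.1]{vdDvS12}; your sketch (establish the covariance $D_{A^{u}}=gD_{A}g^{*}$ from the order-zero and first-order conditions together with $JD=DJ$, then use unitary invariance of the trace for $S_{b}$ and the identity $Jg=gJ$ plus $\gamma g=g\gamma$ for $S_{f}$) is exactly the argument found in that reference.
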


We now provide explicit formulas for the spectral action of principal almost-commutative manifolds (formulas for the fermionic action will only be given for the example of electrodynamics in \cref{sec:ex_ED}). The spectral action was calculated in \refscite{CCM07,Connes06} for the product triple $M\times F$, where $F$ was chosen in order to describe the full Standard Model of elementary particle physics. In the remainder of this section we largely follow the notation of \refcite{vdDvS12}, where also detailed derivations of the formulas provided here can be found. 

For the canonical triple $(C^\infty(M),L^2(\bS),\sD)$ of a smooth compact $4$-dimensional Riemannian spin manifold $M$, the spectral action yields the asymptotic formula 
\begin{equation*}
S_b(M) \sim_{\Lambda\to\infty} \int_M \La_M(g_{\mu\nu}) \sqrt{|g|} d^4x + \mO(\Lambda^{-1}) ,
\end{equation*}
where $g$ is the Riemannian metric on $M$. 
The Lagrangian $\La_M$ is given by 
\begin{align}
\label{eq:canon_Lagr}
\La_M(g_{\mu\nu}) := \frac{f_4\Lambda^4}{2\pi^2} - \frac{f_2\Lambda^2}{24\pi^2} s + \frac{f(0)}{16\pi^2} \Big(\frac1{30} \Delta s - \frac1{20} C_{\mu\nu\rho\sigma} C^{\mu\nu\rho\sigma} + \frac{11}{360}R^*R^* \Big) .
\end{align}
Here $s$ denotes the scalar curvature of $M$, $\Delta$ is the scalar Laplacian, $C$ is the Weyl curvature, and $R^*R^*$ is a topological term, which integrates to (a multiple of) the Euler class. The coefficients $f_k$ (for $k>0$) are the moments of $f$, defined as
$$
f_k := \int_0^\infty f(t) t^{k-1} dt .
$$

We now provide the spectral action for a principal almost-commutative manifold. As all calculations are local, the result is exactly the same as for the spectral action of a product triple $M\times F$, and we refer to \refcite{vdDvS12} for the detailed calculations. 

In \cref{prop:fluc}  we saw that the fluctuated Dirac operator is determined by a connection $\nabla'= \nabla + \beta$ and an endomorphism $\Phi$ on $\bE$. 
From here on we shall work on a local trivialisation $(U,h\Sub{U})$, where we can write $\nabla|_U = h\Sub{U}^{-1} \circ (d + \omega\Sub{U}) \circ h\Sub{U}$, and define the local $\g_F$-valued $1$-form $B := \omega\Sub{U} + h\Sub{U}\circ\beta|_U\circ h\Sub{U}^{-1} \in \Omega^1(U,\g_F)$ (for ease of notation we do not make the dependence of $B$ on the local chart $U$ explicit). Thus $B$ is the local connection form for $\nabla'$. Using a local coordinate basis $\partial_\mu$, we define $B_\mu := B(\partial_\mu) \in C^\infty(U,\g_F)$. We omit the local trivialisation $h\Sub{U}$ from our notation, so we write e.g.\ $\nabla'_\mu = \partial_\mu + B_\mu$. Furthermore, we introduce the notation 
\begin{align*}
D_\mu\Phi &:= [\nabla'_\mu,\Phi] = \partial_\mu\Phi + [B_\mu,\Phi] , &
F_{\mu\nu} &:= \partial_\mu B_\nu - \partial_\nu B_\mu + [B_\mu,B_\nu] .
\end{align*}

\begin{prop}
\label{prop:spectralaction}
The spectral action for a principal almost-commutative manifold $I^\infty_\bP\times_\nabla M$ is asymptotically given by the local formula
$$
S_b(I^\infty_\bP\times_\nabla M) \sim_{\Lambda\to\infty} \int_M \La(g_{\mu\nu},B_\mu,\Phi) \sqrt{|g|} d^4x + \mO(\Lambda^{-1}) ,
$$
for
\begin{align*}
\La(g_{\mu\nu}, B_\mu, \Phi) := N \La_M(g_{\mu\nu}) + \La_B(g_{\mu\nu},B_\mu) + \La_H(g_{\mu\nu}, B_\mu, \Phi) .
\end{align*}
Here $\La_M(g_{\mu\nu})$ is given in \cref{eq:canon_Lagr}, and $N$ is the rank of $\bE$. $\La_B$ gives the kinetic term of the gauge field and equals
\begin{align*}
\La_B(g_{\mu\nu},B_\mu) := \frac{f(0)}{24\pi^2} \tr(F_{\mu\nu}F^{\mu\nu}) ,
\end{align*}
where $\tr$ denotes the fibre-wise trace for endomorphisms on the bundle $\bE\otimes \bS$.
$\La_H$ gives the Higgs Lagrangian 
given by
\begin{align*}
\La_H(g_{\mu\nu}, B_\mu, \Phi) &:= -\frac{2f_2\Lambda^2}{4\pi^2} \tr(\Phi^2) + \frac{f(0)}{8\pi^2} \tr(\Phi^4) + \frac{f(0)}{24\pi^2} \Delta\big(\tr(\Phi^2)\big) \\
&\qquad\qquad\qquad\qquad\qquad\qquad+ \frac{f(0)}{48\pi^2} s\tr(\Phi^2) + \frac{f(0)}{8\pi^2} \tr\big((D_\mu \Phi)(D^\mu \Phi)\big) ,
\end{align*}
where the first two terms form the Higgs potential, the third is a boundary term, the fourth couples the Higgs field to the scalar curvature, and finally we have the kinetic term including interactions with the gauge field. 
\end{prop}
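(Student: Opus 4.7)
The plan is to reduce the computation to the standard heat-kernel calculation used for globally trivial almost-commutative manifolds, exploiting the fact that the Seeley-DeWitt coefficients are \emph{local} quantities and that, on any local trivialisation $(U,h\Sub{U})$, the fluctuated operator $D_A$ looks exactly like the corresponding operator in the globally trivial case.

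First I would square the fluctuated Dirac operator. By \cref{prop:fluc} we have $D_A = \1 \otimes_{\nabla'} \sD + \Phi \otimes \gamma_5$, where $\nabla'$ is a $G_F$-compatible connection on $\bE$ and $\Phi$ is a hermitian endomorphism of $\bE$. A standard Lichnerowicz-Weitzenb\"ock computation (for the connection $\nabla' \otimes 1 + 1 \otimes \nabla^\bS$ on $\bE \otimes \bS$, combined with the cross-terms coming from $\Phi \otimes \gamma_5$) gives
\begin{align*}
D_A^2 = -\Delta^{\bE \otimes \bS} + \tfrac{1}{4}s + \tfrac{1}{2} c(F_{\nabla'}) + \Phi^2 \otimes 1 - i \,c\bigl([\nabla',\Phi]\bigr) \otimes \gamma_5 ,
\end{align*}
which locally, on $U$, becomes a generalised Laplacian of the form $-(g^{\mu\nu}\partial_\mu\partial_\nu + A^\mu\partial_\mu + B)$ acting on sections of $(\bE \otimes \bS)|_U$. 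I would then identify the associated bundle endomorphism $E$ and connection one-form by comparing to the canonical form $-(\nabla^\mu\nabla_\mu + E)$, extracting $E$ in terms of $s$, $F_{\mu\nu}$, $\Phi$ and $D_\mu\Phi$.

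Next, I would apply the heat kernel expansion: for a generalised Laplacian $P$ on a compact manifold,
\begin{align*}
\Tr\bigl( f(P/\Lambda^2) \bigr) \sim_{\Lambda\to\infty} \sum_{k\geq 0} f_{4-k}\, \Lambda^{4-k}\, a_k(P) ,
\end{align*}
with the Seeley-DeWitt coefficients $a_k(P) = \int_M \tr\bigl(\mathfrak{a}_k(x,P)\bigr) \sqrt{|g|}d^4x$ given by universal polynomial expressions in $R_{\mu\nu\rho\sigma}$, $E$, $\Omega_{\mu\nu}$ (the curvature of the connection on $\bE \otimes \bS$) and their covariant derivatives. Applying this to $P = D_A^2/\Lambda^2$ and collecting the nonzero coefficients $a_0, a_2, a_4$ (the odd ones vanish) yields an integral over $M$ of a local polynomial. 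Since each $\mathfrak{a}_k(x,P)$ is a pointwise expression, it may be computed in any local trivialisation, where the formulas coincide with the ones in \refcite{vdDvS12}, and the resulting global integrand is well-defined irrespective of the trivialisation chosen.

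Finally, I would carry out the fibre-wise trace. The trace over $\bS$ produces factors of $4$ (dimension of the spinor representation) and kills terms linear in gamma-matrices, while the trace over $\bE$ contributes the rank $N$ to the purely gravitational piece, separates the Yang-Mills piece $\tr(F_{\mu\nu}F^{\mu\nu})$ (using $\tr(c(F)^2) \propto F_{\mu\nu}F^{\mu\nu}$), and gives the Higgs-like terms $\tr(\Phi^2)$, $\tr(\Phi^4)$, $\tr((D_\mu\Phi)(D^\mu\Phi))$ and $s\,\tr(\Phi^2)$ in the stated coefficients. Combining these with the canonical gravity Lagrangian $N \La_M(g_{\mu\nu})$ obtained from the $a_k$ terms not involving $B$ or $\Phi$ yields the formula for $\La(g_{\mu\nu}, B_\mu, \Phi)$. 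The main conceptual point, rather than an obstacle, is verifying that the globally non-trivial structure of $\bE$ does not change the local pointwise formulas: this is automatic because the Seeley-DeWitt expansion is constructed from local geometric invariants, and the connection $\nabla'$ together with $\Phi$ provides these invariants globally on $\bE \otimes \bS$.
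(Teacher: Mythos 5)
Your proposal is correct and follows the same reasoning as the paper: the paper states (immediately before the proposition) that ``as all calculations are local, the result is exactly the same as for the spectral action of a product triple $M\times F$,'' and refers to \refcite{vdDvS12} for the heat-kernel computation. You have simply supplied the details that the paper delegates to that reference, together with the correct observation that the locality of the Seeley--DeWitt coefficients is what makes the globally trivial computation apply pointwise in the globally non-trivial setting.
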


\begin{remark}
Although the above explicit formulas for the spectral action are exactly the same as for a product triple $M\times F$, there can nonetheless be a significant difference, because the constant matrix $D_F$ is replaced by a global endomorphism $D_I$. For a product triple $M\times F$, the inner fluctuations of $\gamma_5\otimes D_F$ also lead to global endomorphisms of the form $\gamma_5\otimes\Phi$, where $\Phi\in\Gamma^\infty(\End(\bE))$ (though this $\Phi$ would be more restricted than in our construction). 
However, there may be components of $D_F$ that are not affected by inner fluctuations, and hence remain constant (this occurs for instance for the Majorana masses of right-handed neutrinos in the case of the noncommutative Standard Model \cite{CCM07}). In the case of a principal almost-commutative manifold, these components could be non-constant from the start. Hence, compared to the case of product triples, derivatives of the field $\Phi$ might contain additional terms. This difference is not yet visible in the general formulas above, but it may have consequences once we look at concrete examples (see \cref{rem:var_mass}). 
\end{remark}

\subsection{Gauge theory}

The results of this section can be summarised as follows, which is the main result of our paper:

\begin{thm}
\label{thm:main}
Let $M$ be a smooth compact $4$-dimensional Riemannian spin manifold. Consider a massive even principal module $I^\infty_\bP = (\B, \E, D_I, \gamma_I, J_I)$ of $KO$-dimension $k$ over $M$. Let $\nabla$ be a $G_F$-compatible connection on $\E$. If $M$ is simply connected, then the principal almost-commutative manifold $I^\infty_\bP\times_\nabla M$ of $KO$-dimension $4 + k$ (mod $8$) describes a classical gauge theory over $M$ with gauge group $\G(I^\infty_\bP\times_\nabla M)$. 
\end{thm}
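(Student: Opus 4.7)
The plan is to verify each ingredient in \cref{defn:gauge_theory} by identifying, within the data of $I^\infty_\bP\times_\nabla M$, the underlying principal $G_F$-bundle, the gauge potentials, the particle (matter) fields, and a gauge-invariant action functional, and then to match the algebraically defined gauge group with the classical one.

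First, by construction, a massive principal module $I^\infty_\bP = (\B,\E,D_I,\gamma_I,J_I)$ already comes equipped with a principal $G_F$-bundle $\bP\to M$, namely the one used in the definition $\bP\times_{G_F}F$. This $\bP$ provides the geometric \emph{setting} of the gauge theory. The vector bundle $\bE = \bP\times_{G_F}\mH_F$ (and its algebra subbundle $\bB = \bP\times_{G_F}A_F$) is then an associated bundle of $\bP$, and $\bE$ serves as the bundle whose sections (or $L^2$-sections) describe the elementary fermion fields, in complete analogy with the globally trivial product-triple situation.

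Second, I would invoke \cref{prop:fluc} to identify the gauge potentials and Higgs-like fields as inner fluctuations of the Dirac operator. The proposition shows that $D_A = \1\otimes_{\nabla'}\sD + \Phi\otimes\gamma_5$, where $\nabla' = \nabla + \beta$ for some $\beta\in\Omega^1(M,\ad\bP)$ and $\Phi\in\Gamma^\infty(\End\bE)$ is self-adjoint. Since $\beta$ takes values in $\ad\bP$ (cf.\ \cref{lem:adP}), adding it to the $G_F$-compatible connection $\nabla$ produces another $G_F$-compatible connection on $\bP$ (i.e.\ another connection form in $\Omega^1(U,\g_F)$ in each chart). Combined with \cref{prop:innerfluctuations}, which shows that the initial choice of $\nabla$ is immaterial at the level of fluctuated operators, this identifies the space of gauge potentials produced by inner fluctuations with the affine space of connections on $\bP$, as required by \cref{defn:gauge_theory}. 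The endomorphism $\Phi$ plays the role of a (possibly varying) Higgs-like field transforming in the appropriate associated bundle, as discussed in \cref{remark:varying_mass}.

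Third, I would invoke \cref{prop:spectralaction} to exhibit the bosonic action functional $S_b(I^\infty_\bP\times_\nabla M)$ as a local Lagrangian in $g_{\mu\nu}$, $B_\mu$, and $\Phi$, thereby producing the explicit action functional required by \cref{defn:gauge_theory}. Invariance of both $S_b$ and the fermionic action $S_f$ under the gauge group $\G(I^\infty_\bP\times_\nabla M)$ then follows from the general \cref{prop:gauge_invariant} applied to the real even spectral triple $I^\infty_\bP\times_\nabla M$, whose $KO$-dimension is $4+k \pmod 8$. Finally, \cref{coro:gaugegroupsisomorphic} identifies $\G(I^\infty_\bP\times_\nabla M)\simeq\G(\bP\times_{G_F}F)$ with $\Gamma^\infty(\Ad\bP)\simeq\G(\bP)$, using the simple-connectedness of $M$; this shows that the algebraic gauge group of the spectral triple coincides with the classical gauge group of the principal $G_F$-bundle $\bP$, completing the identification of $I^\infty_\bP\times_\nabla M$ with a classical $G_F$-gauge theory on $M$.

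The main obstacle, conceptually, is not any single hard computation but rather the bookkeeping needed to make sure that the various pieces assembled above genuinely fit the classical definition: in particular, that the $\ad\bP$-valued one-form $\beta$ obtained from inner fluctuations really varies over all connections on $\bP$ (this is where \cref{lem:adP} and \cref{prop:innerfluctuations} do the work), and that the identification of algebraic and geometric gauge groups requires the simple-connectedness hypothesis (this is where \cref{coro:gaugegroupsisomorphic} is essential). All other steps are then immediate invocations of results already established in the preceding sections.
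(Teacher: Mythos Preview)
Your proposal is correct and follows essentially the same approach as the paper's proof: both identify the principal $G_F$-bundle $\bP$ from the construction of the principal module, invoke \cref{coro:gaugegroupsisomorphic} (using simple-connectedness) to match the algebraic and geometric gauge groups, use the inner-fluctuation results (\cref{prop:fluc,prop:innerfluctuations}) to identify the gauge potentials with connections on $\bP$, and appeal to \cref{prop:gauge_invariant} for gauge invariance of the action. Your version is slightly more expansive in spelling out the role of \cref{lem:adP} and \cref{prop:spectralaction}, but the logical skeleton is identical.
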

\begin{proof}
The principal module $I^\infty_\bP$ is constructed from a principal $G_F$-bundle $\bP$ over $M$, such that $\B$ and $\E$ are given by smooth sections of bundles associated to $\bP$. 
By assumption $M$ is simply connected, so it follows from \cref{coro:gaugegroupsisomorphic} that we have the isomorphism $\G(I^\infty_\bP\times_\nabla M) \simeq \G(\bP)$. We have seen in \cref{sec:inn_fluc} that the inner fluctuations transform a $G_F$-compatible connection on $\E$ to another $G_F$-compatible connection, which hence corresponds to a connection on $\bP$ (and by \cref{prop:innerfluctuations} any connection on $\bP$ can be obtained in this way). Finally, the spectral action and the fermionic action provide a gauge-invariant action functional (see \cref{prop:gauge_invariant}). 
Thus the principal almost-commutative manifold $I^\infty_\bP\times_\nabla M$ provides all the necessary ingredients for a classical gauge theory over $M$, as described in \cref{defn:gauge_theory}. 
\end{proof}

\section{Examples}
\label{sec:examples}

In this section we adapt two simple examples of (globally trivial) gauge theories in the context of noncommutative geometry to the globally non-trivial case. In each example, we assume (as before) that the underlying manifold $M$ is a smooth compact $4$-dimensional Riemannian spin manifold. 

In \cref{sec:ex_YM} we describe the Yang-Mills case that was studied in \refcite{BvS11}, and provided the motivation for this work. In particular, we show that the Yang-Mills case provides examples of principal modules that cannot be described by gauge modules. 
In \cref{sec:ex_ED} we discuss the abelian gauge theory of electrodynamics, based on the (globally trivial) description in \refcite{vdDvS13}. We will describe the resulting (globally non-trivial) gauge theory, and provide explicit formulas for both the spectral action and the fermionic action. 

\subsection{Yang-Mills}
\label{sec:ex_YM}

Globally trivial Yang-Mills theory was already studied  in the setting of spectral triples by Chamseddine and Connes \cite{CC97}. It is described by the (real, even) finite spectral triple 
\begin{align*}
F\Sub{\YM} :=  (M_N(\mathbb{C}), M_N(\mathbb{C}), D_F=0 , J_F = (\cdot)^*, \gamma_F = \text{id}),
\end{align*}
where the algebra $M_N(\mathbb{C})$ acts on the Hilbert space $M_N(\mathbb{C})$ by left-multiplication. The $KO$-dimension of this spectral triple is $0$ and the structure group $G_F$ is equal to $PSU(N)$. 

This has been generalised to the globally non-trivial case in \refcite{BvS11}. Let $\bB\to M$ be an arbitrary $*$-algebra bundle with fibre $M_N(\mathbb{C})$, and let $\B = \Gamma^\infty(\bB)$ be its unital, involutive $C^\infty(M)$-module algebra of sections. We consider the real even internal space
$$
I^\infty\Sub{\YM} := (\B,\B,D_I=0,J_I=(\cdot)^*,\gamma_I=\text{id}) .
$$

For a general principal module $\bP \times_{G_F} F$ we do not know how to reconstruct the principal bundle $\bP$ from the module. However, in the Yang-Mills case we do.

\begin{lem}
 There exists a principal $PSU(N)$-bundle $\bP\to M$ (unique up to isomorphism) such that $I^\infty\Sub{\YM} \simeq \bP\times_{PSU(N)}F\Sub{\YM}$.
\end{lem}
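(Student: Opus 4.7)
The plan is to exploit the fact that every unital $*$-algebra bundle with fibre $M_N(\C)$ has structure group $PSU(N)$, and then invoke the Reconstruction Theorem directly.

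First, since $\bB$ is a $*$-algebra bundle with fibre $M_N(\C)$ (in the strong sense of \cref{defn:cbundles}, see \cref{rem:algebrabundles}), its transition functions $g_{ij}\colon U_i\cap U_j \to \Aut_*(M_N(\C))$ take values in the group of unital $*$-automorphisms. By the Skolem--Noether theorem, every such automorphism is inner and unitarily implemented, yielding a canonical isomorphism
\begin{align*}
 \Aut_*(M_N(\C)) \simeq \mU(M_N(\C))/\mU(\C\cdot\Id_N) = U(N)/U(1) = PSU(N).
\end{align*}
Thus $\{g_{ij}\}$ is a $PSU(N)$-valued cocycle, and \cref{thm:reconstruction} produces a principal $PSU(N)$-bundle $\bP\to M$ with transition functions $g_{ij}$, unique up to isomorphism (the equivalence class of the cocycle being intrinsic to $\bB$).

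Next I would verify that the associated-bundle constructions reproduce $I^\infty\Sub{\YM}$. Since $PSU(N)$ acts on $A_F = M_N(\C)$ by conjugation, the associated bundle $\bP\times_{PSU(N)} A_F$ has precisely the transition functions $g_{ij}$ and is therefore isomorphic to $\bB$ as a $*$-algebra bundle. For the Hilbert-space factor, a direct computation using $J_F = (\cdot)^*$ shows that the element $uJ_FuJ_F^* \in G_F$ acts on $\mH_F = M_N(\C)$ by $a \mapsto uau^*$, i.e.\ exactly the conjugation action. Hence the associated hermitian vector bundle $\bP\times_{PSU(N)}\mH_F$ is (canonically, as a hermitian bundle) identical to $\bP\times_{PSU(N)} A_F \simeq \bB$, and under this identification the left action of $\bP\times_{PSU(N)} A_F$ on $\bP\times_{PSU(N)} \mH_F$ becomes left multiplication in $\bB$. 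Since $J_F$ and $\gamma_F = \Id$ are $PSU(N)$-equivariant by construction, they descend to $(\cdot)^*$ and $\Id$ on $\bB$, matching $J_I$ and $\gamma_I$.

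Uniqueness follows because any two principal $PSU(N)$-bundles giving rise to $\bB$ via the associated-bundle construction must have cocycles in the same equivalence class (namely the one extracted from $\bB$), hence are isomorphic as principal bundles. The only slightly delicate point is the verification that the conjugation action of $PSU(N)$ on $M_N(\C)$ (regarded as a Hilbert space) really coincides with the action induced by $uJ_FuJ_F^*$; this is a short calculation but is the crucial input that forces the two associated bundles to agree, and thereby lets us identify $I^\infty\Sub{\YM}$ with $\bP\times_{PSU(N)}F\Sub{\YM}$. I do not foresee a serious obstacle beyond bookkeeping of this identification.
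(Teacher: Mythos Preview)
Your proposal is correct and follows essentially the same route as the paper: extract $PSU(N)$-valued transition functions from $\bB$ via $\Aut_*(M_N(\C))\simeq PSU(N)$, apply the Reconstruction Theorem, and observe that uniqueness follows because $PSU(N)$ is the full automorphism group of the fibre. Your version is in fact more thorough than the paper's three-line proof, since you explicitly verify that the Hilbert-space bundle, the left action, and the operators $J_I,\gamma_I$ are all recovered by the associated-bundle construction (via the computation that $uJ_FuJ_F^*$ acts by conjugation), whereas the paper leaves these checks implicit.
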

\begin{proof}
The transition functions of the $*$-algebra bundle $\bB$ take values in $\text{Aut}(M_N(\mathbb{C})) \simeq PSU(N)$ (where $PSU(N)$ acts on $M_N(\C)$ by conjugation). Hence by \cref{thm:reconstruction} we can reconstruct a principal $PSU(N)$-bundle $\bP$ such that $\bB \simeq \bP\times_{PSU(N)}M_N(\C)$. Since $PSU(N)$ is the full automorphism group of the fibre, the bundle $\bP$ is uniquely defined. 
\end{proof}

\begin{remark}
 Note that $I^\infty\Sub{\YM}$ will in general \emph{not} be a gauge module. If this were the case, the structure group $PSU(N)$ of $\bB$ could be lifted to $U(N)$ by \cref{prop:gaugemodulestructuregroup}. 
This is only possible if the Dixmier-Douady class $\delta(\bB) \in \check H^3(M,\mathbb{Z})$ is identically zero (see e.g.\ \refcite[Ch.5]{RW98} or \refcite{Sch09} for more details on Dixmier-Douady classes), which is equivalent to saying that $\bB$ is an endomorphism bundle 
(note that this is consistent with the condition $\bB_i = \End(\bE_i)$ in \cref{defn:gaugemodule}). 
Since not every $*$-algebra bundle with fibre $M_N(\mathbb{C})$ has zero Dixmier-Douady class (see e.g.\ \refcite{Sch09}), this example shows that there exist principal modules that are not gauge modules. However, in our description of gauge theories in \cref{sec:gauge_theory} we have restricted our attention to simply connected, $4$-dimensional manifolds, and it turns out that in this case the Dixmier-Douady class always vanishes (as we will prove below). It is unclear if there exist other examples of principal modules that are not gauge modules. 
\end{remark}

\begin{prop}
Let $\bB$ be a $*$-algebra bundle with fibre $M_N(\C)$ over a simply connected, $4$-dimensional, oriented, compact manifold $M$. Then the Dixmier-Douady class of $\bB$ is identically zero.
\end{prop}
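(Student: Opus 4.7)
The proof strategy is essentially a purely topological one: we show that the target cohomology group in which the Dixmier–Douady class lives is already the zero group, under the stated hypotheses on $M$. The plan is to combine Poincaré duality with simply-connectedness to conclude that $\check H^3(M,\Z) = 0$, so there is nothing to check about the specific class $\delta(\bB)$.

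First I would recall that, for a separable $C^*$-algebra bundle with fibre $M_N(\C)$ (equivalently, a principal $PU(N)$-bundle), the Dixmier–Douady invariant is defined as a class
\begin{align*}
\delta(\bB) \in \check H^3(M,\Z),
\end{align*}
and for a paracompact smooth manifold this Čech cohomology group agrees with singular cohomology $H^3(M,\Z)$.

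Next, since $M$ is a closed (compact, without boundary) oriented $4$-manifold, Poincaré duality applies and yields an isomorphism
\begin{align*}
H^3(M,\Z) \;\simeq\; H_1(M,\Z).
\end{align*}
By the Hurewicz theorem, $H_1(M,\Z)$ is the abelianisation of $\pi_1(M)$. The hypothesis that $M$ is simply connected means $\pi_1(M) = 0$, hence $H_1(M,\Z) = 0$, and therefore $H^3(M,\Z) = 0$.

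Combining these two steps, $\delta(\bB) \in \check H^3(M,\Z) \simeq H^3(M,\Z) = 0$, so $\delta(\bB)$ is identically zero, as claimed. The only mild subtlety — which I would mention explicitly but expect to cause no real difficulty — is the implicit assumption that ``compact'' in the statement means ``closed'' (no boundary), since this is what Poincaré duality in the stated form requires; this is consistent with the standing assumption on $M$ elsewhere in the paper (a compact Riemannian spin manifold, treated as closed throughout).
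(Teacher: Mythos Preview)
Your proof is correct and follows essentially the same route as the paper: simply-connectedness gives $H_1(M,\Z)=0$ (via Hurewicz/abelianisation), Poincar\'e duality then forces $H^3(M,\Z)=0$, and the identification of \v Cech with singular cohomology on $M$ kills the Dixmier--Douady class. The only cosmetic difference is the order of presentation and your explicit remark about closedness, which the paper leaves implicit.
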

\begin{proof}
Since $M$ is simply connected, its fundamental group is trivial, and hence (see e.g.\ \refcite[Theorem 2.A.1]{Hatcher02}) the first singular homology group $H_1(M,\Z)$ is trivial. 
By Poincar\'e duality (see e.g.\ \refcite[Proposition 3.25 \& Theorem 3.30]{Hatcher02}) it then follows that the third cohomology group $H^3(M,\Z)$ is also trivial. 
The Dixmier-Douady class by definition takes values in the third \v Cech cohomology group $\check H^3(M,\Z)$. Since for compact manifolds these cohomology groups are equal, it follows that $\check H^3(M,\Z)$ is trivial and hence that the Dixmier-Douady class of $\bB$ must vanish.
\end{proof}

A connection $\nabla\colon\B \to \B\otimes_\A\Omega^1(M)$ is $PSU(N)$-compatible (\textit{cf.} \cref{sct:pacm}) if and only if it satisfies the algebraic identities (see \refcite[\S3.2]{BvS11})
\begin{align*}
\nabla(ab) &= \nabla(a)b + a\nabla(b) , & (\nabla a)^* &= \nabla(a^*) , & \forall a,b\in\B .
\end{align*}
Such a connection thus corresponds to a connection form $\omega$ on $\bP$. 
If we pick any such connection, we can then consider the (principal) almost-commutative manifold
\begin{align*}
I^\infty\Sub{\YM} \times_\nabla M := 
\big( \Gamma^\infty(\bB) , L^2(\bB\otimes\bS) , \sD_\bB , J_I\otimes J_M , \gamma_I\otimes\gamma_5 \big) .
\end{align*}
If $M$ is simply connected, the group $\G(I^\infty\Sub{\YM}\times_\nabla M)$ is isomorphic to $\G(\bP)$, and $I^\infty\Sub{\YM} \times_\nabla M$ describes a $PSU(N)$ gauge theory $(\bP,\omega)$ over $M$. We denote the local connection form of $\nabla$ by $B_\mu$, and its curvature tensor by $F_{\mu\nu}$. From \cref{prop:spectralaction} we find that the spectral action yields the Lagrangian
$$
\La(g_{\mu\nu},B_\mu) = N^2 \La_M(g_{\mu\nu}) + \La\Sub\YM(g_{\mu\nu},B_\mu) ,
$$
where the Yang-Mills Lagrangian is given (up to a normalisation constant) by the usual expression:
$$
\La\Sub\YM(g_{\mu\nu},B_\mu) := \frac{f(0)}{24\pi^2} \tr(F_{\mu\nu}F^{\mu\nu}) .
$$

\subsection{Electrodynamics}
\label{sec:ex_ED}
The example of (globally trivial) Electrodynamics in the context of noncommutative geometry appeared in \refcite{vdDvS13}. Here we describe its generalisation to the globally non-trivial case. The finite spectral triple for electrodynamics is given by \cite{vdDvS13}
$$
F\Sub{\ED} := (\C^2, \C^4, D_F, \gamma_F, J_F) .
$$
We shall generalise this finite triple to a massive even gauge module $I^\infty\Sub{\ED}$ over $M$. First, we set the algebra $\B$ to be of the form
$$
\B := \A\oplus\A = C^\infty(M) \oplus C^\infty(M) .
$$
Let $\bL$ be a complex line bundle over $M$, with a given hermitian structure, so that its structure group is $U(1)$. 
We shall take two identical copies of this line bundle, which we denote by $\bE_L$ and $\bE_R$, with smooth sections $\E_L = \Gamma^\infty(\bE_L)$ and $\E_R = \Gamma^\infty(\bE_R)$. Then the Hilbert $\B-\A$-bimodule $\E$ is defined as 
$$
\E := (\E_L\oplus \E_R) \oplus (\overline{\E_L}\oplus \overline{\E_R}) ,
$$
where the first component of $\B$ acts on $\E_L\oplus \E_R$, and the second component acts on its conjugate. On this decomposition, the grading is defined as $\gamma_I := 1\oplus(-1)\oplus(-1)\oplus1$. The real structure $J_I$ is the anti-linear map $\E_{L,R}\mapsto\bar{\E_{L,R}}$ and $\bar{\E_{L,R}}\mapsto \E_{L,R}$ of KO-dimension $6$ (see \cref{defn:spectral_triple}). We then have the subalgebra $\B_J \simeq \A \subset \B$, where the injection is given by $a\mapsto a\oplus a$. Imposing all conditions in \cref{defn:mass_matrix}, the `mass matrix' $D_I$ is restricted to be of the form 
$$
D_I := \matfour{0&d&0&0}{\bar d&0&0&0}{0&0&0&\bar d}{0&0&d&0} ,
$$
where $d\in C^\infty(M)$ (see \refcite[\S4.1.1]{vdDvS13}). 

\begin{remark}
\label{rem:var_mass}
In order to interpret $d$ as a mass parameter, it would have to be given by a single real-valued parameter. For this reason we restrict ourselves to the case $d = -im$ (see \refcite[Remark 4.4]{vdDvS13}). We stress here that in general (as mentioned in \cref{remark:varying_mass}) the mass $m$ is not a fixed parameter, but a function on $M$ (although it can be chosen to be constant). In other words, our framework allows the mass of a particle to vary from point to point in $M$, so essentially the Yukuwa mass parameter is replaced by a Yukawa field. This could of course have significant physical implications, which we intend to study in future work. 
\end{remark}

The module $I^\infty\Sub{\ED} = (\B,\E,D_I,\gamma_I,J_I)$ defined in this way is in fact a massive even gauge module. To be precise, if we write $\E_1 := \Gamma^\infty(\bL) = \E_L = \E_R$ and $\E_2 := \A$, then we have $\B_1 = \End_\A(\E_1) = \Gamma^\infty(\bL\otimes\bL^*) \simeq \A$ and also $\B_2 \simeq \A$. Furthermore, the module $\E$ can be written as 
\begin{align*}
 \E &\simeq \bigoplus_{(i,j)\in K} \E_i\otimes_\A\bar{\E_j} , & K &:= \big\{ (1,2),(1,2),(2,1),(2,1) \big\} .
\end{align*}

The hermitian structure on $\bL$ determines a class of transition functions of $\bL$ taking values in $U(1)$, so using \cref{thm:reconstruction} we can uniquely reconstruct a principal $U(1)$-bundle $\bP$, and we have $I^\infty\Sub{\ED} \simeq \bP\times_{U(1)}F\Sub{\ED}$ as \emph{massless} modules (i.e.\ ignoring the mass matrices $D_F$ and $D_I$).

\begin{prop}
The gauge group is given by
$$
\G(I^\infty\Sub{\ED}) \simeq \mU(\B) / \mU(\B_J) \simeq \Gamma^\infty(\Ad\bP) \simeq C^\infty(M,U(1)) .
$$
\end{prop}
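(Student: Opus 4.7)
The plan is to establish the three displayed isomorphisms in the chain directly, exploiting the fact that the structure group $U(1)$ is abelian (so that many of the obstructions present in the general case become vacuous here).

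First, the identification $\G(I^\infty\Sub{\ED}) \simeq \mU(\B)/\mU(\B_J)$ is immediate from the definition of the gauge group of a real internal space given in \cref{eq:gaugegroup_IS}, combined with the observation from the preamble to the theorem that $\B_J$ is already identified with (a copy of) $\A$ embedded diagonally in $\B = \A \oplus \A$ via $a \mapsto a \oplus a$.

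Second, I would compute $\mU(\B)/\mU(\B_J)$ by hand. Since $\B = \A \oplus \A$, we have $\mU(\B) \simeq \mU(\A)\times\mU(\A) \simeq C^\infty(M,U(1)) \times C^\infty(M,U(1))$, and $\mU(\B_J)$ corresponds to the diagonal subgroup $\{(u,u) : u \in C^\infty(M,U(1))\}$. The homomorphism $(u_1,u_2) \mapsto u_1 u_2^{-1}$ is surjective onto $C^\infty(M,U(1))$ with kernel precisely the diagonal, giving the isomorphism $\mU(\B)/\mU(\B_J) \simeq C^\infty(M,U(1))$.

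Third, for the last identification $\Gamma^\infty(\Ad\bP) \simeq C^\infty(M,U(1))$, the key observation is that the adjoint action of the abelian group $U(1)$ on itself is trivial. Hence the associated bundle $\Ad\bP = \bP\times_{\Ad} U(1)$ is canonically isomorphic to the globally trivial group bundle $M \times U(1)$, whose smooth sections are precisely $C^\infty(M,U(1))$.

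It remains to observe that the two intermediate identifications are compatible, i.e.\ that the canonical map $u\mapsto uJ_IuJ_I^*$ indeed realises the isomorphism $\mU(\B)/\mU(\B_J) \simeq \Gamma^\infty(\Ad\bP)$. As in \cref{sssct:gaugegroup}, this map descends to an injective homomorphism $\phi_*\colon \mU(\B)/\mU(\B_J) \hookrightarrow \Gamma^\infty(\Ad\bP)$, and the only thing to check is surjectivity. Note that \cref{coro:gaugegroupsisomorphic} is not directly applicable since we make no simple-connectedness assumption on $M$; however, in the abelian case surjectivity can be verified directly, since for any section $s \in \Gamma^\infty(\Ad\bP) \simeq C^\infty(M,U(1))$ the element $u := (s,1) \in \mU(\B)$ satisfies $\phi_*(u) = s$ under the identifications of steps two and three. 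The principal potential obstacle is bookkeeping: one needs to track the action of $J_I$ on $\mU(\B)$ carefully enough to confirm that the composition $(u_1,u_2)\mapsto u_1J_I(u_1,u_2)J_I^* \mapsto u_1 u_2^{-1}$ is indeed a well-defined group homomorphism with the claimed image, but because everything in sight is abelian and each term can be identified with $C^\infty(M,U(1))$, the compatibility is straightforward.
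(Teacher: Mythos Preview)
Your proposal is correct and follows essentially the same route as the paper: both arguments observe that $\Ad\bP\simeq M\times U(1)$ because $U(1)$ is abelian, identify $\phi_*(u_1,u_2)$ with $u_1u_2^*$, and establish surjectivity by noting that $(v,1)\in\mU(\B)$ hits any given $v\in C^\infty(M,U(1))$. Your presentation is slightly more expansive in separating out the steps and explicitly remarking that simple-connectedness is not needed (the paper makes this observation in a remark following the proof), but the substance is the same.
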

\begin{proof}
Note that the group bundle $\Ad\bP \simeq M\times U(1)$ is globally trivial, because the structure group $U(1)$ is abelian. 

As in \cref{sssct:gaugegroup}, the main thing to prove is the surjectivity of the map $\phi_*\colon \mU(\B)\to \Gamma^\infty(\Ad\bP)$, which is given by $\phi_*(u) = uJuJ^*$. But for $u=(u_1,u_2)\in\mU(\B)$, this map is given by
$$
(u_1,u_2) \mapsto 
\mattwo{u_1u_2^*}{0}{0}{u_2u_1^*} ,
$$
so $\phi_*(u_1,u_2)$ can be identified with $u_1u_2^*$. Hence each $v\in \Gamma^\infty(\Ad\bP) \simeq C^\infty(M,U(1))$ is the image of $(v,1)\in\mU(\B)$. 
\end{proof}

\begin{remark}
Note that in this particular example it is not necessary to require that $M$ is simply connected, as we did in the general case (see \cref{coro:gaugegroupsisomorphic}). 
\end{remark}

 An element $\lambda\in \G(I^\infty\Sub{\ED})$ acts on $\E_L\oplus\E_R$ as multiplication by $\lambda$, and acts on $\overline{\E_L}\oplus \overline{\E_R}$ as multiplication by $\bar\lambda$. 

Pick a connection $\nabla^\bL$ on $\bL$, and let the connection $\nabla$ on $\E$ be given by
$$
\nabla := \nabla^\bL \oplus \nabla^\bL \oplus \bar{\nabla^\bL} \oplus \bar{\nabla^\bL} .
$$
On a local trivialisation (say on a neighbourhood $U$), the connection $\nabla^\bL$ is determined by a local connection form $\omega^\bL\Sub{U} \in \Omega^1(U,i\R)$, where $i\R$ is the Lie algebra of $U(1)$. 
For the connection $\nabla$ on $\E$ this yields the connection form
$$
\omega\Sub{U} = \omega^\bL\Sub{U} \oplus \omega^\bL\Sub{U} \oplus \bar{\omega^\bL\Sub{U}} \oplus \bar{\omega^\bL\Sub{U}} = \omega^\bL\Sub{U} \left( 1\oplus1\oplus(-1)\oplus(-1) \right) ,
$$
where the last equality follows because the action of $\bar{\omega^\bL\Sub{U}}$ is given by (right) multiplication with ${\omega^\bL\Sub{U}}^* = -\omega^\bL\Sub{U}$. 

Now consider the almost-commutative manifold $I^\infty\Sub{\ED}\times_\nabla M$ of KO-dimension $2$, which (by \cref{thm:main}) describes a $U(1)$-gauge theory over $M$. 
Taking inner fluctuations simply amounts to choosing a different connection $\nabla^\bL$ (see \cref{prop:innerfluctuations}), while there is no Higgs field (because $D_I$ commutes with $\B$). Hence we ignore these inner fluctuations, and simply consider the local gauge field $A_\mu := \omega^\bL\Sub{U}(\partial_\mu)$, on some coordinate basis $\partial_\mu$. 
Its curvature is defined as $\mF_{\mu\nu} := \partial_\mu A_\nu - \partial_\nu A_\mu$. From \cref{prop:spectralaction} (see also \refcite[Proposition 4.2]{vdDvS13}) we find that the spectral action for $I^\infty\Sub{\ED}\times_\nabla M$ is asymptotically given by the local formula
$$
S_b(I^\infty\Sub{\ED}\times_\nabla M) \sim_{\Lambda\to\infty} \int_M \La(g_{\mu\nu},A_\mu,m) \sqrt{|g|} d^4x + \mO(\Lambda^{-1}) ,
$$
for
\begin{align*}
\La(g_{\mu\nu}, A_\mu, m) := 4 \La_M(g_{\mu\nu}) + \La_A(g_{\mu\nu},A_\mu) + \La_H(g_{\mu\nu},m) .
\end{align*}
Here $\La_M(g_{\mu\nu})$ is the Lagrangian \cref{eq:canon_Lagr}, and $\La_H(g_{\mu\nu},m)$ yields additional terms depending on the mass $m$ and the scalar curvature $s$:
$$
\La_H(g_{\mu\nu},m) := -\frac{2 f_2 \Lambda^2 m^2}{\pi^2} + \frac{ f(0) m^4}{2\pi^2} + \frac{f(0) m^2 s }{12\pi^2}.
$$
The Lagrangian for the gauge field is given by
\begin{align*}
\La_A(g_{\mu\nu},A_\mu) := \frac{f(0)}{6\pi^2} \mF_{\mu\nu}\mF^{\mu\nu} .
\end{align*}

The interaction of the $U(1)$ gauge field with the fermions is described by the fermionic action, and is given by (see \refcite[Proposition 4.3 and Theorem 4.5]{vdDvS13})
$$
S_f(I^\infty\Sub{\ED}\times_\nabla M) = \int_M \La_f(g_{\mu\nu},A_\mu,m) \sqrt{|g|} d^4x ,
$$
for the Lagrangian
$$
\La_f(g_{\mu\nu},A_\mu,m) := -i \left( J_M\til\chi , \big(\gamma^\mu(\nabla^S_\mu-A_\mu) - m\big)\til\psi \right) ,
$$
where $\chi$ and $\psi$ are two Dirac spinors in $L^2(\bS)$. 
We summarise this as follows:
\begin{prop}
The total Lagrangian for $I^\infty\Sub{\ED}\times_\nabla M$ is given by a gravitational part 
$$
\La_{\rm grav}(g_{\mu\nu},m) := 4 \La_M(g_{\mu\nu}) + \La_H(g_{\mu\nu},m) ,
$$
and a part for electrodynamics
$$
\La\Sub{\ED}(g_{\mu\nu}, A_\mu, m) := -i \left( J_M\til\chi , \big(\gamma^\mu(\nabla^S_\mu-A_\mu) - m\big)\til\psi \right) + \frac{f(0)}{6\pi^2} \mF_{\mu\nu}\mF^{\mu\nu} .
$$
\end{prop}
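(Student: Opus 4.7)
The plan is to assemble the statement from the two building blocks already established: the general spectral action formula in \cref{prop:spectralaction} and the explicit fermionic action displayed immediately before the final proposition. The work that remains is to specialise those formulas to the concrete data of $I^\infty\Sub{\ED}\times_\nabla M$, namely the rank of $\bE$, the form of the fluctuated connection, and the form of the fluctuated `mass matrix', and then to regroup the resulting terms according to whether they involve the gauge or fermion fields.

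First, I would record that the vector bundle $\bE = \bE_L\oplus\bE_R\oplus\bar\bE_L\oplus\bar\bE_R$ has fibre rank $N=4$, so that the $N\La_M(g_{\mu\nu})$ contribution in \cref{prop:spectralaction} becomes $4\La_M(g_{\mu\nu})$. Next, I would verify that $D_I$ commutes with every $a\in\B$: the first summand of $\B = \A\oplus\A$ acts diagonally on $\E_L\oplus\E_R$ (and trivially on $\bar\E_L\oplus\bar\E_R$), and $D_I$ only mixes $\E_L\leftrightarrow\E_R$ (respectively $\bar\E_L\leftrightarrow\bar\E_R$) via the scalar $d=-im$, so $[D_I,a]=0$. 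Consequently the inner fluctuations produce no Higgs field, i.e.\ $\Phi=D_I$, which also means $D_\mu\Phi=\partial_\mu\Phi=-i(\partial_\mu m)(\cdots)$ contributes only mass-derivative terms that I would keep inside $\La_H(g_{\mu\nu},m)$ for a truly constant $m$ these vanish entirely. Substituting $\Phi=D_I$ into the general Higgs Lagrangian of \cref{prop:spectralaction} and using $\tr(D_I^2)=4m^2$, $\tr(D_I^4)=4m^4$ gives the stated expression
\begin{align*}
\La_H(g_{\mu\nu},m) = -\frac{2f_2\Lambda^2 m^2}{\pi^2} + \frac{f(0)m^4}{2\pi^2} + \frac{f(0)m^2 s}{12\pi^2}.
\end{align*}

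For the kinetic term of the gauge field, I would compute the connection one-form of $\nabla$ on each summand of $\bE$: the action of $\omega\Sub{U}$ reduces to $\omega^\bL\Sub{U}\,\text{diag}(1,1,-1,-1)$, so after inner fluctuations one is left with a single local $\lu(1)$-valued one-form $A_\mu$ on the $\bE_L\oplus\bE_R$ part (and its conjugate on the other part). The general formula for $\La_B$ in \cref{prop:spectralaction} then yields, after computing the fibre-wise trace over the four copies, $\frac{f(0)}{6\pi^2}\mF_{\mu\nu}\mF^{\mu\nu}$, which is precisely $\La_A$. Finally, I would simply cite the fermionic action formula derived from \refcite[Proposition 4.3 and Theorem 4.5]{vdDvS13}, which contributes the Dirac term $-i(J_M\til\chi,(\gamma^\mu(\nabla^\bS_\mu - A_\mu)-m)\til\psi)$.

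The conclusion then follows by regrouping: the purely geometric/mass-dependent contributions $4\La_M + \La_H$ form $\La_{\rm grav}$, while the gauge kinetic term together with the fermionic coupling form $\La\Sub{\ED}$. There is no real obstacle; the only minor point requiring attention is tracking the factor-of-$4$ from the rank and ensuring that the diagonal signs $(1,1,-1,-1)$ of the representation of $U(1)$ on $\bE$ are correctly combined when taking $\tr(F_{\mu\nu}F^{\mu\nu})$ so as to recover the stated $\frac{f(0)}{6\pi^2}$ normalisation rather than an accidentally doubled coefficient.
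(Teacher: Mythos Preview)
Your proposal is correct and mirrors the paper's approach exactly: the proposition in the paper is stated as a summary (``We summarise this as follows'') of the computations carried out in the preceding paragraphs, which specialise \cref{prop:spectralaction} using $N=4$, $\Phi=D_I$ (since $[D_I,\B]=0$), the explicit traces $\tr(D_I^2)=4m^2$ and $\tr(D_I^4)=4m^4$, and then quote the fermionic action from \refcite[Proposition 4.3 and Theorem 4.5]{vdDvS13}. Your additional care about the $\partial_\mu m$ contributions to $D_\mu\Phi$ is a valid observation that the paper's displayed $\La_H$ tacitly suppresses (consistent with treating $m$ as constant, cf.\ \cref{rem:var_mass}).
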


\section{Outlook}
\label{sec:outlook}

One of the main ideas in the development of noncommutative geometry has been the translation of geometric data into (operator-)algebraic data. In this light, it is somewhat unsatisfactory that our definition of principal modules relies entirely on the geometric notion of a principal bundle. Our discussion of gauge modules is an attempt to provide a purely algebraic approach, but as we have shown, these gauge modules only yield a proper subclass of principal modules. 
It is still an open question how arbitrary principal modules should be described algebraically, that is, what algebraic structure on a triplet $(\B,\E,J)$ would completely characterise the properties of a principal module. 
The decompositions $\E = \oplus_{i,j \in I} \E_{ij}$ and $\B = \oplus \B_i$ (as described in \cref{sssct:gaugegroup}) are not yet enough to ensure that $(\B,\E,J)$ is a principal module. On the other hand, the condition that $\E_{ij} = \E_i \otimes_\A \bar{\E}_j$ (modulo multiplicities) along with $\B_i = \End(\E_i)$, as for gauge modules, is in fact too strong. 

As mentioned in \cref{rem:bundle-atlas}, the principal bundle $\bP$ can only be reconstructed from the associated vector bundle $\bE = \bP\times_{G_F}\mH_F$ if we also know the corresponding equivalence class of $G_F$-atlases. It is not clear if there exists a geometric structure on $\bE$, for which this equivalence class corresponds precisely to those transition functions that preserve the geometric structure.  
If one has such a geometric structure on $\bE$, this might provide the possibility of finding an algebraic equivalent structure on the module $\E$. 
We intend to return to these questions in the future. 


In \cref{sec:examples} we described two basic examples, namely Yang-Mills theory and electrodynamics. It would of course be more interesting to also put the description of the noncommutative Standard Model \cite{CCM07} into our globally non-trivial framework. This should certainly be possible, though it would require some small modifications to accommodate real algebras (in this paper we have always assumed that our algebras are complex). In particular, for real algebras the resulting gauge group would not automatically be unimodular (see also \cref{remark:unimod}), and one would have to impose unimodularity by hand (as in \refcite[\S2.5]{CCM07}). 
More importantly, as we also mentioned in \cref{remark:varying_mass,rem:var_mass}, in our framework the mass parameters (i.e.\ the Yukawa couplings and the Majorana terms) of the theory are not restricted to be constant, but they are allowed to vary on spacetime. Such variation of the Majorana mass then naturally leads to a new scalar field $\sigma$, which was used in \refcite{CC12} to restore the consistency of the noncommutative Standard Model with the experimental value of the Higgs mass. In addition however, the variation of the Yukawa couplings will also have its effect on the physical theory. We hope to provide a more detailed study of these physical implications in a future work.

\section*{Acknowledgements}

The second author wishes to thank the hospitality of the Radboud University Nijmegen during two short visits in July 2012 and September 2013. The second author also thanks the hospitality of Leipzig University for a long-term visit from September 2013 to January 2014, when part of this work was done. Both authors are grateful to Adam Rennie, Walter van Suijlekom, Simon Brain, Eli Hawkins and Klaas Landsman for helpful suggestions and discussions. The first author acknowledges support from NWO via the GQT-cluster. The second author acknowledges support from both the Australian National University and the University of Wollongong.


\providecommand{\noopsort}[1]{}\providecommand{\vannoopsort}[1]{}
\providecommand{\bysame}{\leavevmode\hbox to3em{\hrulefill}\thinspace}
\providecommand{\MR}{\relax\ifhmode\unskip\space\fi MR }
\providecommand{\MRhref}[2]{%
  \href{http://www.ams.org/mathscinet-getitem?mr=#1}{#2}
}
\providecommand{\href}[2]{#2}

\end{document}